\documentclass{lmcs} 
\pdfoutput=1

\usepackage[utf8]{inputenc}

\usepackage{lastpage}
\lmcsdoi{20}{2}{1}
\lmcsheading{}{\pageref{LastPage}}{}{}%
{Mar.~07,~2023}{Apr.~04,~2024}{}

\keywords{logical relations, operational semantics, fibrations, generic effects, program distance}

\usepackage{cat-doc} 
\usepackage{bm}


\usepackage[hidelinks]{hyperref} 
  \usepackage[capitalize]{cleveref} 

  \crefname{figure}{Figure}{Figures}
  \crefname{thm}{Theorem}{Theorems}
  \crefname{lem}{Lemma}{Lemmas}
  \crefname{defi}{Definition}{Definitions}
  \crefname{prop}{Proposition}{Propositions}
  \crefname{cor}{Corollary}{Corollaries}
  \crefname{exa}{Example}{Examples}
  \crefname{rem}{Remark}{Remarks}
  \crefname{propC}{Proposition}{Propositions}

\begin{document}

\title[A Fibrational Tale of Operational Logical Relations]{A Fibrational Tale of Operational Logical\texorpdfstring{\\}{} Relations: Pure, Effectful and Differential} 


\author[F.~Dagnino]{Francesco Dagnino\lmcsorcid{0000-0003-3599-3535}}[a]	
\author[F.~Gavazzo]{Francesco Gavazzo\lmcsorcid{0000-0002-2159-0615}}[b]	

\address{DIBRIS, Università di Genova, Italy}	
\email{francesco.dagnino@dibris.unige.it}  

\address{Università di Padova, Italy}	
\email{francesco.gavazzo@unipd.it}  





\begin{abstract}
Logical relations built on top of an operational semantics are one of the most successful proof methods in programming language semantics. 
In recent years, more and more expressive notions of operationally-based logical relations have been designed and applied to specific families of languages. 
However, a unifying abstract framework for operationally-based logical relations is still missing. 
We show how fibrations can provide a uniform treatment of operational logical relations, using as reference example 
a $\lambda$-calculus with generic effects endowed with a novel, abstract operational semantics defined on a large class of categories.
Moreover, this abstract perspective allows us to give a solid mathematical ground also to differential logical relations --- a recently 
introduced notion of higher-order distance between programs --- both pure and effectful,
bringing them back to a common picture with traditional ones. 
\end{abstract}

\maketitle

\section{Introduction}
\label{sect:introduction} 

Logical relations \cite{Reynolds/Logical-relations/1983} are one of the most successful proof techniques in logic and programming language semantics. 
Introduced in proof theory \cite{DBLP:journals/jsyml/Tait67,girard1989proofs} in their unary form, logical relations quickly became a main tool 
in programming language semantics. In fact, starting with the seminal work by Reynolds \cite{Reynolds/Logical-relations/1983},
Plotkin \cite{Plotkin-Lambda-definability-logical-relations}, and Statman \cite{DBLP:journals/iandc/Statman85},
logical relations have been extensively used to study both the denotational
and operational behaviour of programs.\footnote{See the classic textbooks by Mitchell \cite{mitchell}, Pierce \cite{pierce/types/and/programming/languages}, and Harper \cite{harper} (and references 
therein) for an introduction to both denotationally- and operationally-based logical relations.}

Logical relations (and predicates) mostly come in two flavours, depending on whether they are defined relying on 
the \emph{operational} or \emph{denotational} semantics of a language. We refer to logical relations of the first kind as 
\emph{operational logical relations} and to logical relations of the second kind as 
\emph{denotational logical relations}. 
Due to their link with denotational semantics, denotational logical relations have 
been extensively studied in the last decades, both for specific programming languages and in the abstract, 
this way leading to beautiful general theories of 
(logical) predicates and relations on program denotations. 
In particular, starting with the work by Reynolds and Ma \cite{DBLP:conf/mfps/MaR91}, Mitchell and Scedrov \cite{scedrov}, 
and Hermida \cite{Hermida93}, researchers have started to investigate notions of (logical) predicates and relations 
in a general categorical setting, this way giving rise to an abstract understanding of relations over (the denotational 
semantics of) programs centered around the notions of \emph{fibrations} 
\cite{Hermida93,johann1, johann2,DBLP:journals/lmcs/KatsumataSU18,DBLP:journals/iandc/Katsumata13,DBLP:conf/popl/Katsumata14,DBLP:conf/csl/Katsumata05}
\emph{reflexive graphs} \cite{OHearnT95,RobinsonR94,reddy,HermidaRR14}, and 
\emph{factorisation systems} \cite{DBLP:journals/entcs/KammarM18,DBLP:journals/entcs/HughesJ02,GoubaultLasotaNowak/MSCS/2008}. 
The byproduct of all of that is a general, 
highly modular
theory of denotational logical relations that has been successfully applied to a large array of language 
features, ranging from parametricity and polymorphism \cite{johann1, johann2, reddy} to computational effects 
\cite{DBLP:journals/lmcs/KatsumataSU18,DBLP:journals/iandc/Katsumata13,DBLP:conf/popl/Katsumata14,DBLP:conf/csl/Katsumata05, 10.1007/978-3-319-71237-6_17,DBLP:journals/entcs/KammarM18,GoubaultLasotaNowak/MSCS/2008}.

On the operational side, researchers have focused more on the development and applications of 
expressive notions of logical relations for specific (families of) languages, rather than on their underlying theory. In fact, 
operational logical relations being based on operational semantics, they can be easily defined for languages  
for which finding the right denotational model is difficult, this way making operational logical relations 
a handy and lightweight technique, especially when compared to its denotational counterpart. As a paradigmatic example, consider 
the case of stochastic $\lambda$-calculi and their operational techniques 
\cite{DBLP:conf/icfp/BorgstromLGS16,dal-lago/gavazzo-mfps-2019,DBLP:journals/pacmpl/WandCGC18} which can be (easily) defined 
relying on the category of measurable spaces and measurable functions, but whose denotational semantics have
required the introduction of highly nontrivial mathematical structures \cite{StatonYWHK16,DBLP:conf/esop/Staton17,DBLP:journals/pacmpl/VakarKS19,EhrhardPT18}, 
since the category of measurable spaces (and measurable functions) is not closed \cite{aumann}. 

The wide applicability of operational logical relations, however, has also prevented the latter 
to organise as a uniform \emph{corpus} of techniques with a common underlying theory. Operational logical relations 
result in a mosaic of powerful techniques applied to a variety of languages --- 
including higher-order, functional, imperative, and concurrent languages \cite{DBLP:journals/corr/abs-1103-0510, DBLP:conf/esop/Ahmed06, DBLP:conf/popl/TuronTABD13,AppelMcAllester/TOPLAS/2001,DBLP:journals/entcs/CraryH07,DBLP:journals/jfp/DreyerNB12};
both pure and (co)effectful \cite{JohannSimpsonVoigtlander/LICS/2010,DalLagoG22,DBLP:journals/pacmpl/AbelB20,DBLP:journals/iandc/BirkedalJST16,DBLP:conf/birthday/Hofmann15,DBLP:conf/popl/Benton0N14,DBLP:journals/pacmpl/BiernackiPPS18,BizjakBirkedal/FOSSACS/2015} --- 
whose relationship, however, is unclear.
This may not be a prime concern of the semanticist interested in somehow conventional program 
behaviours, such as termination, extensional equivalence, cost, etc.\footnote{Although the absence of a general 
operational foundation makes logical relations heavily syntax-dependent, this way preventing 
the development of a modular and compositional theory of operational logical relations.} 
However, it becomes so when one faces logical relations for program behaviours that are not yet fully understood, 
such as those akin to metric~\cite{Pierce/DistanceMakesTypesGrowStronger/2010,DBLP:conf/lics/Pistone21} and differential properties of programs
\cite{DBLP:conf/icalp/LagoGY19,dlrICTCS,dlrTCS,DalLagoG22}.
In those cases, in fact, logical relations lose their relational core and are modelled through
more articulated structures that may go significantly far from classic relations and generalisations thereof. 

A paradigmatic case is the one of the so-called \emph{differential logical relations}~\cite{DBLP:conf/icalp/LagoGY19,dlrICTCS,dlrTCS,DalLagoG22}, 
a recently introduced foundational framework for differential (higher-order) program semantics. Differential logical relations 
relate pairs of programs with so-called \emph{higher-order distances}, namely notions of distances that can interact with the related programs 
in a parameter-passing fashion. For instance, differential logical relations measure the distance between functional programs $P$, $Q$ not as a number, as 
in ordinary metric reasoning, but as a (possibly higher-order) \emph{function} $\delta$; and the `logical' behaviour 
of $\delta$ prescribes that on any pair of inputs $V, W$ with 
(higher-order) distance $\varepsilon$, the distance between $P V$ and $Q W$ is given by $\delta(V,\varepsilon)$. 
That is, the distance $\delta$ is itself a function that, given two inputs and a distance between them, 
produce an input-dependent distance between outputs. 

Differential logical relations indeed share some commonalities with traditional logical relations, but it is not mathematically clear 
what is the exact relationship between those two worlds: for instance, are differential logical relations indeed logical relations? (And, if so, 
in which sense?) Is the so-called fundamental lemma of differential logical relations a true instance of the classic fundamental lemma of logical relations? And, 
more generally, how much of the theory of ordinary logical relations can be exported to the realm of differential logical relations, and vice versa? 
All those questions become even harder to answer if one considers \emph{effectful} extensions of differential logical
relations~\cite{DalLagoG22}. Such extensions, in fact, mix traditional monadic reasoning \emph{\`a la} Moggi~\cite{Moggi/LICS/89} 
together with the novel notion of a \emph{differential extension}.  The mathematical status of such a notion is currently unknown, in the sense 
that it is not known whether a differential extension could be seen as a relational 
extension~\cite{Barr/LMM/1970,Kurz/Tutorial-relation-lifting/2016,Thijs/PhDThesis/1996,Carboni/Bicategories-and-spans/1984} 
generalised to a differential setting, nor whether the `canonical' differential extension used to concretely define 
effectful logical relations is indeed canonical; not to mention the status of the fundamental lemma of 
effectful differential logical relations.

Answering those questions is not only a crucial step towards the development of a solid theory of differential semantics throughout 
(effectful) differential logical relations, but it is also key to improving the applicability of the latter. In fact, when reasoning about differential 
behaviours, such as input-dependent cost analysis or approximate correctness, it appears useful to mix differential and non-differential 
reasoning~\cite{DalLagoG22}, so that it is desirable to view non-differential reasoning as a collapsed form of differential one. 
Concretely, that means recovering traditional logical relations as differential ones with rather trivial (higher-order) distances.
However, the current theory of differential logical relations does not allow one to do that, unless one shows 
from scratch how to encode a \emph{specific} family of logical relations
as differential logical relations. 


In this paper, we find a remedy for all of that and show that (effectful) differential logical relations are 
a form of operational logical relations in a very precise mathematical sense. We do more, actually. 
In fact, to prove the aforementioned result, we first need a general definition of what an operational logical relation is,
something that, as already remarked, seems unavailable in the literature.
We fix that by taking a \emph{fibrational foundation} of operational logical relations, 
hence showing that much in the same way denotational logical relations can be uniformly understood in terms 
of fibrations, it is possible to give a uniform account of \emph{operational logical relations} relying on the 
language of \emph{fibrations}. 
Armed with such a fibrational understanding, we then show how differential logical 
relations --- both \emph{pure} and \emph{effectful} ---  can be seen as a suitable instance of our fibrational framework. 
Remarkably, the level of abstraction offered by fibrations allows us to come up with a general fundamental 
lemma of logical relations that covers both traditional and differential instances of the homonymous result. 

\subsection{Contribution}
The main contribution of this paper is twofold, covering both the general theory of 
operational logical relations and the specific theory of pure and effectful differential 
logical relations. 

\subsubsection{Operational Logical Relations, Fibrationally}
Our first contribution is the development of a general, abstract notion of an operational logical relation 
in terms of fibrations for a $\lambda$-calculus with generic effects. 
Fibrations are a mainstream formalism for general, categorical notions of predicates/relations. 
More precisely, a fibration is a suitable functor from a category of (abstract) predicates --- the domain of the fibration ---
to a category of arguments --- usually called the base category. 
In denotational logical relations, predicates usually apply to program denotations rather than to programs themselves, 
with the main consequence that the base category is usually required to be cartesian closed. 
In this paper, we follow a different path and work with base categories describing the operational (and interactive) 
behaviour of programs, rather than their denotations. 
To do so, we introduce the novel notion of an \emph{operational structure}, 
the latter being a cartesian category with arrows describing (monadic) evaluation semantics \cite{PlotkinPower/FOSSACS/01,DalLagoGavazzoLevy/LICS/2017} and satisfying suitable coherence conditions encoding 
the base dynamics of program evaluation. This way, we give not only an abstract account of traditional set-based evaluation 
semantics, but also of evaluation semantics going beyond the category of sets and functions, the prime example 
being the evaluation semantics of stochastic $\lambda$-calculi, which is defined as a stochastic kernel \cite{dal-lago/gavazzo-mfps-2019,DBLP:journals/pacmpl/WandCGC18}.

On top of our abstract operational semantics, we give a general notion of an operational logical relation in terms of (logical) 
fibrations and prove a general result (which we call the \emph{fundamental lemma} of logical relations, following the standard 
nomenclature of concrete, operational logical relations) stating that programs behave as arrows in the domain of the fibration. 
Remarkably, our general fundamental lemma subsumes several concrete instances of the fundamental lemma of logical relations 
appearing in the literature. Additionally, the operational nature of our framework immediately 
results in a wide applicability of our results, especially if compared with fibrational accounts of denotational 
logical relations. In particular, since our logical relations build upon operational structures, 
they can be instantiated to non-cartesian-closed categories, this way reflecting at a general level the wider applicability of 
operational techniques with respect to denotational ones. As a prime example of that, 
we obtain operational logical relations (and their fundamental lemma) for stochastic $\lambda$-calculi for free, 
something that is simply not achievable denotationally, due to the failure of cartesian closedness of the category 
of measurable spaces.

\subsubsection{Fibrational Differential Reasoning}
Our second, main contribution is to show that our framework goes beyond traditional relational reasoning, 
as it gives a novel mathematical account of 
the recently introduced differential logical relations \cite{DBLP:conf/icalp/LagoGY19,dlrICTCS,dlrTCS,DalLagoG22} (DLRs,  for short), 
both pure and effectful. DLRs are a new form of logical 
relations introduced to define higher-order distances between programs, such distances being abstract notions 
reflecting the interactive complexity of the programs compared.  
DLRs have been studied \emph{operationally} and on specific calculi only, oftentimes introducing 
new notions --- such as the one of a \emph{differential extension} of a monad \cite{DalLagoG22} --- whose mathematical status is still 
not well understood. The main consequence of that is that
a general, structural account of pure and effectful DLRs is still missing. 
In this paper, we show how DLRs are a specific instance of our abstract operational logical relations 
and how the fundamental lemma of DLRs is an instance of our general fundamental lemma. 
We do so by introducing the novel construction of a \emph{fibration of differential relations} and showing how the latter precisely captures 
the essence of DLRs, bringing them back to a common framework with traditional logical relations. 
Additionally, we show how
our fibrational account sheds new light on the mathematical status of effectful DLRs. In particular, 
we show that differential 
extensions of monads are precisely liftings of monads 
to the fibration of differential relations, and that the so-called coupling-based 
differential extension \cite{DalLagoG22} --- whose canonicity has been left as an open problem ---
is an instance of a general monadic lifting to the fibration of differential relations: remarkably, such a lifting
is the extension of the well-known Barr lifting \cite{Barr/LMM/1970} to a differential setting.

\subsubsection{Source of the material}
This is an extended and revised version of a paper that appears in the proceedings of FSCD 2022 \cite{DagninoG22}. 
The present version provides more background on Differential Logical Relations to better frame the work in the literature, 
includes all proofs of our results and reports some new examples.

\subsection{Related Work}
Starting with the seminal work by Hermida \cite{Hermida93}, fibrations have been used 
to give categorical notions of (logical) predicates and relations, and to model denotational logical relations \cite{johann1, johann2,DBLP:journals/lmcs/KatsumataSU18,DBLP:journals/iandc/Katsumata13,DBLP:conf/popl/Katsumata14,DBLP:conf/csl/Katsumata05}, as they provide a formal way to relate 
the denotational semantics of a programming language and a logic for reasoning about it. 
Other categorical approaches to denotational logical relations have been given in terms of 
reflexive graphs \cite{OHearnT95,RobinsonR94,reddy,HermidaRR14} and 
factorisation systems \cite{DBLP:journals/entcs/KammarM18,DBLP:journals/entcs/HughesJ02,GoubaultLasotaNowak/MSCS/2008}. 

On the operational side, fibrations have been used to give abstract accounts to induction and coinduction \cite{DBLP:journals/iandc/HermidaJ98}, 
both in the setting of initial algebra-final coalgebra semantics \cite{johann3,johann4,johann5} and in the setting of up-to techniques \cite{DBLP:conf/csl/BonchiPPR14,DBLP:conf/concur/Bonchi0P18}. To the best of the authors' knowledge, however, none of these approaches has been 
applied to operational reasoning for higher-order programming languages. 
Concerning the latter, general operational accounts of logical relations both for effectful \cite{JohannSimpsonVoigtlander/LICS/2010} and 
combined effectful and coeffectful languages \cite{DBLP:journals/pacmpl/AbelB20,DBLP:journals/pacmpl/LagoG22a} have been given 
in terms of relational reasoning. These approaches, however, are tailored to specific operational semantics and notions of relations, 
and thus they cannot be considered truly general. 
Finally, DLRs have been studied mostly operationally \cite{DBLP:conf/icalp/LagoGY19,dlrICTCS,dlrTCS,DalLagoG22}, 
although some general \emph{denotational} accounts of DLRs have been proposed \cite{DBLP:conf/icalp/LagoGY19,DBLP:conf/lics/Pistone21}. 
Even if not dealing with operational aspects of DLRs, the latter proposals can cope with \emph{pure} DLRs only, and are  
too restrictive to incorporate computational effects.

\section{The Anatomy of an Operational Semantics}
\label{sect:operational} 

To define a general notion of an operational logical relation, we first need to 
define a general notion of an operational semantics. This is precisely the purpose of 
this section.
In particular, we introduce the notion of an \emph{operational structure} on a category 
with finite products endowed with a strong monad 
as an axiomatisation of a general evaluation semantics. 
Operational structures prescribe the existence of basic interaction arrows (the latter describing basic 
program interactions as given by the usual reduction rules) 
and define program execution as a Kleisli arrow (this way giving monadic evaluation) 
satisfying suitable coherence 
laws reflecting evaluation dynamics. 
Remarkably, operational structures turn out to be more liberal --- hence widely applicable --- 
than categories used in denotational semantics 
(the latter being required to be cartesian closed). 


\subsection{A Calculus with Generic Effects} 

Our target calculus is a simply-typed fine-grained call-by-value \cite{Levy/InfComp/2003}
$\lambda$-calculus, denoted by $\lambdacalc$,
enriched with generic effects \cite{PlotkinPower/FOSSACS/01,Plotkin/algebraic-operations-and-generic-effects/2003}. 
The syntax of $\lambdacalc$ is given in Figure~\ref{figure:typing-lambda}: to explain that, we need to 
recall some preliminary notions. 

We assume familiarity with monads and 
algebraic effects~\cite{Moggi/LICS/89,PlotkinPower/FOSSACS/01}. 
In particular, we use the notation $\mnd = (\mndfun, \mndun, \mndmul, \str)$ for strong monads on 
cartesian categories, 
recalling that the latter notion is equivalent to the one of a strong Kleisli triple
$\mnd = (\mndfun, \mndun, \bindsymbol)$, where 
for an arrow $f: X \times Y \to \mndfun Z$ in a cartesian category $\CC$, we denote by 
$\bindsymbol f: X \times \mndfun Y \to \mndfun Z$ the strong Kleisli extension of $f$. 
We tacitly use strong monads  
and strong Kleisli triples interchangeably, and
write $\kleisli{f}: \mndfun X \to \mndfun Y$ for the Kleisli 
extension of $f: X \to \mndfun Y$.\footnote{Formally, $\kleisli{f}$ is defined as $\bindsymbol f'$, where 
$f': 1 \times X \to \mndfun Y$ is obtained from $f$ via the isomorphism 
$1 \times X \cong X$.}
Given a strong monad
$\mnd = (\mndfun, \mndun, \bindsymbol)$ on a cartesian category $\CC$, a generic effect \cite{Plotkin/algebraic-operations-and-generic-effects/2003,PlotkinPower/FOSSACS/01} of arity $A$, with $A$ an object of $\CC$, 
is an arrow $\gopsem: 1 \to \mndfun A$. Standard examples of generic effects are obtained by taking 
$\CC = \Set$ and $A$ equal to a finite set giving 
the arity of the effect. For instance, one models nondeterministic (resp. fair) coins as elements of $\mathcal{P}(2)$ 
(resp. $\mathcal{D}(2)$), 
where $\mathcal{P}$ (resp. $\mathcal{D}$) is the powerset (resp. distribution) monad. 
Other examples of generic effects include primitives for input-output, memory updates, exceptions, etc.
Here, we assume we have a collection of generic effect symbols $\gop$, each with an associated type $\typeone_{\gop}$. 
We leave the interpretation of $\gop$ as an actual generic effect $\gopsem$ to the operational semantics.

Armed with these notions, we give the syntax and static semantics of $\lambdacalc$ 
in Figure~\ref{figure:typing-lambda}, where $\basetype$ ranges over base types 
and $\cbase$ over constants of type $\basetype$ (for ease of exposition, we do not include operations on base types, although  
those can be easily added). 

\begin{figure*}[htbp]
	\hrule
	\begin{align*}
    \typeone, \typetwo &::= \zeta \mid \sigma_\gamma \mid \unittype \mid \sigma \times \tau \mid \sigma \to \tau
    \\
    \valone, \valtwo &::= \varone \mid \cbase \mid \unitval \mid \abs{\varone}{\termone} \mid \pair{\valone}{\valtwo}
    \\
    \termone, \termtwo &::= \return{\valone} \mid \valone \valtwo \mid \fst{\valone} \mid 
    \snd{\valone} \mid \seq{\termone}{\termtwo} \mid \gop
\end{align*}

	\begin{center}
	\[
	\infer{\envone \valimp \varone: 
	\typeone}{\varone: \typeone \in \Gamma}
	\quad 
	\infer{\envone \valimp \cbase: \basetype}{}
	\quad 
	\infer{\envone \compimp \return{\valone}: \typeone}{\envone \valimp \valone: 
	\typeone}
	\quad 
    \infer{\envone \compimp \gop: \typeone_{\gop}}
	{\phantom{\varone}}
	\quad	\hspace{-0.1cm}
	\infer{\envone \valimp \abs{\varone}{\termone}: \typeone\to \typetwo}
	{\envone, \varone:\typeone \compimp \termone: \typetwo}
	\quad \hspace{-0.1cm}
	\infer{\envone \compimp \valone\valtwo: \typetwo}
	{\envone \valimp \valone : \typeone \to \typetwo
		&
		\envone \valimp \valtwo: \typeone}
	\]
	$\vspace{-0.2cm}$
	\[
	\infer{\envone \compimp \seq{\termone}{\termtwo}: \typetwo}
		{\envone \compimp \termone : \typeone 
			&
			\envone, \varone: \typeone \compimp \termtwo: \typetwo}
	\quad
	\infer{\envone \compimp \fst{\valone}: \typeone}
	{\envone \valimp \valone: \typeone \times \typetwo}
	\quad
		\infer{\envone \compimp \snd{\valone}: \typetwo}
	{\envone \valimp \valone: \typeone \times \typetwo}
	\quad 
	\infer{\envone \valimp \pair{\valone}{\valtwo}: \typeone \times \typetwo}
	{\envone \valimp \valone: \typeone
	&
	\envone \valimp \valtwo: \typetwo}
	\quad 
	\infer{\envone \valimp \unitval: \unittype}{}
	\]
	$\vspace{-0.2cm}$
	\end{center}
	\hrule
	\caption{Syntax and static semantics of $\lambdacalc$}
	\label{figure:typing-lambda}
\end{figure*}

The notation used for the static semantics is mostly standard: a typing judgment
is an expression of the form $\Gamma \imp t: \sigma$ or $\Gamma \imp v: \sigma$, where 
$\envone$ is an environment, i.e. 
a set of pairs $x: \sigma$ of variables and types such that if both $x:\sigma$ and $x:\tau$ belongs to $\Gamma$, 
the $\sigma = \tau$, and $v$ and $t$ are a value and a computation, respectively.
Indeed, expressions of $\lambdacalc$ are divided into two (disjoint) classes: \emph{values} 
(notation $\valone, \valtwo, \hdots$) and \emph{computations} (notation $\termone, \termtwo, \hdots$), 
the former being the result of a
computation, and the latter being an expression
that once evaluated may produce a value (the evaluation process might
not terminate) as well as side effects. 
When the distinction between values and computations is 
not relevant, we generically refer to \emph{terms} 
(and still denote them as $\termone, \termtwo, \hdots$).
We adopt standard syntactic conventions \cite{Barendregt/Book/1984} and 
identify
terms up to renaming of bound variables: we say that a term is closed
if it has no free variables and write $\values_{\typeone}$, $\terms_{\typeone}$ 
for the sets of closed values and computations of type $\typeone$, respectively.
We write $\subst{\termone}{\valone_1, \hdots, \valone_n}{\varone_1, \hdots, \varone_n}$ 
(and similarly for values)
for the capture-avoiding (simultaneous) substitution of the values $\valone_1, \hdots, \valone_n$
for all free occurrences of $\varone_1, \hdots, \varone_n$ in $\termone$. Oftentimes, we will use the 
notation $\vect{\phi}$ for a sequence $\phi_1, \hdots, \phi_n$ of symbols $\phi_i$, for 
$i \in \{1, \hdots, n\}$.

When dealing with denotational logical relations, one often organises $\lambdacalc$ 
as a syntactic category having types as objects and (open) terms modulo the usual $\beta\eta$-equations as arrows. 
Operationally, however, terms are purely syntactic objects and cannot be taken modulo $\beta\eta$-equality. 
For that reason, we consider a \emph{syntactic graph} rather than 
a syntactic category.\footnote{Recall that a graph is defined by removing from the definition of a category the axioms prescribing 
the existence of identity and composition. A diagram is a map from graphs to graphs (a diagram being defined by removing 
from the definition of a functor the clauses on identity and composition). Since any category is a graph, 
we use the word \emph{diagram} also to denote maps from graphs to categories.}

\begin{defi}
The objects of the syntactic graph $\syn$ are environments $\envone$, types $\typeone$, and expressions 
$\comptype{\typeone}$, for  
$\typeone$ a type. Arrows are defined thus: $\hom(\envone, \typeone)$ 
consists of values $\envone \valimp \valone: \typeone$, whereas
$\hom(\envone, \comptype{\typeone})$ consists of terms 
$\envone \compimp \termone: \typeone$; otherwise, the hom-set is empty.
\end{defi}

The definition of $\syn$ reflects the call-by-value nature of $\lambdacalc$: to each type $\typeone$ 
we associate two objects, representing the type $\typeone$ on values and on computations. Moreover, 
there is no arrow having environments as codomains nor having objects $\comptype{\typeone}$
as domain: this reflects 
that in call-by-value calculi variables are placeholders for values, not for computations. 


\subsection{Operational Semantics: The Theoretical Minimum}
Having defined the syntax of $\lambdacalc$, we move to its operational semantics. 
Among the many styles of operational semantics (small-step, big-step, etc.), evaluation semantics 
turns out to be a convenient choice for our goals. 
Evaluation semantics are usually defined as monadic functions $\ev: \terms_{\typeone} \to \mndfun(\values_\typeone)$, 
with $\mndfun$ a monad encoding the possible effects produced during program evaluation 
(e.g., divergence or nondeterminism) \cite{DalLagoGavazzoLevy/LICS/2017,PlotkinPower/FOSSACS/01,JohannSimpsonVoigtlander/LICS/2010}. 
To clarify the concept, let us consider an example from \cite{DalLagoGavazzoLevy/LICS/2017}.

\begin{exa}
\label{ex:algebraic-effects}
Let $\mnd = (T, \eta, \mu)$ be a monad on \Set with a generic effects $\gopsem \in \mndfun(\values_{\typeone_{\gop}})$ for 
any effect symbol $\gop$ in $\lambdacalc$. The (monadic) evaluation (family of) map(s) 
$\eval{-}: \terms_{\typeone} \to \mndfun(\values_{\typeone})$ is defined as follows (notice that $\lambdacalc$ being simply-typed, $\sem{-}$ is well-defined\footnote{
It is worth remarking that termination is not an issue for 
our general notion of an evaluation semantics (Definition~\ref{def:evaluation-structure} below). 
In fact, the results presented in this paper simply require having an evaluation map 
satisfying suitable coherence conditions. Consequently, we could rephrase 
this example as to ignore termination by requiring the monad to be enriched in a $\omega$-complete 
partial order \cite{AbramskyJung/DomainTheory/1994} and defining evaluation semantics as the 
least fixed point of a suitable map, as it is customary in monadic evaluation semantics 
\cite{DalLagoGavazzoLevy/LICS/2017}.}).
$$
\eval{\return{\valone}} = \mndun(\valone)
    \quad
    \eval{(\abs{\varone}{\termone})\valone} = \eval{\subst{\termone}{\valone}{\varone}}
     \quad
    \eval{\seq{\termone}{\termtwo}} = \kleisli{\eval{\subst{\termtwo}{\cdot}{\varone}}} \eval{\termone}
     \quad
    \eval{\ith{\pair{\valone_1}{\valone_2}}} = \mndun(\valone_i)
     \quad \hspace{-0.1cm}
    \eval{\gop} = \gopsem
$$    
Notice that according to the fine-grained methodology, evaluation is indeed defined only on computations 
(i.e. not on values).
\end{exa}

\cref{ex:algebraic-effects} defines evaluation semantics as an arrow in the category $\Set$ of 
sets and functions relying on two main ingredients: the monad $\mnd$ and its algebraic operations; and
primitive functions implementing the basic mechanism of $\beta$-reductions, viz. application/substitution and projections. 
The very same recipe has been used to define specific evaluation semantics beyond $\Set$, 
a prime example being kernel-like evaluation semantics for stochastic $\lambda$-calculi \cite{dal-lago/gavazzo-mfps-2019,DBLP:journals/pacmpl/WandCGC18,DBLP:journals/pacmpl/VakarKS19,EhrhardPT18,DBLP:journals/pacmpl/BartheCLG22}, 
where evaluation semantics are defined as Kleisli arrows on suitable categories of measurable spaces 
(see \cref{ex:measurable-semantics} below for details).
Here, we propose a general notion of an operational semantics for $\lambdacalc$ in an 
arbitrary cartesian category $\B$ and with respect to a monad $\mnd$. 
We call the resulting notion a \emph{($\syn$-)operational structure}. 

\begin{defi}
\label{def:evaluation-structure}
    Given a strong monad $\mnd$ 
    on a cartesian category $\B$, a \emph{($\syn$-)operational structure} consists of a diagram $\shallow: \syn \to \B$ 
    satisfying $\shallow(\vect{\varone: \typeone})  = \prod \vect{\shallow \typeone}$,
    together with the following (interaction) arrows (notice that $\gopsem$ ranges over generic effects) 
    and satisfying the coherence laws in \cref{figure:coherence-conditions-operational-structure}.
    \begin{align*}
    \ev &: \shallow\comptype{\typeone} \to \mndfun(\shallow \typeone)
    &
    \unitarrow &: 1 \to \shallow \unittype
    &
    \app &: \shallow(\typeone \to \typetwo) \times \shallow \typeone \to \shallow \comptype{\typetwo}
    & 
    \constarrow &: 1 \to \shallow \basetype 
    \\
    \projl &: \shallow(\typeone \times \typetwo) \to \shallow \typeone
    &
    \projr &: \shallow(\typeone \times \typetwo) \to \shallow \typetwo
    &
    \gopsem &: 1 \to \mndfun(\shallow \typeone_{\gop})
    \end{align*}
\end{defi}

\begin{figure*}[t]
$$
     \xymatrix{
    \shallow \envone \times \shallow \typeone 
    \ar[r]^{\shallow \termone} 
    \ar[d]_{\shallow(\lambda \varone.\termone) \times \id_{\shallow{\typeone}}}
    & \shallow \comptype{\typetwo}
    \\
    \shallow(\typeone \to \typetwo) \times \shallow \typeone 
    \ar[ru]_{\app}
    }
    \hspace{-0.4cm}
    \xymatrixcolsep{3pc}
    \xymatrix{
    \shallow \envone 
    \ar[d]_-{\shallow \pair{\valone_1}{\valone_2}}
    \ar[r]^{\shallow \valone_i}
    & \shallow\typeone_i 
    \\
     \shallow(\typeone_1 \times \typeone_2)
     \ar[ru]_{\bm{i}}
    }\
    \xymatrix{
    \shallow \envone 
    \ar[r]^{\shallow \unitval} 
    \ar[d]_{!}
    & \shallow \unittype
    \\
    1 
    \ar[ru]_{\unitarrow}
    }\
     \xymatrix{
     \shallow \envone 
     \ar[r]^{\shallow \cbase} 
     \ar[d]_{!}
     & \shallow \basetype
     \\
     1 
     \ar[ru]_{\constarrow}
    }
$$



\begin{tabular}{ c c c}
{
    \(
    \xymatrixcolsep{3pc}
    \xymatrix{
    \shallow \envone 
    \ar[r]^{\shallow(\return{\valone})}
    \ar[d]_{\shallow(\valone)}
    & \shallow \comptype{\typeone} 
    \ar[d]^{\ev}
    \\
    \shallow\typeone 
    \ar[r]_{\mndun} 
    & \mndfun(\shallow \typeone)
    }
    \)
}
& 
{
\xymatrix{
    \shallow \envone \ar[d]_{!} \ar[r]^-{\shallow(\gop)} & \shallow(\comptype{\typeone}) \ar[d]^{\ev} 
    \\
    1 \ar[r]_-{\gopsem} & \mndfun(\shallow \typeone)}
}
&
{
    \(
    \xymatrixcolsep{2pc}
    \xymatrix{
    \shallow \envone 
    \ar[rr]^-{\shallow(\valone\valtwo)}
    \ar[d]_{\lan\shallow(\valone), \shallow(\valtwo)\ran}
    &
    & \shallow \comptype{\typetwo} \ar[d]^{\ev}
    \\
    \shallow(\typeone \to \typetwo) \times \shallow\typeone 
    \ar[r]_-{\app} 
    &  \shallow \comptype{\typetwo} 
    \ar[r]_-{\ev} 
    & \mndfun(\shallow \typetwo)
    }
    \)
}
\end{tabular}
\begin{tabular}{ c c}
{
    \(
    \xymatrixcolsep{2.9pc}
    \xymatrix{
    \shallow \envone 
    \ar[rr]^-{\shallow(\seq{\termone}{\termtwo)}}
    \ar[d]_{\lan\id, \shallow(\termone)\ran}
    &
    &\shallow \comptype{\typetwo} \ar[d]^{\ev}
    \\
    \shallow \envone \times \shallow \comptype{\typeone} 
    \ar[r]_-{\id \times \ev} 
    &  \shallow\envone \times \mndfun(\shallow \typeone) 
    \ar[r]_-{{\footnotesize\bindsymbol}(\ev \circ \shallow(\termtwo))} 
    & \mndfun(\shallow \typetwo)
    }
    \)
}
&
{
    \(
    \xymatrix{
    \shallow \envone 
    \ar[rr]^-{\shallow(\ith{\valone})}
    \ar[d]_{\shallow(\valone_i)}
    &
    &\shallow \comptype{\typeone_i} \ar[d]^{\ev}
    \\
    \shallow(\typeone_1 \times \typeone_2)
    \ar[r]_-{\bm{i}} 
    &  \shallow \typeone_i 
    \ar[r]_-{\mndun} 
    & \mndfun(\shallow \typeone_i)
    }
    \)
}
\end{tabular}
	\caption{Coherence laws, where $\ith{-} \in \{\fst{-}, \snd{-}\}$ and $\bm{i} \in \{\projl, \projr\}$.
	}
	\label{figure:coherence-conditions-operational-structure}
\end{figure*}

Notice how the first four coherence laws in \cref{figure:coherence-conditions-operational-structure} 
ensure the intended behaviour of the arrows $\app, \unitarrow, \bm{p_i},\constarrow$, whereas the remaining laws  
abstractly describe the main dynamics of program execution. Notice also that \cref{def:evaluation-structure} 
prescribes the existence of an evaluation arrow $\ev$: it would be interesting to find conditions on $\B$ 
(probably a domain-like enrichment \cite{Kelly/EnrichedCats} or partial additivity \cite{manesArbib86}) 
ensuring the existence of $\ev$. 
We can now instantiate \cref{def:evaluation-structure}
to recover standard $\Set$-based evaluation semantics as well as operational semantics on richer categories. 
In particular, since $\B$ need not be closed, we can give 
$\lambdacalc$ an operational semantics in the category $\Meas$ of measurable spaces and measurable functions.

\begin{exa}
Let $\B = \Set$ and define $\shallow: \syn \to \Set$ thus:
\begin{align*}
    \shallow \typeone &= \values_{\typeone}
    &
    \shallow \comptype{\typeone} &= \terms_{\typeone}
    &
    \shallow(\vect{\varone: \typeone}) &= \prod \vect{\shallow\typeone}
    &
\shallow(\envone \compimp \termone: \typeone)(\vect{\valone}) &= \subst{\termone}{\vect{\valone}}{\vect{\varone}}
\end{align*}
We obtain an operational structure by defining the maps $\app, \projl, \projr$, and $\unitarrow$ 
in the obvious way (e.g., $\app(\abs\varone\termone, \valone) = \subst\termone\valone\varone$ and 
$\bm{p_i}\pair{\valone_1}{\valone_2} = \valone_i$). 
Finally, given a monad $\mnd$ with generic effects
$\gopsem \in \mndfun \values_{\typeone_{\gop}}$ for each effect symbol $\gop$, 
we define $\ev$ as the evaluation map of \cref{ex:algebraic-effects}.
\end{exa}

\begin{exa}[Stochastic $\lambda$-calculus]
\label{ex:measurable-semantics}
Let us consider the instance of $\lambdacalc$ with a base type $\mathbf{R}$ for real numbers, 
constants $c_{r}$ for each real number $r$, and the generic effect symbol $\mathcal{U}$ 
standing for the uniform distribution over the unit interval.
Recall that $\Meas$ has countable products and coproducts
(but not exponentials \cite{aumann}). To define the diagram $\shallow: \syn \to \Meas$, we 
rely on the well-known fact \cite{EhrhardPT18,StatonYWHK16,DBLP:conf/icfp/BorgstromLGS16} 
that both $\values_{\typeone}$ and $\terms_{\typeone}$ 
can be endowed with a $\sigma$-algebra making them measurable (actually Borel) spaces 
in such a way that $\values_{\mathbf{R}} \cong \mathbb{R}$ and that
the substitution map is measurable. We write $\Sigma_{\typeone}$ and 
$\Sigma_{\comptype{\typeone}}$ for the $\sigma$-algebras associated to 
$\values_{\typeone}$ and $\terms_{\typeone}$, respectively.
We thus define $\shallow: \syn \to \Meas$ as follows: 
\begin{align*}
    \shallow \typeone &= (\values_{\typeone}, \Sigma_{\typeone}) 
    &
    \shallow \comptype{\typeone} &= (\terms_{\typeone}, \Sigma_{\comptype{\typeone}})
    &
    \shallow(\vect{\varone: \typeone}) &= \prod \vect{\shallow\typeone}
    &
\shallow(\envone \compimp \termone: \typeone)(\vect{\valone}) &= \subst{\termone}{\vect{\valone}}{\vect{\varone}}.
\end{align*}
We obtain an operational structure by observing that the maps $\app, \projl, \projr$, and $\unitarrow$ 
of the previous example extend to $\Meas$, in the sense that they are all measurable functions. 
Next, we consider the Giry monad \cite{10.1007/BFb0092872} $\mathbb{G} = (G, \eta, \mu)$ 
with
$G: \Meas \to \Meas$ mapping each measurable space to the space of 
probability measures on it. 
By Fubini-Tonelli theorem, $\mathbb{G}$ is strong. 
Let $\bm{\mathcal{U}}$ be the Lebesgue measure on $[0,1]$, which we regard as an arrow
 $\bm{\mathcal{U}}: 1 \to \giry(\shallow \values_{\mathbf{Real}})$, and thus as a generic effect in $\Meas$.
We then define \cite{dal-lago/gavazzo-mfps-2019}
$\ev: \terms_{\typeone} \to \giry(\values_{\typeone})$ 
as in \cref{ex:algebraic-effects}. 
Alternatively, we can define a finer operational structure by exploiting the 
observation that  $\values_{\typeone}$ and $\terms_{\typeone}$ 
are Borel spaces and move from $\Meas$ to 
its full subcategory $\Pol$ whose objects are standard Borel spaces (notice that arrows in $\Pol$ 
are measurable functions).
Indeed, all the construction seen so far actually gives standard Borel spaces. 
In particular, the Giry monad gives a monad on $\Pol$ \cite{kechris1995classical}
and the operational structure thus obtained coincides with the stochastic  
operational semantics in \cite{DBLP:journals/pacmpl/BartheCLG22}.
\end{exa}

\section{Operational Logical Relations, Fibrationally}
\label{sect:prelim}

Having defined what an operational semantics for $\lambdacalc$ is, we now focus on \emph{operational} 
reasoning. In this section, we propose a general notion of 
an operational logical relation in terms of fibrations over (the underlying category of) an operational
structure and prove that a general version of the fundamental lemma of logical relations 
holds for our operational logical relations. But before that, let us recall some 
preliminary notions on (bi)fibrations (we refer to \cite{Hermida93,Jacobs01,Streicher18} for more details).

\subsection{Preliminaries on Fibrations}
Let \fun{\fib}{\E}{\B} be a functor and \fun{f}{X}{Y} an arrow in \E with $\fib(f) = u$. 
We say that $f$ is \emph{cartesian} over $u$ if,  
for every arrow \fun{h}{Z}{Y} in \E such that $\fib(h) = u\circ v$, there is a unique arrow \fun{g}{Z}{X} such that $\fib(g) = v$ and $ h = f\circ g$. 
Dually, $f$ is \emph{cocartesian} over $u$ if, 
for every arrow \fun{h}{X}{Z} in \E such that $\fib(h) = v\circ u$, there is a unique arrow \fun{g}{Y}{Z} such that $\fib(g) = v$ and $ h = g\circ f$. 
We say that $f$ is \emph{vertical} if $u$ is an identity. 

A \emph{fibration} is a functor \fun{\fib}{\E}{B} such that, 
for every object $X$ in \E and every arrow \fun{u}{I}{\fib(X)} in \B,  the exists a cartesian arrow over $u$ with codomain $X$. 
Dually, an \emph{opfibration} is a functor \fun{\fib}{\E}{\B}  such that 
for every object $X$ in \E and every arrow \fun{f}{\fib(X)}{I} in \B, the exists a cocartesian arrow over $u$ with domain $X$. 
A \emph{bifibration} is a functor which is both a fibration and an opfibration. 
We refer to $\E$ and $\B$ as the domain and the base of the (bi/op)fibration. 
A (op)fibration is \emph{cloven} if it comes together with a choice of (co)cartesian liftings: 
for an object $X$ in \E, we denote by 
\fun{\liftar u X}{\lift u X}{X} the chosen cartesian arrow over \fun{u}{I}{\fib(X)} and by 
\fun{\coliftar u X}{X}{\colift u X} the chosen cocartesian arrow over \fun{u}{\fib(X)}{I}. 
A bifibration is cloven if it has choices both for cartesian and cocartesian liftings. 
From now on, we assume all (bi/op)fibrations to be cloven. 

Let \fun{\fib}{\E}{\B} be a functor and $I$ an object in \B. 
The \emph{fibre} over $I$ is the category $\fibre\E{I}$ where objects are objects $X$ in \E such that $\fib(X) = I$ and 
arrows are arrows \fun{f}{X}{Y} in \E such that $\fib(f) = \id_I$, namely, vertical arrows over $I$. 
Then, for every arrow \fun{u}{I}{J}, the following hold:
\begin{itemize}
\item if $\fib$ is a fibration,  we have a functor 
\fun{\lift u}{\fibre\E J}{\fibre\E I} called \emph{reindexing along $u$}; 
\item if $\fib$ is an opfibration, we have a functor 
\fun{\colift u}{\fibre\E I}{\fibre\E J} called \emph{image along $u$}; 
\item if $\fib$ is a bifibration, we have an adjunction $\colift u \dashv \lift u$. 
\end{itemize}  

\begin{exa}
\label{ex:subsets}  \label{ex:rel}
Let us consider the category 
$\PredC\Set$ of homogeneous $n$-ary predicates over sets. 
Its objects are  pairs of sets \ple{X,A} with $A\subseteq X^n$ as objects and 
arrows \fun{f}{\ple{X,A}}{\ple{Y,B}} are functions 
\fun{f}{X}{Y} such that $A\subseteq (\prod f)^{-1}(B)$. 
There is an obvious functor 
\fun{\Pred[\Set]}{\PredC\Set}{\Set}
mapping an $n$-ary predicate \ple{X,A} to its underlying set $X$ and an arrow \fun{f}{\ple{X,A}}{\ple{Y,B}} to its underlying function \fun{f}{X}{Y}. 
The functor $\Pred[\Set]$ is a bifibration where 
the cartesian and cocartesian liftings of a function \fun{f}{X}{Y}  are given by inverse and direct images along $\prod f$, respectively.
Special instances of $\Pred[\Set]$ are obtained for $n=1$ (unary predicates) and $n=2$ (binary relations). 
In those cases, we specialise the notation and write \fun{\Sub[\Set]}{\SubC\Set}{\Set} and \fun{\Rel\Set}{\RelC\Set}{\Set}. 
\end{exa} 

\begin{exa}
Starting from \fun{\Rel\Set}{\RelC\Set}{\Set}, we obtain a (bi)fibration 
of relations $\mathsf{Rel}_{\Meas}: \mathsf{r}(\Meas) \to \Meas$ on measurable spaces 
simply by considering the category $\mathsf{r}(\Meas)$ --- whose objects are triples $(X, \Sigma, R)$, 
with $(X, \Sigma)$ a measurable space and $R$ a relation on $X$ --- 
and forgetting about the measurable space structure, hence mapping 
$(X, \Sigma, R)$ to $(X, R)$.
\end{exa}

\begin{exa}[Weak subobjects]\label{ex:wsub}
Let \CC be a category with weak pullbacks. 
We define the bifibration \fun{\wSub[\CC]}{\WS\CC}{\CC} of weak subobjects in \CC \cite{Grandis00,MaiettiR13}. 
Let $I$ be an object of \CC and denote by $\CC/I$ the slice over $I$, that is, 
objects of $\CC/I$ are arrows \fun{\alpha}{X}{I} in \CC and an arrow \fun{f}{\alpha}{\beta}, where \fun{\alpha}{X}{I} and \fun{\beta}{Y}{I} are objects of $\CC/I$, is an arrow \fun{f}{X}{Y} such that $\alpha = \beta\circ f$. 
For $\alpha,\beta$ objects in $\CC/I$, we write $\alpha \leq \beta$ if there exists an arrow \fun{f}{\alpha}{\beta} in $\CC/I$. 
This is a preorder on the objects of $\CC/I$ and, like any preorder, induces an equivalence on the objects of $\CC/I$ defined as follows: 
$\alpha\equiv \beta$ if and only if $\alpha\leq\beta$ and $\beta\leq\alpha$, that is, if and only if there are arrows \fun{f}{\alpha}{\beta} and \fun{g}{\beta}{\alpha} in $\CC/I$. 
Then, the category $\WS\CC$ is given as follows. 
Objects of  $\WS\CC$ are pairs \ple{I,[\alpha]} where $I$ is an object of \CC and $[\alpha]$ is an equivalence class of objects of $\CC/I$ modulo $\equiv$ 
and an arrow \fun{f}{\ple{I,[\alpha]}}{\ple{J,[\beta]}} is an arrow \fun{f}{I}{J} in \CC such that 
$f\circ\alpha\leq\beta$ or, more explicitly, 
$f\circ \alpha = \beta\circ g$ for some arrow $g$ in \CC. 
It is easy to check that this definition does not depend on the choice of representatives. 
Finally, composition and identities in $\WS\CC$ are those of \CC. 
The functor $\wSub[\CC]$ maps \ple{I,[\alpha]} to $I$ and is the identity on arrows. 
It is easy to check that, 
an arrow \fun{f}{\ple{I,[\alpha]}}{\ple{J,[\beta]}} is 
vertical iff $f = \id_I$, 
cocartesian iff $[\beta] = [f\circ\alpha]$ and 
cartesian iff $[\alpha] = [\beta']$ where $f\circ\beta' = \beta\circ f'$ is a weak pullback square in \CC. 
Therefore, we have that, 
for every arrow \fun{f}{I}{J} in \CC and every object \ple{I,[\alpha]},
the image along $f$ is $\colift f {\ple{I,[\alpha]}} = \ple{J,[f\circ\alpha]}$, and, 
for every object \ple{J,[\beta]}, the reindexing along $f$ is $\lift f {\ple{J,[\beta]}} = \ple{I,[\beta']}$, where $f\circ\beta' = \beta\circ f'$ is a weak pullback square in \CC.
Therefore, $\wSub[\CC]$ is a bifibration.\footnote{This is actually an instance of the Gr\"{o}thendieck construction applied to the doctrine of weak subobjects as described in \cite{MaiettiR13}.}
Finally, observe that the bifibrations $\wSub[\Set]$ and $\Sub[\Set]$ are equivalent in the sense that there is a functor 
\fun{U}{\Sub[\Set]}{\WS\Set} which is an equivalence satisfying $\wSub[\Set] \circ U = \Sub[\Set]$ and preserving (co)cocartesian arrows. 
The functor $U$ maps \ple{X,A} to \ple{X,[\iota_A]} where \fun{\iota_A}{A}{X} is the inclusion function, and it is the identity on arrows. 
\end{exa}


\begin{exa}[Information Flow]
We now define a (bi)fibration of classified relations~\cite{DBLP:journals/pacmpl/Kavvos19} in terms of 
presheaves. 
Fixed a lattice $\mathcal{L}$ of security levels (such as the two-elements 
lattice $\{\mathtt{private} \leq \mathtt{public}\}$) regarded as a category, we consider the 
functor category 
$\Set^{\mathcal{L}}$ and the category $\mathsf{r}(\Set^{\mathcal{L}})$ whose objects are 
pairs $(X, R)$ with $X: \mathcal{L} \to \Set$ and $R$ mapping each $\ell \in \mathcal{L}$ 
to a binary endorelation $R_\ell$ on $X(\ell)$ such that $\ell_1 \leq \ell_2$ implies 
$R_{\ell_1} \subseteq R_{\ell_2}$, and whose arrows are natural transformations preserving 
relations. As usual, there is an obvious forgetful functor from 
 $\mathsf{r}(\Set^{\mathcal{L}})$ to $\Set^{\mathcal{L}}$ mapping $(X, R)$ to $X$, 
 which gives rise to a fibration. 
Such a fibration naturally supports relational reasoning about information flow as follows~\cite{DBLP:conf/popl/AbadiBHR99,DalLagoG22}. 
Starting with the $\Set$-based operational structure of $\lambda$-terms, 
we embed such a structure in $\Set^{\mathcal{L}}$ by considering the constant functor 
mapping each $\ell$ to the (constant) set of $\lambda$-terms. Moving to 
$\mathsf{r}(\Set^{\mathcal{L}})$, we then obtain relations on $\lambda$-terms that 
depend on security levels, and we think about them as giving program equivalence 
for observers with security levels (or permissions) in $\mathcal{L}$.
\end{exa}

\begin{exa}[Simple fibration] \label{ex:simp-fib}
Let \CC be a category with finite products. 
We consider a category $\simp\CC$ defined as follows. 
Objects are pairs \ple{I,X} of objects of \CC and an arrow \fun{\ple{u,f}}{\ple{I,X}}{\ple{J,Y}} consists of a pair of arrows \fun{u}{I}{J} and \fun{f}{I\times X}{Y} in \CC. 
The composition of \fun{\ple{u,g}}{\ple{I,X}}{\ple{J,Y}} and \fun{\ple{v,g}}{\ple{J,Y}}{\ple{K,Z}} in $\simp\CC$  is given by 
$\ple{v,g}\circ\ple{u,f} = \ple{v\circ u, g\circ\iple{u\circ\pi_1,f}}$, where \fun{\pi_1}{I\times X}{I} is the first projection and \fun{\iple{u\circ\pi_1,f}}{I\times X}{J\times Y} is the pairing of $u\circ\pi_1$ and $f$ given by the universal property of the product $J\times Y$. 
The identity on \ple{I,X} is the arrow \fun{\ple{\id_X,\pi_2}}{\ple{I,X}}{\ple{I,X}}, where \fun{\pi_2}{I\times X}{X} is the second projection. 
There is an obvious functor \fun{\sfib\CC}{\simp\CC}{\CC}, which maps an object \ple{I,X} to the first component $I$ and acts similarly on arrows. 
This functor is a fibration known as the \emph{simple fibration} \cite{Jacobs01}, where, 
for every arrow \fun{u}{I}{J} in \CC and object \ple{J,X} in $\simp\CC$, the cartesian lifting of $u$ is the arrow 
\fun{\liftar{u}{\ple{J,X}} = \ple{u,\pi_2}}{\ple{I,X}}{\ple{J,X}}. 
\end{exa}

Fibrations nicely carry a logical content: logical operations, in fact, can be described as categorical structures on the fibration. 
We now define the logical structure underlying logical relations, namely conjunctions, implications, and universal quantifiers. 
Recall that a fibration \fun\fib\E\B\ has \emph{finite products} if 
\E and \B have finite products and $\fib$ preserves them. 
We denote by $\dtimes$ and $\dter$ finite products in \E and recall  
that in a fibration \fun\fib\E\B with finite products every fibre $\fibre\E{I}$ has finite products $\land_I$ and $\top_I$\footnote{We will often omit subscripts when they are clear from the context.} preserved by reindexing functors.
Additionally, we have the isomorphisms 
$A\dtimes B \cong \lift{\pi_1}{A} \land_{\fib(A)\times\fib(B)} \lift{\pi_2}{B}$ and $\dter \cong \top_\terobj$. 
This follows essentially because $\dtimes$, being a right adjoint, preserves cartesian arrows (\cf~\cite[Exercise 1.8.5]{Jacobs01}). 

\begin{defi}\label{def:log-fib}
A fibration \fun\fib\E\B\ with finite products is a \emph{logical fibration} if it is fibred cartesian closed and has universal quantifiers, 
where:
\begin{enumerate}
\item $\fib$ is \emph{fibred cartesian closed} if every fibre $\fibre\E I$ has exponentials, denoted by $X\impl Y$, 
and reindexing preserves them.
\item $\fib$ has \emph{universal quantifiers} if, for every projection \fun{\pi}{I\times J}{I} in \B, the reindexing functor \fun{\lift{\pi}}{\fibre\E I}{\fibre\E{I\times J}} has a right adjoint \fun{\uq I J}{\fibre\E{I\times J}}{\fibre\E I} satisfying the Beck-Chevalley condition \cite{Jacobs01}. 
\end{enumerate}
\end{defi}

Note that in a fibration with universal quantifiers, we have right adjoints along any tuple of distinct projections 
\fun{\iple{\pi_{i_1},\ldots,\pi_{i_k}}}{I_1\times\ldots\times I_n}{I_{i_1}\times\ldots\times I_{i_k}},
where $i_1,\ldots,i_k \in \{1, \hdots, n\}$ are all distinct. 
We denote such a right adjoint by $\uqn{\iple{\pi_{i_1},\ldots,\pi_{i_k}}}$. 

The following propositions 
shows under which conditions the fibration of weak subobjects and the simple fibration are logical fibrations.

\begin{propC}[{\cite[Prop. 7.6]{MaiettiPR21}}]
\label{prop:wsub-lf}
Let \CC be a category with finite products and weak pullbacks. 
If \CC is slicewise weakly cartesian closed, then the fibration of weak subobjects $\wSub[\CC]$ is a logical fibration. 
\end{propC}

\begin{propC}[{\cite[Exercise 1.3.2, Prop. 1.9.3]{Jacobs01}}]
\label{prop:simp-lf} 
If \CC is cartesian closed, then 
the simple fibration $\sfib\CC$ is a logical fibration. 
\end{propC}

\begin{exa}
Since \Set is locally cartesian closed,  \cref{prop:wsub-lf} implies that both $\wSub[\Set]$ and $\Sub[\Set]$, being equivalent as noticed in \cref{ex:wsub},
are logical fibrations. 
Also $\Rel\Set$ is a logical fibration, as it can be obtained from $\Sub[\Set]$ by pulling back along the product-preserving 
functor \fun{X \mapsto X \times X}{\Set}{\Set}.
\end{exa}

\subsubsection*{A 2-category of (bi)fibrations} 
Fibrations can be organised in a 2-category. 
This is important because many standard categorical concepts can be internalised in any 2-category, thus providing us with an easy way to define them also for fibrations. 
In particular, we will be interested in (strong) monads on a fibration, as they will allow us to define effectful logical relations. 

We consider the 2-category \Fib of fibrations defined as follows. 
Objects are fibrations \fun\fib\E\B. 
A 1-arrow \oneAr{F}{\fib}{\fibq} between fibrations \fun\fib\E\B\ and \fun\fibq\D\CC is a pair of functors \ple{\fb{F},\ft{F}} where 
\fun{\fb{F}}{\B}{\CC} and \fun{\ft{F}}{\E}{\D} and $\fb{F}\circ \fib = \fibq\circ \ft{F}$.\footnote{Note that we do not require $\ft{F}$ to preserve cartesian arrows.} 
A 2-arrow \twoAr{\phi}{F}{G}  between 1-arrows 
\oneAr{F,G}{\fib}{\fibq} is a pair of natural transformations \ple{\fb\phi,\ft\phi} where 
\nt{\fb\phi}{\fb{F}}{\fb{G}} and \nt{\ft\phi}{\ft{F}}{\ft{G}} and 
$\fb\phi\fib = \fibq\ft\phi$.
Compositions and identities are defined componentwise. 
The 2-category \biFib of bifibrations is the full 2-subcategory of \Fib whose objects are bifibrations. 
A 1-arrow \oneAr{F}{\fib}{\fibq} in \biFib is said to be cartesian if $\ft{F}$ preserves cartesian arrows; 
it is said to be cocartesian if $\ft{F}$ preserves cocartesian arrows. 

Following \cite{Street72}, we can define (strong) monads on fibrations as (strong) monads in the 2-category \Fib. 
That is, a \emph{monad} $\mnd$ on a fibration \fun\fib\E\B\ consists of the following data: 
\begin{itemize}
\item a 1-arrow \oneAr{T}{\fib}{\fib} and 
\item two 2-arrows \twoAr{\mu}{T^2}{T} and \twoAr{\eta}{\Id}{T} and 
\item such that $\fb\mnd = \ple{\fb{T},\fb\mu,\fb\eta}$ is a monad on \B and $\ft\mnd = \ple{\ft{T},\ft\mu,\ft\eta}$ is a monad on \E. 
\end{itemize} 
The monad $\ft\mnd$ is called a \emph{lifting} of the monad $\fb\mnd$ along the fibration $\fib$. 
If \fun\fib\E\B\  has finite products, a \emph{strong monad} on $\fib$ is a pair \ple{\mnd,\str}  where 
$\mnd = \ple{T,\mu,\eta}$ is a monad on $\fib$ and 
$\twoAr{\ple{\str}}{\ple{\blank\times \fb{T}\blank, \blank\dtimes \ft{T}\blank}}{\ple{\fb{T}(\blank\times\blank),\ft{T}(\blank\dtimes\blank)}}$ is a 2-arrow such that \ple{\fb\mnd,\fb\str} is a strong monad on \B and \ple{\ft\mnd,\ft\str} is a strong monad on \E. 

\begin{exa}
Given a monad $\mathbb{T} = (T, \eta, \mu)$ on $\Set$, we obtain a lifting 
of $\mathbb{T}$ along \fun{\Rel\Set}{\RelC\Set}{\Set}
via the so-called \emph{Barr extension} \cite{Barr/LMM/1970} of $\mathbb{T}$. 
Looking at a relation $R \subseteq X \times Y$ as the span  
$\xymatrix{X & \ar[l]_{\pi_1} R \ar[r]^{\pi_2} & Y,}$ we define the 
relation $\hat{T}R \subseteq TX \times TY$ as 
the span 
$\xymatrix{TX & \ar[l]_{T\pi_1} TR \ar[r]^{T\pi_2} & TY.}$
When the functor $T$ preserves weak pullbacks, then $\hat{T}$ extends $T$ 
to  an ordered  functor on $\ct{Rel}$, the ordered category of sets and relations,
making $\eta$ and $\mu$ lax natural transformations~\cite{Barr/LMM/1970}. 
Consequently, mapping objects $(X, R)$ of $\mathsf{r}(\Set)$ to 
$(TX, \hat{T}R)$ and arrows $f: (X,R) \to (Y,S)$ to 
$Tf: (TX, \hat{T}R) \to (TY, \hat{T}S)$, we obtain 
the desired lifting.
 Note that preservation of weak pullbacks is only needed to force $\hat{T}$ to strictly preserve relational composition. 
Thus, since we do not consider composition, restricting ourselves to endorelations, we can avoid this assumption. 
\end{exa}

\begin{exa}
Let us consider the Giry monad $\mathbb{G} = (G, \eta, \mu)$ and 
the fibration $\mathsf{Rel}_{\Meas}: \mathsf{r}(\Meas) \to \Meas$. 
Since relations in $\mathsf{r}(\Meas)$ are defined as set-theoretic relations, 
we can consider the Barr extension of the Giry monad as in the previous example. 
Unfortunately, that does not work in general~\cite{DBLP:journals/lmcs/KatsumataSU18,DBLP:journals/iandc/Katsumata13}, 
although it does when we restrict objects of $\mathsf{r}(\Meas)$ to structure
 $(X, \Sigma, R)$ such that $R$ is \emph{reflexive}. 
This obviously gives a fibration, and the Barr extension of $\mathbb{G}$ 
provides the desired lifting as in the previous example~\cite{dal-lago/gavazzo-mfps-2019}.
\end{exa}

\begin{exa}
Let us consider the fibration $\mathsf{Rel}_{\Set^{\mathcal{L}}}: \mathsf{r}(\Set^{\mathcal{L}}) \to 
\Set^{\mathcal{L}}$ of information flow. To model information hiding, we consider 
the identity monad $\mathbb{ID}$ on $\Set^{\mathcal{L}}$. Intuitively, information hiding does not 
directly act on programs, but rather on the way an external observer can inspect them. 
Consequently, the actual information-hiding operation will be modelled by the lifting of 
$\mathbb{ID}$ along $\mathsf{Rel}_{\Set^{\mathcal{L}}}$, which ultimately 
means specifying how $\mathbb{ID}$ acts on relations. 
For simplicity, let us consider the lattice $\mathtt{private} \leq \mathtt{public}$. 
Thinking about a relation $R(\ell)$ as program equivalence for an observer with security permission 
$\ell$, we define the masking operation as obscuring private programs, that is making them inaccessible to 
an external observer, 
so that private programs will be
regarded as equivalent.
Formally, define $!R(\mathtt{public}) = R(\mathtt{public})$ and 
$!R(\mathtt{private}) = X(\mathtt{private}) \times X(\mathtt{private})$. The assignment 
 $(X, R)$ to $(X, !R)$ then gives the desired lifting.
\end{exa}

\subsection{Operational Logical Relations and Their Fundamental Lemma} 

We are now ready to define a general notion of an operational logical relation. 
Let us consider an operational structure $S$ over a cartesian category $\B$ 
with a strong monad $\mnd$ on $\B$, as in \cref{def:evaluation-structure}. 
Let $\fun\fib\E\B$ be a logical fibration (\cf \cref{def:log-fib}) and $\liftmnd{\mnd}$
be a lifting of $\mnd$ to $\fib$. 

\begin{defi}
\label{def:logical-relation}
    A logical relation is a mapping $\lrel$ from objects of $\syn$ to 
    objects of $\E$ such that $p(\lrel x) = \shallow x$, for any object $x$ of $\syn$,\footnote{
    Actually, it suffices to have 
    $p(\lrel \basetype) = \shallow \basetype$ for basic types $\basetype$ only, as the defining clasues of $\lrel{x}$ assures  the equality 
    $p(\lrel x) = \shallow x$ for all objects $x$.} 
    and the following 
    hold.
    \begin{align*}
        \lrel\unittype &= \top_{\shallow \unittype}
        &
        \lrel(\typeone \times \typetwo) &= \lift{\projl}(\lrel \typeone) \wedge \lift{\projr}(\lrel \typetwo)
        &
        \lrel \comptype{\typeone} &= \lift{\ev}(\liftmnd{\mndfun}(\lrel \typeone))
        \\
        \lrel (\vect{x: \typeone}) &= \liftmnd{\prod} \vect{\lrel \typeone}
        &
         \lrel(\typeone \to \comptype{\typetwo}) &= 
        \uq{}{\pi_1} \lift{\pi_2}(\lrel \typeone) \impl \lift{\app}(\lrel \comptype{\typetwo})
        &
    \end{align*}
\end{defi}

Notice that giving a logical relation essentially amounts to specifying the action of $\lrel$ 
on basic types, since the action of $\lrel$ on complex types is given by \cref{def:logical-relation}. 
The defining clauses of a logical relation 
exploit both the logic of a (logical) fibration and 
the operational semantics of $\lambdacalc$. 
The reader should have recognised in \cref{def:logical-relation} the usual definition of a logical relation, 
properly generalised to rely on the logic of a fibration only. For instance, 
$\lrel \comptype{\typeone}$ intuitively relates computations whose evaluations are related by the lifting 
of the monad. Notice also that the clause of arrow types has a higher logical complexity than other clauses, 
as it involves \emph{two} logical connectives, viz. implication and universal quantification. 


Operational logical relations come with their so-called \emph{fundamental lemma}, which states 
that (open) terms map (via substitution) related values to related terms. In our abstract framework, 
the fundamental lemma states that to any term $\termone$ we can associate a suitable arrow $\lrel \termone$ in $\E$ lying above $\shallow \termone$. 
To prove our general version of the fundamental theorem, we have to assume it for the parameters of our calculus 
(constants of basic types and generic effects). 
Accordingly, we say that a logical relation $\lrel$ is ($\lambdacalc$-)\emph{stable}
if: (i) for every constant $\cbase$ of a base type $\basetype$, we have an arrow 
$\liftmnd{\constarrow} : \dter \to \lrel\basetype$ above $\constarrow$; (ii)
we have an arrow $\liftmnd{\gopsem} : \dter \to \liftmnd{\mndfun}(\lrel\typeone)$ above $\gopsem$. 

\begin{thm}[Fundamental Lemma]
\label{fundamental-lemma}
Let $\lrel$ be a stable logical relation. 
The map $\lrel$ extends to a diagram $\lrel: \syn \to \E$ such that 
$p \circ \lrel = \shallow$.
In particular, for any term $\envone \compimp \termone: \typeone$, there is an arrow 
$\lrel \termone: \lrel \envone \to \lrel \comptype{\typeone}$ in $\E$ above $\shallow\termone$ (similarly, for values). 
\end{thm}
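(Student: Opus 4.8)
The plan is to argue by induction on the structure of the typing derivation --- equivalently, on terms, since the rules of $\lambdacalc$ are syntax-directed --- producing for every derivable value $\envone \valimp \valone : \typeone$ an arrow $\lrel \valone : \lrel \envone \to \lrel \typeone$ in $\E$ above $\shallow \valone$, and for every derivable computation $\envone \compimp \termone : \typeone$ an arrow $\lrel \termone : \lrel \envone \to \lrel \comptype{\typeone}$ above $\shallow \termone$. Since $\syn$ is a graph and a diagram into $\E$ is merely a graph morphism, such an assignment --- which respects domains and codomains by construction --- is exactly an extension of $\lrel$ to a diagram $\lrel : \syn \to \E$ with $p \circ \lrel = \shallow$; well-definedness follows from the uniqueness of typing derivations. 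Two devices recur throughout. First, the chosen cartesian liftings give, for each $u : I \to J$ in $\B$, a natural bijection between arrows $X \to Y$ above $u$ and vertical arrows $X \to \lift u Y$ in $\fibre \E I$, so that each case reduces to constructing a vertical map into a suitable reindexing and transposing. Second, since reindexing functors are cartesian closed and satisfy Beck--Chevalley, they commute with $\wedge$, with $\impl$, and with the quantifiers $\uq{}{\pi}$; combined with the coherence laws of \cref{figure:coherence-conditions-operational-structure}, this is what lets the inductive data match the defining clauses of \cref{def:logical-relation}.

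For the base cases: a variable is sent to the evident product projection out of $\lrel \envone = \liftmnd{\prod}\vect{\lrel \typeone}$, which lies above $\shallow \varone$; for $\unitval$ one uses that reindexing preserves $\top$, so that the unique fibre map $\lrel \envone \to \top_{\shallow \envone} = \lift{(\unitarrow \circ {!})}(\top_{\shallow \unittype})$ transposes to an arrow above $\shallow \unitval = \unitarrow \circ {!}$ (no stability is needed here). The coherence laws $\shallow \cbase = \constarrow \circ {!}$ and $\ev \circ \shallow \gop = \gopsem \circ {!}$ handle constants and generic effects: precomposing the stably-given $\liftmnd{\constarrow}$ and $\liftmnd{\gopsem}$ with the terminal map $\lrel \envone \to \dter$, and --- in the effect case --- factoring through the cartesian lifting of $\ev$ (recall $\lrel \comptype{\typeone} = \lift{\ev}(\liftmnd{\mndfun}(\lrel \typeone))$), gives the required arrows; this is the only place where stability is invoked. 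The remaining effectful constructors use the monad lifting: for $\return \valone$, the law $\ev \circ \shallow(\return \valone) = \mndun \circ \shallow \valone$ makes $\liftmnd{\mndun} \circ \lrel \valone$ an arrow into $\liftmnd{\mndfun}(\lrel \typeone)$ above that composite, which factors through the cartesian lifting of $\ev$; for $\seq \termone \termtwo$, I would transpose $\lrel \termtwo$ to an arrow $\lrel \envone \dtimes \lrel \typeone \to \liftmnd{\mndfun}(\lrel \typetwo)$ above $\ev \circ \shallow \termtwo$, apply the strong Kleisli extension of the lifting $\liftmnd{\mnd}$ to get an arrow $\lrel \envone \dtimes \liftmnd{\mndfun}(\lrel \typeone) \to \liftmnd{\mndfun}(\lrel \typetwo)$ above $\bindsymbol(\ev \circ \shallow \termtwo)$, and precompose with the lifted pairing $\langle \id, \lrel \termone\rangle$ and with the product of $\id$ and the cartesian lifting of $\ev$; by the sequencing coherence law the composite lies above $\ev \circ \shallow(\seq \termone \termtwo)$, and factoring it through the cartesian lifting of $\ev$ yields $\lrel(\seq \termone \termtwo)$. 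That the lifted extension sits above the base one is precisely the content of $\liftmnd{\mnd}$ being a strong monad on the fibration $p$.

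Pairing and projection follow the same transpose-and-rewrite scheme, now using that reindexing preserves $\wedge$ together with the laws $\bm{i} \circ \shallow(\pair{\valone_1}{\valone_2}) = \shallow \valone_i$ and $\ev \circ \shallow(\fst \valone) = \mndun \circ \projl \circ \shallow \valone$ (and symmetrically for $\snd$), after which the fibre-product pairing, resp.\ projection, supplies the map. The case I expect to be the main obstacle is abstraction, with its mirror image application, as it is the only point where the universal quantifier, the fibred exponential, Beck--Chevalley, and a coherence law must be orchestrated at once. For $\abs \varone \termone$, writing $w = \shallow(\abs \varone \termone)$, I would compute $\lift w (\lrel(\typeone \to \typetwo))$ by pushing $\lift w$ through $\uq{}{\pi_1}$ via Beck--Chevalley and through $\impl$ by fibred cartesian closure; using $\pi_2 \circ (w \times \id) = \pi_2$ and the abstraction law $\app \circ (w \times \id) = \shallow \termone$, this reindexing collapses to $\uq{}{\pi_1}(\lift{\pi_2}(\lrel \typeone) \impl \lift{\shallow \termone}(\lrel \comptype{\typetwo}))$. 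Transposing in turn along $\lift{\pi_1} \dashv \uq{}{\pi_1}$, the exponential adjunction in the fibre, the isomorphism $\lift{\pi_1}(\lrel \envone) \wedge \lift{\pi_2}(\lrel \typeone) \cong \lrel \envone \dtimes \lrel \typeone$, and the cartesian lifting of $\shallow \termone$, the datum that remains to be supplied is exactly an arrow $\lrel \envone \dtimes \lrel \typeone \to \lrel \comptype{\typetwo}$ above $\shallow \termone$, i.e.\ the induction hypothesis $\lrel \termone$. Dually, application is obtained by feeding the lifted pairing $\langle \lrel \valone, \lrel \valtwo\rangle$ into the composite of the two counits (of $\lift{\pi_1} \dashv \uq{}{\pi_1}$ and of the fibre exponential), which exhibits a vertical map $\lrel(\typeone \to \typetwo) \dtimes \lrel \typeone \to \lift{\app}(\lrel \comptype{\typetwo})$, composing with the cartesian lifting of $\app$, and finally rewriting along $\ev \circ \shallow(\valone \valtwo) = \ev \circ \app \circ \langle \shallow \valone, \shallow \valtwo\rangle$ (both sides of which reindex $\liftmnd{\mndfun}(\lrel \typetwo)$ to the same object) to land in $\lrel \comptype{\typetwo}$ above $\shallow(\valone \valtwo)$. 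This exhausts the term formers and closes the induction.
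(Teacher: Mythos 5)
Your proof is correct and takes essentially the same approach as the paper's: induction on terms, constructing each arrow by combining the induction hypotheses with the logical structure of the fibration and the lifted monad, then factoring through the cartesian lifting of $\ev$ (or of the interaction arrows) via the coherence laws of \cref{figure:coherence-conditions-operational-structure} --- indeed your sequencing case reproduces the paper's paradigmatic example exactly. You merely spell out more cases (abstraction, application, constants, effects) than the paper, which sketches all but sequencing.
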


\begin{proof}[Proof]
Given $\envone \compimp \termone: \typeone$, we construct the desired arrow $\lrel \termone$ by induction on $\termone$. 
The case for values 
lifts commutative triangles 
in \cref{figure:coherence-conditions-operational-structure} using the universal property of cartesian liftings of interaction arrows and then 
constructs the desired arrow using the logical structure of $\fib$ and $\lrel \typeone$. 
%
For the case of terms, we just lift the commutative diagrams 
in \cref{figure:coherence-conditions-operational-structure} using the universal property of the cartesian lifting of $\ev$. 
As a paradigmatic example, we show the case of the sequencing construct $\seq{-}{-}$.
First, let us notice that for any type $\typeone$, the evaluation arrow $\ev: \shallow \comptype{\typeone} \to \mndfun(\shallow \typeone)$ 
gives a cartesian arrow $\bar{\ev}: \lift{\ev}(\liftmnd{\mndfun}(\lrel\typeone)) \to \liftmnd{\mndfun}(\lrel\typeone)$, i.e.
$\bar{\ev}: \lrel{\comptype{\typeone}} \to \liftmnd{\mndfun}(\lrel\typeone)$ (since 
$\lrel{\comptype{\typeone}} = \lift{\ev}(\liftmnd{\mndfun}(\lrel\typeone))$). 
Let us now consider the case of 
$\envone \compimp \seq{\termone}{\termtwo}: \comptype{\typetwo}$ as obtained from 
$\envone \compimp \termone: \typeone$ and $\envone, \varone: \typeone \compimp \termtwo: \typetwo$. 
By induction hypothesis, we have arrows 
$\lrel \termone: \lrel \envone \to \lrel \comptype{\typeone}$ and 
$\lrel \termtwo: \lrel \envone \dtimes \lrel \typeone \to \lrel \comptype{\typetwo}$. 
By postcomposing the former with $\bar \ev$, we obtain the arrow 
$\bar{\ev} \circ \lrel \termone: \lrel \envone \to \liftmnd{\mndfun}(\lrel\typeone)$, and thus 
$\lan \id_{\lrel\envone}, \bar\ev \circ \lrel \termone \ran: \lrel\envone\to  \lrel \envone \dtimes \liftmnd{\mndfun}(\lrel\typeone)$. 
In a similar fashion, we have $\bar{\ev} \circ \lrel \termtwo: \lrel \envone \dtimes \lrel \typeone \to \liftmnd{\mndfun}(\lrel \typetwo)$
and thus, using the extension of the monad, $\liftmnd{\bindsymbol}(\bar{\ev} \circ \lrel \termtwo): 
\lrel \envone \dtimes \liftmnd{\mndfun}(\lrel \typeone) \to  \liftmnd{\mndfun}(\lrel \typetwo)$. 
Altogether, we obtain the arrow 
$\liftmnd{\bindsymbol}(\bar{\ev} \circ \lrel \termtwo) \circ \lan \id_{\lrel\envone}, \bar\ev \circ \lrel \termone \ran: 
\lrel \envone \to  \liftmnd{\mndfun}(\lrel \typetwo)$. Using the commutative diagram of sequencing in \cref{figure:coherence-conditions-operational-structure} and the very definition of a fibration, we obtain
\[
\xymatrix{
\shallow \envone \ar[rd]^-{\hspace{0.2cm} \bindsymbol(\ev \circ \shallow \termtwo) \circ \lan \id_{\envone}, \ev \circ \shallow \termone \ran} 
\ar[d]_{\shallow(\seq{\termone}{\termtwo})} & 
\\
\shallow \comptype{\typetwo} \ar[r]_-{\ev} & \mndfun(\shallow \typetwo)
}
\qquad
\xymatrix{
\lrel \envone \ar[rd]^-{\hspace{0.2cm} 
\liftmnd{\bindsymbol}(\bar{\ev} \circ \lrel \termtwo) \circ \lan \id_{\lrel\envone}, \bar\ev \circ \lrel \termone \ran} 
\ar[d]_{\exists! h} & 
\\
\lrel \comptype{\typetwo} \ar[r]_-{\bar{\ev}} & \liftmnd{\mndfun}(\lrel \typetwo)
}
\]
We choose $h$ as $\lrel(\seq{\termone}{\termtwo})$.
\end{proof}

We can now instantiate \cref{fundamental-lemma} with the operational structures and fibrations seen so far
to recover traditional logical relations (and their fundamental lemmas). For instance, 
the operational structure of \cref{ex:measurable-semantics} and the fibration obtained by pulling back $\Rel\Set$ along 
the forgetful functor from $\Meas$ to \Set  (together with 
the lifting of the Giry monad) give 
operational logical relations for stochastic $\lambda$-calculi, 
whereas considering the fibration $\mathsf{Rel}_{\Set^{\mathcal{L}}}$ and the lifting of the
identity monad for information hiding gives logical relations for non-interference. 
 \cref{fundamental-lemma} then gives 
compositionality (i.e. congruence and substitutivity) of the logical relation.
But that is not the end of the story. In fact, our general results go beyond the realm of traditional logical relations.

\section{The fibration of differential relations}
\label{sect:dif-rel} 

In this section, we describe the construction of the \emph{fibration of differential relations}, 
which can serve as a fibrational foundation of \emph{differential logical relations} 
\cite{DBLP:conf/icalp/LagoGY19,dlrICTCS,dlrTCS,DalLagoG22} (DLRs, for short), a recently introduced form of logical 
relations defining higher-order distances between programs. DLRs are ternary relations 
relating pairs of terms with elements representing distances between them: such distances, however, 
need not be numbers. More precisely, 
with each type $\typeone$ one associates a set $\dsp{\typeone}$ 
of (higher-order) distances between terms of type $\typeone$, and then defines DLRs as relating terms 
of type $\typeone$ with distances in $\dsp{\typeone}$ between them. Elements of $\dsp{\typeone}$  
reflect the interactive complexity of programs, the latter being given by the type $\typeone$. 
Ignoring effects,\footnote{Without much of a surprise, effects are handled using monads on higher-order distances. 
This, however, becomes clearer when employing generalised distance spaces, which will be introduced soon.} 
we can give a simple inductive definition of $\dsp{\typeone}$ as follows, where 
$\dsp{\zeta}_0$ is a fixed set of distances for base types:
\begin{align*}
    \dsp{\zeta} &= \dsp{\zeta}_0 
    \\
    \dsp{\sigma \times \tau} &= \dsp{\sigma} \times \dsp{\tau} 
    \\
    \dsp{\sigma \to \tau} &= \values_{\sigma} \times \dsp{\sigma} \to \dsp{\tau}.
\end{align*}
Notice that the definition of $\dsp{\sigma \to \tau}$ stipulates that the distance between programs of type 
$\typeone \to \typetwo$ is not \emph{just} a number, 
 but a function $d\termone: \values_\typeone \times \dsp{\typeone} \to \dsp{\typetwo}$ 
 that morally maps a value $\valone$ and an error/perturbation $dv$ to an error/perturbation $d\termone(\valone, dv)$. 

As an example, consider the basic type $\mathbf{R}$ of real numbers 
(we employ the same notation as Example~\ref{ex:measurable-semantics}), and
two values $\lambda x.\termone$ and $\lambda x.\termtwo$ of type
$\mathbf{R} \to \mathbf{R}$. As expected, we define $\dsp{\mathbf{R}}$ as $[0,\infty]$, 
so that elements in $\dsp{\mathbf{R} \to \mathbf{R}}$ are functions
$d\termone: \values_{\mathbf{R}} \times [0,\infty] \to [0,\infty]$. 
Now, not all such functions qualify as valid distances between $\lambda x.\termone$ and 
$\lambda x.\termtwo$, and the purpose of a
differential logical relation is precisely to tell us when 
$d\termone$ indeed acts as a valid distance between $\lambda x.\termone$ and $\lambda x.\termtwo$. 
Accordingly, DLRs are defined as type-indexed ternary relations 
$D_\sigma^{\values} \subseteq \values_\sigma \times \dsp{\sigma} \times \values_{\sigma}$, 
$D_\sigma^{\Lambda} \subseteq \Lambda_\sigma \times \dsp{\sigma} \times \Lambda_{\sigma}$ 
following the rationale that related inputs give related outputs. But what does that mean in 
a differential setting? 

On base types, we need to fix the defining conditions of a DLR in advance. For instance, a natural choice 
in our example is to stipulate 
$$
D^{\values}_{\mathbf{R}}(c_r, \varepsilon, c_{r'}) \iff \varepsilon \geq |r - r'|.
$$
The heart of DLRs is given by the inductive cases, and in particular by the one of function types.
\begin{align*}
    D^{\values}_{\sigma_1 \times \sigma_2}(\lan \valone_1, \valone_2\ran, \lan d\valone_1, d\valone_2\ran, \lan w_1, w_2\ran) 
    & \iff\forall i \in \{1,2\}.\ D^{\values}_{\sigma_i}(\valone_i, d\valone_i, w_i)
    \\
    D^{\values}_{\sigma \to \tau}(\lambda x.t, dt, \lambda x.s) &\iff 
    \forall v,dv,w. (D^{\values}_{\sigma}(v, dv, w) 
    \\
    &\qquad \qquad \implies D^{\Lambda}_{\tau}(t[v/x], dt(v,dv), s[w/x]))
    \\
    D^{\Lambda}_{\sigma}(t,dt,s) &\iff  D^{\values}_{\sigma}(\sem{t}, dt, \sem{s}) 
    \end{align*}
We see that
$d\termone$ is a valid distance between the aforementioned values $\lambda x.\termone$ and $\lambda x.\termtwo$ 
if whenever we have 
two values $\valone, \valtwo$ at distance $d\valone \in [0,\infty]$, then 
$\termone[\valone/x]$ and $\termtwo[\valtwo/x]$ are at distance $d\termone(\valone, d\valone)$. 
If, for instance, we take\footnote{For readability, we use traditional $\lambda$-calculus notation 
rather than the fine-grained one, hence writing, e.g., $\lambda x.v$ in place of 
$\return{(\lambda x.\return{x})}$ and $\lambda x.\return{x}$.} $\lambda x.x$ and $\lambda x. c_2 * x$, then we see that the map 
$dt(c_r, \varepsilon) = |2 \varepsilon - r|$ gives a valid distance between 
$\lambda x.x$ and $\lambda x. c_2 * x$. 

Notice that in DLRs distances are highly interactive and input-dependent, this way going quite far 
from traditional logical relations. An example of that is witnessed by the observation that 
programs usually have non-null self-distances. The identity combinator 
$\lambda x.x$, for instance, has self-distance $dI(v,dv) = dv$, meaning that $\lambda x.x$ 
propagates distances between its inputs to its outputs.\footnote{The reader familiar with 
program metric may have noticed some similarities between self-distances and program sensitivity~\cite{Pierce/DistanceMakesTypesGrowStronger/2010}. There is indeed a link, as self-distances 
can be seen as notions of higher-order sensitivity~\cite{DalLagoG22}.} 
Nevertheless, the meta-theory of DLRs has also several similarities with the theory of traditional 
logical relations. The so-called fundamental lemma of DLR~\cite{DBLP:conf/icalp/LagoGY19}, for example,
states that terms always have a valid self-distance and thus somehow resembles the (traditional) fundamental lemma 
of logical relations, the latter stating that any program is logically related with itself. 
Besides, both these fundamental lemmas are key to compositional reasoning about programs, 
and thus play the same role in their respective frameworks.


The just sketched theory of DLRs is rather syntactic and not very well-suited neither for meta-theoretical developments 
nor for extensions to richer languages, \emph{in primis} those involving computational effects. 
For those reasons, \emph{generalised distance spaces} \cite{DBLP:conf/icalp/LagoGY19,DalLagoG22} have been introduced 
as a semantic counterpart of DLRs with categorical operations between them, such as product and exponential, replacing 
the inductive constructions used for type constructs in the definition of a DLR.

A generalised distance space is defined as a relational structure 
$(X, \dsp X, \delta_X)$ with $\delta_X \subseteq X \times \dsp X \times X$ acting as the 
semantic counterpart 
of a DLR. Such spaces form a category with morphisms between $(X, \dsp X, \delta_X)$ and $(Y, \dsp Y, \delta_Y)$ given by 
pairs of functions $(d, df)$, where $f: X \to Y$ and $df: X \times \dsp X \to \dsp Y$, such that 
$\delta_X(x, dx, x') \implies \delta_Y(f(x), df(x,dx), f(x'))$. 
Given this definition of arrows between generalised distance spaces, it is straightforward 
to define products and exponentials between such spaces. 
This way, we indeed see that generalised distance spaces give a semantic counterpart of DLRs. 
Furthermore, generalised distance spaces ease the understanding of effectful extensions of DLRs. 
In fact, once we have fixed a monad  $\mathbb{T}$ on the category of sets and functions, we simply need to 
\emph{lift} $\mathbb{T}$ (in the categorical sense) to the category of generalised distance spaces. 

In \cite{DalLagoG22} this strategy has been put into practice to define DLRs on languages with algebraic effects. 
The resulting notion --- called a \emph{differential extension} --- however, is not quite a monadic lifting (not apparently, at least),
as one needs to relax the latter notion to cover interesting concrete examples. Even if theoretically unsatisfactory, 
the notion of a differential extension seems interesting in itself, as one can extend the well-known construction 
of a relational extension by Barr \cite{Barr/LMM/1970} to differential extensions, and the result thus obtained has a canonical flavour. 

All of that suggests that differential extensions may be instances of a more general and structural construction, and that 
their associated notion of a Barr-like extension (also called coupling-based extension) is indeed canonical in some suitable sense. These problems have been 
left open in \cite{DalLagoG22}.
Here, we show how our general notion of an operational logical relation 
subsumes the one of a DLR (and, consequently, that the fundamental lemma of DLRs is an instance of \cref{fundamental-lemma}). 
Additionally, we show how differential 
extensions are precisely liftings of monads to the fibration of differential relations and how the so-called coupling-based 
differential extension \cite{DalLagoG22} 
is an instance of a general monadic lifting to such a fibration, viz.  
the well-known Barr lifting properly fitted to a differential setting.


\subsection{Going Differential, Fibrationally}

Let $\fun\fib\E\B$ be a fibration where \B has finite products. 
We define the category $\GDSC\fib$ of \emph{differential relations} in $\fib$ as follows: 
\begin{description}
\item[objects]
are triples $\dX = \ple{\dun\dX,\dsp\dX,\dist\dX}$, where 
$\dun\dX$ and $\dsp\dX$ are objects in \B and $\dist\dX$ is an object in the fibre $\fibre\E{\dun\dX\times\dsp\dX\times\dun\dX}$; 
\item[arrows] 
\fun{\darf}{\dX}{\dY} are triples $\darf = \ple{\dun\darf,\Der\darf,\dpr\darf}$, where 
\fun{\dun\darf}{\dun\dX}{\dun\dY} and \fun{\Der\darf}{\dun\dX\times\dsp\dX}{\dun\dY\times\dsp\dY} are arrows in \B such that $\pi_1\circ\Der\darf = \dun\darf\circ\pi_1$, and 
\fun{\dpr\darf}{\dist\dX}{\dist\dY} is an arrow in \E over $\Der\darf\times\dun\darf$; 
\item[composition] 
of arrows \fun{\darf}{\dX}{\dY} and \fun{\darg}{\dY}{\dZ} is defined componentwise, that is, 
$\dun{\darg\circ\darf} = \dun\darg \circ \dun\darf$, 
$\Der(\darg\circ\darf) = \Der\darg\circ\Der\darf$, and 
$\dpr{\darg\circ\darf} = \dpr\darg\circ\dpr\darf$; 
\item[identity] 
on $\dX$ is given by 
$\id_\dX = \ple{\id_{\dun\dX}, \id_{\dun\dX\times\dsp\dX}, \id_{\dist\dX}}$. 
\end{description}
It is immediate to see that $\GDSC\fib$ is a category. 
Moreover, we have three functors from $\GDSC\fib$ to $\B$: 
\begin{itemize}
\item a functor \fun{\GDS\fib}{\GDSC\fib}{\B} mapping $\dX$ to $\dun\dX$ and $\darf$ to $\dun\darf$; 
\item a functor \fun\Der{\GDSC\fib}{\B} mapping $\dX$ to $\dun\dX\times\dsp\dX$ and $\darf$ to $\Der\darf$; 
\item a functor \fun\nabla{\GDSC\fib}{\B} mapping $\dX$ to $\Der\dX\times\dun\dX$ and $\darf$ to $\Der\darf\times\dun\darf$. 
\end{itemize}

\begin{rem}
Given an arrow \fun\darf\dX\dY\ in $\GDSC\fib$, 
its second component \fun{\Der\darf}{\Der\dX}{\Der\dY}, by the universal property of the product $\Der\dY$,  has shape $\iple{\pi_1\Der\darf,\pi_2\Der\darf}$ and, since $\pi_1\Der\darf = \dun\darf\pi_1$, we get $\Der\darf = \iple{\dun\darf\pi_1, \pi_2\Der\darf}$. 
We call the arrow \fun{\der\darf = \pi_2\Der\darf}{\dun\dX\times\dsp\dX}{\dsp\dY} the \emph{derivative} of $\darf$
and note that, if \fun\darg\dY\dZ\ is another arrow in $\GDSC\fib$, we have 
$\der(\darg\circ\darf) = \der\darg \circ \iple{\dun\darf\pi_1,\der\darf} = \der\darg \circ \Der\darf$. 
Then, the arrow $\darf$ can be equivalently specified by replacing $\Der\darf$ with $\der\darf$. 
This alternative characterisation matches exactly the notion of morphism of generalised distance spaces \cite{DalLagoG22}, but we prefer using the other one, as it is lighter to manage. 
\end{rem}

\begin{exa}\label{ex:gds}
For the fibration \fun{\Sub[\Set]}{\SubC\Set}{\Set} of \cref{ex:subsets}, the category $\GDSC{\Sub[\Set]}$ is the category of generalised distance spaces \cite{DalLagoG22}. 
An object in $\GDSC{\Sub[\Set]}$ is essentially a triple \ple{X,V,R} consisting of a set $X$ of points, a set $V$ of distance values, and a ternary relation $R\subseteq X\times V \times X$ specifying at which distance two elements of $X$ are related, that is, 
$\ple{x,v,y} \in R$ means that $x$ and $y$ are related at distance $v$ or up to an error $v$. 
For instance, a metric \fun{d}{X\times X}{[0,\infty]} on $X$ can be seen as a ternary relation $R_d \subseteq X\times [0,\infty]\times X$ defined by 
$\ple{x,v,y}\in R_d$ iff $d(x,y)\le v$. 
An arrow from \ple{X,V,R} to \ple{Y,U,S} in $\GDSC{\Sub[\Set]}$ consists of a function 
\fun{\dun f}{X}{Y}, transforming points, together with a (parametric) function 
\fun{\der f}{X\times V}{U}, transforming distance values, such that, 
for all $x,y\in X$ and $v\in V$, 
$\ple{x,v,y}\in R$  implies $\ple{\dun f (x), \der f(x,v), \dun f(y)}\in S$. 
\end{exa}

Our goal is to show that the functor \fun{\GDS\fib}{\GDSC\fib}{\B} is a fibration, named the \emph{fibration of differential relations} in $\fib$. 
To this end, we first recognise when an arrow in $\GDSC\fib$ is cartesian in $\GDS\fib$. 

\begin{lem}\label{prop:gds-car}
An arrow \fun\darf\dX\dY\ in $\GDSC\fib$ is cartesian in $\GDS\fib$ if 
the square $\dun\darf \pi_1 = \pi_1\Der\darf$ is a pullback in \B and $\dpr\darf$ is cartesian in $\fib$. 
\end{lem}
\begin{proof}
Consider an arrow \fun\darg\dZ\dY in $\GDSC\fib$ such that $\dun\darg = \dun\darf \circ u$ for some arrow \fun{u}{\dun\dZ}{\dun\dX}. 
By definition, $\darf$ is cartesian iff 
there is a unique arrow \fun{\darh}{\dZ}{\dX} such that $\dun\darh = u$ and $\darg = \darf\circ \darh$, that is, iff 
there are unique arrows \fun{\Der\darh}{\Der\dZ}{\Der\dX} in \B and \fun{\dpr\darh}{\dist\dZ}{\dist\dX} in \E such that 
$\dun\darh \pi_1 = \pi_1\Der\darh$ and $\fib(\dpr\darh) = \Der\darh\times\dun\darh$, and 
$\Der\darg = \Der\darf\circ\Der\darh$ and $\dpr\darg = \dpr\darf\circ\dpr\darh$. 
We take $\Der\darh$ as the unique arrow making the following diagram commute, which exists thanks to the pullback property. 
\[\xymatrix@C=5ex@R=3ex{
\Der\dZ \ar@{..>}[dr]_-{\Der\darh} \ar[drr]^-{\Der\darg} \ar[dd]_-{\pi_1} \\
& \Der\dX \ar[dd]_{\pi_1} \ar[r]_-{\Der\darf} & \Der\dY \ar[dd]^-{\pi_1} \\ 
\dun\dZ \ar[dr]_-{u} \ar[drr]^{\dun\darg} \\
& \dun\dX \ar[r]_-{\dun\darf} & \dun\dY 
}\]
Then, $\dpr\darh$ is the unique arrow over $\Der\darh\times u$ such that $\dpr\darg = \dpr\darf\circ\dpr\darh$, which exists because $\dpr\darf$ is cartesian over $\Der\darf\times\dun\darf$ and $\Der\darg\times\dun\darg = (\Der\darf\times \dun\darf)\circ(\Der\darh\times u)$ holds. 
\end{proof}

\begin{prop}\label{cor:gds-fib}
Let \fun\fib\E\B be a fibration where \B has finite products. 
Then, the functor \fun{\GDS\fib}{\GDSC\fib}{\B} is a fibration. 
\end{prop}
\begin{proof}
Define the cartesian lifting along an arrow \fun{u}{I}{J} of an object $\dY$ with $\dun\dY = J$ as the arrow 
\fun{\liftar{u}{\dY}}{\lift{u}{\dY}}{\dY} where 
$\dun{\liftar{u}{\dY}} = u$, 
$\Der\liftar{u}{\dY} = u\times \id_{\dsp\dY}$, and 
$\dpr{\liftar{u}{\dY}} = \liftar{u\times\id_{\dsp\dY}\times u}{\dist\dY}$, and where 
$\dun{\lift{u}{\dY}} = I$, 
$\dsp{\lift{u}{\dY}} = \dsp\dY$, and 
$\dist{\lift{u}{\dY}} = \lift{(u\times\id_{\dsp\dY}\times u)}{\dist\dY}$. 
\end{proof}

\begin{rem}\label{rem:simple}
The fibration $\GDS\fib$ 
can be also obtained from the simple fibration \fun{\sfib\B}{\simp\B}{\B} (\cf \cref{ex:simp-fib}), 
Indeed, we have that  $\GDS\fib = \sfib\B\circ\fib'$ where 
\fun{\fib'}{\GDSC\fib}{\simp\B} is obtained by the pullback 
\(
\vcenter{\xymatrix{
\GDSC\fib \ar[d]_{\fib'} \ar[r] \ar@{}[dr]|{p.b.} & \E \ar[d]^{\fib} \\ 
\simp\B \ar[r]^{\nabla_s} & \B 
}}\),
with the functor \fun{\nabla_s}{\simp\B}{\B} mapping \ple{I,X} to $I\times X \times I$ and \ple{u,f} to \iple{u\pi_1, f\iple{\pi_1,\pi_2}, u\pi_3}. 
Therefore, we have $\fib'(\dX) = \ple{\dun\dX,\dsp\dX}$ and $\fib'(\darf) = \ple{\dun\darf,\der\darf}$. 
This is somewhat similar to the simple coproduct completion of a fibration \cite{Hofstra11}.  
We leave a precise comparison 
between these constructions for future work. 
\end{rem}

Relying on this characterisation of the fibration of differential relations, 
we can recognise under which conditions it is a logical fibration. 

\begin{prop}\label{prop:gds-lf}
Let \fun\fib\E\B be a fibration. 
\begin{enumerate}
\item\label{prop:gds-lf:1} If $\fib$ has finite products, then $\GDS\fib$ has finite products as well. 
\item\label{prop:gds-lf:2} If $\fib$  is a logical fibration and \B is cartesian closed, then $\GDS\fib$ is a logical fibration. 
\end{enumerate}
\end{prop}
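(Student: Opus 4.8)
The plan is to exploit the characterisation of \cref{rem:simple}, namely that $\GDS\fib = \sfib\B \circ \fib'$, where \fun{\fib'}{\GDSC\fib}{\simp\B} is the change of base of $\fib$ along \fun{\nabla_s}{\simp\B}{\B}, and to reduce both claims to closure properties of fibrations under change of base and composition. The crucial preliminary observation, used throughout, is that $\nabla_s$ preserves finite products: indeed $\nabla_s\ple{1,1} \cong 1$ and $\nabla_s(\ple{I,X} \times \ple{J,Y}) \cong \nabla_s\ple{I,X} \times \nabla_s\ple{J,Y}$ via the evident reassociation of $(I \times J) \times (X \times Y) \times (I \times J)$, and these comparison isomorphisms send the projections of $\simp\B$ to the projections of $\B$.

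For \cref{prop:gds-lf:1}, I would first note that $\simp\B$ inherits finite products from $\B$, computed as $\ple{I \times J, X \times Y}$ with terminal $\ple{1,1}$, so that $\sfib\B$ preserves them and hence has finite products. Since $\nabla_s$ preserves finite products and $\fib$ has finite products, the change of base $\fib'$ again has finite products, computed componentwise on the three constituents of an object; in particular $\GDSC\fib$ has finite products and $\fib'$ preserves them. As $\GDS\fib = \sfib\B \circ \fib'$ is then a composite of product-preserving functors between categories with finite products, it preserves finite products, and hence has finite products.

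For \cref{prop:gds-lf:2}, since $\B$ is cartesian closed, \cref{prop:simp-lf} gives that $\sfib\B$ is a logical fibration; by \cref{prop:lf-comp} it therefore suffices to prove that $\fib'$ is a logical fibration, so that the composite $\GDS\fib = \sfib\B \circ \fib'$ is logical. To this end I would invoke the general principle that the change of base of a logical fibration along a finite-product-preserving functor is again logical. Finite products are handled exactly as above. Fibred cartesian closure is immediate: the fibre of $\fib'$ over an object $S$ of $\simp\B$ is isomorphic to the fibre of $\fib$ over $\nabla_s S$, and reindexing in $\fib'$ along an arrow $g$ is reindexing in $\fib$ along $\nabla_s g$; hence exponentials and their preservation under reindexing transfer verbatim from $\fib$.

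The main obstacle is universal quantifiers, which must be transported from the projections of $\B$ to those of $\simp\B$. Here the preliminary observation is decisive: since $\nabla_s$ preserves products, it sends each projection \fun{\pi}{\ple{I,X} \times \ple{J,Y}}{\ple{I,X}} of $\simp\B$ to the composite, with a reassociation isomorphism, of a genuine projection in $\B$. Consequently reindexing along $\pi$ in $\fib'$ is, up to this isomorphism, reindexing along a projection in $\fib$, whose right adjoint exists by hypothesis; transporting it back along the fibre identifications yields the required universal quantifier for $\fib'$. The Beck--Chevalley condition transfers along the same reasoning, because the substitution squares witnessing it in $\simp\B$ are product-projection squares, which $\nabla_s$ maps to product-projection squares in $\B$, where $\fib$ satisfies Beck--Chevalley. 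The only delicate point is checking that the product comparison isomorphisms of $\nabla_s$ are coherent enough that the transported adjunction and Beck--Chevalley condition genuinely hold in $\fib'$; this is routine, as all isomorphisms involved are the canonical ones.
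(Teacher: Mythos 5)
Your proposal is correct and follows essentially the same route as the paper: both decompose $\GDS\fib$ as $\sfib\B\circ\fib'$ via \cref{rem:simple}, obtain the structure on $\fib'$ from the fact that $\nabla_s$ preserves finite products (so change of base preserves finite products, respectively logical structure), get it on $\sfib\B$ from \cref{prop:simp-lf}, and conclude by closure under composition (\cref{prop:lf-comp}). The paper's proof is simply terser, leaving implicit the change-of-base transfer of fibred exponentials, quantifiers and Beck--Chevalley that you spell out.
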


In order to prove the above result, we need an equivalent characterisation of logical fibrations, 
when the base category is carteisan closed. 

\begin{prop}\label{prop:lf-equiv}
Let \fun\fib\E\B be a fibration where \B is cartesian closed. 
Then the following are equivalent:
\begin{itemize}
\item $\fib$ is a logical fibration; 
\item $\fib$ is fibred cartesian closed, \E is cartesian closed and $\fib$ preserves finite products and exponentials. 
\end{itemize}
\end{prop}

This follows  immediately from \cite[Proposition 9.2.4]{Jacobs01}. 
We can now prove our result. 

\begin{proof}[Proof of \cref{prop:gds-lf}] 
By \cref{rem:simple}, we know that $\GDS\fib$ is the composite $\sfib\B\circ\fib'$, where $\fib'$ is obtained by pulling back $\fib$ along \fun{\nabla_s}{\simp\B}{\B}, where $\nabla_s$ preserves products. 

For \cref{prop:gds-lf:1}, since $\simp\B$ has finite products and $\fib$ has finite products, $\fib'$ has finite products as well. 
Then, since \B has finite products, $\sfib\B$ has finite products, thus the composition $\sfib\B\circ\fib'$ has finite products too. 

For \cref{prop:gds-lf:2}, since $\fib$ is a logical fibration and $\nabla_s$ preserves finite products, $\fib'$ is a logical fibration as well. 
Since $\B$ is cartesian closed, by \cref{prop:simp-lf} $\sfib\B$ is a logical fibration.
Therefore, by \cref{prop:lf-equiv}, both $\simp\B$ and $\GDSC\fib$ are cartesian closed categories, and both $\sfib\B$ and $\fib'$ preserve products and exponentials, thus the composition $\sfib\B\circ\fib'$ preserves products and exponentials as well. 
Hence, it remains to show that $\fib$ is fibred cartesian closed. 

To this end, let us fix an object $I$ in $\B$ and consider the functor 
\fun{\fib'_I}{\fibre{\GDSC\fib}}{\fibre{\simp\B}{I}} obtained by restricting and corestricting $\fib'$. 
It is easy to see that $\fib'_I$ is a fibration that is fibred cartesian closed, as fibres and reindexing are those of $\fib'$. 
More precisely, the fibre over \ple{I,X} in $\fib'_I$ is essentially the fibre over $I\times X \times I$ in $\fib$ and 
reindexing along  an arrow \ple{\id_I,f} is given by rendixing in $\fib$ along the arrow $\nabla_s\ple{\id_I,f} = \iple{\id_I,f}\times \id_I$. 
Therefore, given a projection \fun{\ple{\id_I,\pi_2}}{\ple{I,X\times Y}}{\ple{I,X}}, reinexing along it is reindexing along 
\fun{\iple{\pi_1,\pi_2,\pi_4}}{I\times X \times Y\times I}{I\times X \times I} in $\fib$, which has a right adjoint because $\fib$ has universal quantifiers. 
Furthermore, this right adjoint satisfies the Beck-Chevalley condition, as $\nabla_s$ preserves pullbacks along projections. 
This shows that $\fib'_I$ is a logical fibration and, since $\fibre{\simp\B}{I}$ is cartesian closed as it is a fibre of $\sfib\B$ which is fibred cartesian closed, 
by \cref{prop:lf-equiv}, we get that $\fibre{\GDSC\fib}{I}$ is cartesian closed as well. 

To conclude, we have to prove that the cartesian closed structure is preserved by reindexing in $\GDS\fib = \sfib\B\circ \fib'$. 
Let \fun{u}{I}{J} be an arrow in $\B$. 
Since the cartesian closed structure in the fibres $\fibre{\GDSC\fib}{I}$ and $\fibre{\GDSC\fib}{J}$ is built from the logical structure of $\fib'_I$ and $\fib'_J$ and the cartesian closed structure of $\fibre{\simp\B}{I}$ and $\fibre{\simp\B}{J}$, to get our result it suffices to give  a 1-arrow 
\oneAr{\ple{F_u,F'_u}}{\fib'_J}{\fib'_I} where $F_u$ preserves the cartesian closed structure and $F'_u$ preserves the logical structure as well as cartesian arrows. 
We let \fun{F_u}{\fibre{\simp\B}{J}}{\fibre{\simp\B}{I}} be the reindexing functor along $u$ in $\sfib\B$, which maps \ple{J,X} to \ple{I,X}. 
Since $\sfib\B$ is fibred cartesian closed, $F_u$ preserves the cartesian closed structure. 
The functor \fun{F'_u}{\fibre{\GDSC\fib}{J}}{\fibre{\GDSC\fib}{I}} is the reindexing functor along $u$ in $\GDS\fib = \sfib\B\circ \fib'$, 
thus, it follows that $\fib'_I \circ F'_u = F_u\circ \fib'_J$ and $F'_u$ preserves cartesian arrow (see \cite{Streicher18} for details). 
More explicitly, we have 
$F'_u \dX = \ple{I, \dsp\dX, \lift{u\times\id_{\dsp\dX}\times u}\dist{\dX}}$, for all $\dX$ in $\fibre{\GDSC\fib}{J}$. 
Therefore, the restriction of $F'_u$ to the fibre of $\fib'_J$ over \ple{J,X} is essentially given by reindexing along $u\times\id_X\times u$ in $\fib$, thus it preserves the cartesian closed structure of the fibres, as $\fib$ is fibred cartesian closed. 
Finally, $F'_u$ preserves universal quantifiers thanks to the Beck-Chevalley condition in $\fib$, since the following diagram is a pullback.

  \begin{minipage}[b]{\linewidth-2cm}
  \begin{align*}
  \xymatrix@C=12ex{
  I\times X \times Y \times I 
    \ar[r]^-{u\times \id_{X\times Y} \times u}
    \ar[d]_-{\iple{\pi_1,\pi_2,\pi_4}} 
& J\times X \times Y \times J 
    \ar[d]^-{\iple{\pi_1,\pi_2,\pi_4}} 
\\ 
  I\times X \times I 
    \ar[r]^-{u\times \id_X\times u}
& J \times X \times J 
  } 
  \end{align*}
  \end{minipage}
\end{proof}

We report below the explicit definition of the logical structure on $\GDS\fib$. 
 In the following, for $I,J$ objects in the base category of $\fib$, we will write $\expobj{I}{J}$ for their exponential and 
$\fun{\evnt^I_J}{\expobj{I}{J}\times I}{J}$ for the associated evaluation arrow.  
Let $\dX,\dY,\dZ$ be objects in $\GDSC\fib$ with $\dun\dX = \dun\dY = I$ and $\dun\dZ = I\times J$, then we have:
\begin{align*}
&\what\top_I \eqdef \ple{I, \terobj, \top_{I\times\terobj\times I}} \\ 
&\dX\what\land_I\dY \eqdef \ple{I, \dsp\dX\times\dsp\dY, \lift{\iple{\pi_1,\pi_2,\pi_4}}{\dist\dX}\land\lift{\iple{\pi_1,\pi_3,\pi_4}}{\dist\dY}} 
  \\ 
&\dX \what\impl_I \dY \eqdef \ple{I, \expobj{\dsp\dX}{\dsp\dY}, 
    \uqn{\iple{\pi_1,\pi_2,\pi_4}} ( \lift{\iple{\pi_1,\pi_3,\pi_4}}{\dist\dX} \impl\lift{\iple{\pi_1,\evnt^{\dsp\dX}_{\dsp\dY}\iple{\pi_2,\pi_3}, \pi_4}}{\dist\dY} )}  \\ 
&\what{\uq I J}\dZ \eqdef \ple{I, \expobj{J}{\dsp\dZ}, 
    \uqn{\iple{\pi_1,\pi_3,\pi_5}} ( \lift{\iple{\pi_1,\pi_2,\evnt^J_{\dsp\dZ}\iple{\pi_3,\pi_2},\pi_4,\pi_5}}{\dist\dZ} )}.
\end{align*}

\begin{exa}
Let us consider the fibration \fun{\GDS{\Sub[\Set]}}{\GDSC{\Sub[\Set]}}{\Set} of set-theoretic differential relations 
and instantiate the above constructions in this case. 
The terminal object over a set $X$ 
is $\what\top_X = \ple{X, 1, X\times 1 \times X}$; that is, $\what\top_X$ has just one distance value and all elements of $X$ are related at that value.
Consider now objects $\dX = \ple{X,V,R}$ and $\dY = \ple{X,U,S}$ in $\GDSC{\Sub[\Set]}$ over the same set $X$.  
Then, their product and exponential over $X$ are given by  
$\dX\what\land\dY = \ple{X, V\times U, R\sqcap S}$ and 
$\dX\what\impl\dY = \ple{X,\expobj{V}{U}, R\to S}$, where 
\begin{itemize}
\item $\ple{x,\ple{v,u},y} \in R\sqcap S$ iff $\ple{x,v,y} \in R$ and $\ple{x,u,y}\in S$ and 
\item $\ple{x,f,y} \in R\to S$ iff, for all $v\in V$, $\ple{x,v,y}\in R$ implies $\ple{x, f(v), y} \in S$. 
\end{itemize}
That is, two elements $x$ and $y$ are related in $R\sqcap S$ at a distance \ple{v,u} if and only if 
$x$ and $y$ are at distance $v$ in $R$ and $u$ in $S$.
By contrast, $x$ and $y$ are related in $R\to S$ at a distance $f$, which transforms distances in $V$ into distances in $U$, 
iff, whenever $x$ and $y$ are at distance $v$ in $R$, they are at distance $f(v)$ in $S$. 
 
Finally, if $\dZ = \ple{X\times Y, V,R}$ is an object in $\GDSC{\Sub[\Set]]}$ over the set $X\times Y$, 
then the universal quantifier on $Y$ is given by 
$\what{\uq X Y}\dZ = \ple{X, \expobj{Y}{V}, Y\to R}$, where 
$\ple{x,f,x'}\in Y\to R$, iff for all $y,y' \in Y$, $\ple{\ple{x,y},f(y),\ple{x',y'}} \in R$. 
That is, elements $x$ and $x'$ are related by $Y\to R$ at a distance $f$, which returns for every element of $Y$ a distance in $V$, if and only if $R$ relates \ple{x,y} and \ple{x',y'} at distance $f(y)$, for each $y,y' \in Y$. 
\end{exa}

We now extend  the construction of the fibration of differential relations  to 1- and 2-arrows. 
To do so, we work with bifibrations whose bases have finite products. 
 This restriction to bifibration is not specific to our differential setting.  
Indeed, besides the differential flavour, our construction turns fibrations of (unary) predicates into fibrations of (ternary) relations, viewed as predicates over a product. 
Then, in order to accordingly transform 1-arrows, we need these assumptions to avoid requiring 1-arrows to preserve products in the base, which is a quite strong condition, especially for monads. 

Let \fun\fib\E\B and \fun\fibq\F\CC be bifibrations where \B and \CC have finite products and 
let \fun{K}{\B}{\CC} be a functor. 
For every object $\dX$ in $\GDSC\fib$, we can consider the arrow 
\fun{\projar{K}{\dX} = \iple{K\pi_1,K\iple{\pi_1,\pi_2},K\pi_3}}{K(\dun\dX\times\dsp\dX\times\dun\dX)}{K\dun\dX\times K(\dun\dX\times\dsp\dX)\times K\dun\dX} 
in \CC splitting $K(\nabla\dX)$ into three components using the images along $K$ of certain projections. 
Using cocartesian liftings of such arrows, we can split arrows in \F whose image along $\fibq$ connects two domains of such splitting arrows. 
 More precisely, 
given functors \fun{K,K'}{\B}{\CC} and objects $\dX,\dY$ of $\GDSC\fib$, 
for every arrow $\alpha$ in \F and every arrow  $\beta$ in $\CC$ such that the diagram on the right below commutes, 
we have a unique arrow $\pick[\beta]{\alpha}$ in \F above $\beta$ making the diagram on the left below commute as well.  
\[
\vcenter{\xymatrix@C=9ex{
  A \ar[d]_-{\alpha} 
    \ar[r]^-{\coliftar{\projar{K}{\dX}}{A}} 
& \colift{\projar{K}{\dX}}{A} 
    \ar@{..>}[d]^-{\pick[\beta]{\alpha}} 
\\ 
  B \ar[r]_-{\coliftar{\projar{K'}{\dY}}{B}} 
& \colift{\projar{K'}{\dY}}{B}
}}  \quad 
\vcenter{\xymatrix{
  K(\nabla\dX)
    \ar[d]_{\fibq(\alpha)}  
    \ar[r]^-{\projar{K}{\dX}} 
& K\dun\dX\times K(\Der\dX) \times K\dun\dX 
    \ar[d]^-{\beta} 
\\
  K'(\nabla\dY)
    \ar[r]_-{\projar{K'}{\dY}} 
& K'\dun\dY \times K'(\Der\dY) \times K'\dun\dY 
}}
\]
 Therefore, 
the diagram on the left lies above the one on the right through the bifibration $\fibq$ and 
the arrow $\pick[\beta]{\alpha}$ is well-defined and uniquely determined thanks to the universal property of the cocartesian arrow  $\coliftar{\projar{K}{\dX}}{A}$. 
In the following, to lighten the notation, we omit the superscript of $\pick[\beta]{\alpha}$ when it is clear from the context.  

The operator $\pick{\blank}$ has the following key property. 

\begin{prop}\label{prop:pick}
$\pick[\id]{\id} = \id$ and $\pick[\beta\circ\beta']{\alpha\circ\alpha'} = \pick[\beta]{\alpha'}\circ\pick[\beta']{\alpha'}$. 
\end{prop}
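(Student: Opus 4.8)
The plan is to read off both equations directly from the universal property that \emph{defines} $\pick{\blank}$. By construction, given an arrow $\alpha$ above $\fibq(\alpha)$ together with a base arrow $\beta$ making the right-hand square commute, $\pick[\beta]{\alpha}$ is the \emph{unique} morphism lying above $\beta$ that satisfies the defining equation
\[
\pick[\beta]{\alpha}\circ\coliftar{\projar{K}{\dX}}{A} = \coliftar{\projar{K'}{\dY}}{B}\circ\alpha ,
\]
this uniqueness being precisely the universal property of the cocartesian arrow $\coliftar{\projar{K}{\dX}}{A}$. Consequently, to establish either identity it suffices to exhibit a morphism above the expected base arrow that satisfies the relevant defining equation, and then to appeal to uniqueness.

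For the first identity, I would observe that when $\alpha = \id$ and $\beta = \id$ the source and target data coincide, so the defining equation asks for the unique arrow $g$ above $\id$ satisfying $g\circ\coliftar{\projar{K}{\dX}}{A} = \coliftar{\projar{K}{\dX}}{A}$. The identity $\id_{\colift{\projar{K}{\dX}}{A}}$ manifestly satisfies this and lies above $\id$, whence $\pick[\id]{\id} = \id$ by uniqueness.

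For the second identity, I would first check that the composite data are well formed: writing the commuting base squares $\beta'\circ\projar{K}{\dX} = \projar{K'}{\dY}\circ\fibq(\alpha')$ and $\beta\circ\projar{K'}{\dY} = \projar{K''}{\dZ}\circ\fibq(\alpha)$ for $\alpha'$ and $\alpha$ and composing them yields $\beta\circ\beta'\circ\projar{K}{\dX} = \projar{K''}{\dZ}\circ\fibq(\alpha\circ\alpha')$, so $\pick[\beta\circ\beta']{\alpha\circ\alpha'}$ is defined and lies above $\beta\circ\beta'$, as does the composite $\pick[\beta]{\alpha}\circ\pick[\beta']{\alpha'}$ (being a composite over $\beta$ and over $\beta'$). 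It then remains only to verify that this composite satisfies the defining equation of $\pick[\beta\circ\beta']{\alpha\circ\alpha'}$. This is a routine reassociation: precomposing $\pick[\beta]{\alpha}\circ\pick[\beta']{\alpha'}$ with $\coliftar{\projar{K}{\dX}}{A'}$, one first uses the defining equation of $\pick[\beta']{\alpha'}$ to rewrite the inner factor as $\coliftar{\projar{K'}{\dY}}{A}\circ\alpha'$, then the defining equation of $\pick[\beta]{\alpha}$ to rewrite $\pick[\beta]{\alpha}\circ\coliftar{\projar{K'}{\dY}}{A}$ as $\coliftar{\projar{K''}{\dZ}}{B}\circ\alpha$, and finally reassociates to obtain $\coliftar{\projar{K''}{\dZ}}{B}\circ(\alpha\circ\alpha')$. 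Uniqueness of the factorisation through $\coliftar{\projar{K}{\dX}}{A'}$ then gives $\pick[\beta]{\alpha}\circ\pick[\beta']{\alpha'} = \pick[\beta\circ\beta']{\alpha\circ\alpha'}$. The only real care needed is bookkeeping --- tracking the three layers of functors and objects and confirming that every composite lies over the intended base arrow --- since there is no genuine obstacle beyond the universal property of cocartesian liftings itself.
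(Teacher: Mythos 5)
Your proof is correct and follows exactly the paper's argument: both equations are read off from the uniqueness clause in the universal property of the cocartesian arrow $\coliftar{\projar{K}{\dX}}{A}$, by exhibiting $\id$ (resp.\ the composite $\pick[\beta]{\alpha}\circ\pick[\beta']{\alpha'}$) as an arrow over the expected base arrow satisfying the defining equation of $\pick[\id]{\id}$ (resp.\ of $\pick[\beta\circ\beta']{\alpha\circ\alpha'}$). The paper compresses this into two lines; your version merely spells out the bookkeeping, including the harmless observation that the composed base squares paste correctly.
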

\begin{proof}
This is immediate by the universal property of cocartesian arrows. 
Indeed, $\id$ and $\pick[\beta]{\alpha}\circ\pick[\beta']{\alpha'}$ make the diagram on the left above commute when $\id$ and $\alpha\circ\alpha'$ replace the leftmost arrow in the same diagram, respectively. 
\end{proof}

Let us now consider a 1-arrow 
\oneAr{F}{\fib}{\fibq} in \biFib. 
We define a functor \fun{\what F}{\GDSC\fib}{\GDSC\fibq} as follows: 
for $\dX$ an object in $\GDSC\fib$, we set  
\[
\dun{\what F \dX}  \eqdef \fb{F}\dun\dX \qquad 
\dsp{\what F \dX}  \eqdef \fb{F}(\dun\dX\times\dsp\dX) = \fb{F}(\Der\dX) \qquad 
\dist{\what F \dX} \eqdef \colift{\projar{\fb{F}}{\dX}}{\ft{F}\dist\dX},
\]
For every arrow \fun{\darf}{\dX}{\dY} in $\GDSC\fib$, we set 
\[
\dun{\what F \darf} \eqdef \fb{F}\dun\darf \qquad 
\Der(\what F \darf) \eqdef \fb{F}\dun\darf \times \fb{F}(\Der\darf)  \qquad 
\dpr{\what F \darf} \eqdef \pick{\ft{F}\dpr\darf}  
\]
where $\pick{\ft{F}\dpr\darf}$ is obtained as above using the following diagrams: 
\[
\vcenter{\xymatrix@C=9ex{
  \ft{F}\dist\dX 
    \ar[d]_-{\ft{F}\dpr\darf} 
    \ar[r]^-{\coliftar{\projar{\fb{F}}{\dX}}{\ft{F} \dist\dX}} 
& \dist{\what F \dX} 
    \ar@{..>}[d]^-{\pick{\ft{F}\dpr\darf}} 
\\ 
  \ft{F}\dist\dY 
    \ar[r]_-{\coliftar{\projar{\fb{F}}{\dY}}{\ft{F} \dist\dY}} 
& \dist{\what F \dY} 
}}  \quad 
\vcenter{\xymatrix{
  \fb{F}(\nabla\dX) 
    \ar[d]_{\fb{F}(\nabla\darf)}  
    \ar[r]^-{\projar{\fb{F}}{\dX}} 
& \fb{F}\dun\dX\times \fb{F}(\dun\dX\times\dsp\dX)\times \fb{F}\dun\dX 
    \ar[d]^{\fb{F}\dun\darf\times \fb{F}(\Der\darf) \times \fb{F}\dun\darf} 
\\
  \fb{F}(\nabla\dY) 
    \ar[r]_-{\projar{\fb{F}}{\dY}} 
& F\dun\dY\times F(\dun\dY\times\dsp\dY)\times F\dun\dY 
}}
\]
Notice that cocartesian liftings are essential to appropriately define $\what F$ on the relational part of $\dX$ and $\darf$, as we do not assume any compatibility with products for $\fb{F}$. 

\begin{prop}\label{prop:gds-1ar} 
 Let \oneAr{F}{\fib}{\fibq} be a 1-arrow in \biFib.
Then,  \oneAr{\ple{\fb{F},\what F}}{\GDS\fib}{\GDS\fibq} is a 1-arrow in \Fib. 
\end{prop}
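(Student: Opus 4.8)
The plan is to verify the two defining conditions of a 1-arrow in \Fib: that $\what F$ is a genuine functor $\GDSC\fib\to\GDSC\fibq$, and that the square $\fb F\circ\GDS\fib = \GDS\fibq\circ\what F$ commutes. The latter is immediate from the definitions, since on an object $\dX$ we have $\GDS\fibq(\what F\dX) = \dun{\what F\dX} = \fb F\dun\dX = \fb F(\GDS\fib\dX)$ and on an arrow $\darf$ we have $\GDS\fibq(\what F\darf) = \dun{\what F\darf} = \fb F\dun\darf$; so the substance of the proof is functoriality. Note that, since the target is \Fib rather than \biFib, we do not need $\what F$ to preserve (co)cartesian arrows.

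First I would check that $\what F$ is well-defined. On objects, the only non-routine point is that $\dist{\what F\dX} = \colift{\projar{\fb F}{\dX}}{\ft F\dist\dX}$ lands in the right fibre: as $F$ is a 1-arrow, $\ft F\dist\dX$ lies above $\fb F(\nabla\dX)$ through $\fibq$, and the cocartesian lifting along $\projar{\fb F}{\dX}$ — which exists because $\fibq$ is a bifibration — transports it above the codomain $\fb F\dun\dX\times\fb F(\Der\dX)\times\fb F\dun\dX$ of $\projar{\fb F}{\dX}$, i.e. above $\dun{\what F\dX}\times\dsp{\what F\dX}\times\dun{\what F\dX}$. On arrows, the components $\fb F\dun\darf$ and $\fb F\dun\darf\times\fb F(\Der\darf)$ evidently have the correct (co)domains and satisfy $\pi_1\circ\Der(\what F\darf) = \dun{\what F\darf}\circ\pi_1$ by the product law, while $\dpr{\what F\darf} = \pick{\ft F\dpr\darf}$ is above $\Der(\what F\darf)\times\dun{\what F\darf}$, as one reads off the right-hand square defining $\pick{\blank}$ (up to the canonical associativity of the ternary product in \CC).

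Next I would verify functoriality componentwise. On the underlying and derivative components, preservation of identities and of composition is routine, following from functoriality of $\fb F$ together with the interchange law $(g\circ f)\times(g'\circ f') = (g\times g')\circ(f\times f')$ for product arrows. The genuine content sits in the relational component, where everything is delivered by \cref{prop:pick}: since $\dpr{\id_\dX} = \id_{\dist\dX}$ and $\ft F$ is a functor, $\dpr{\what F(\id_\dX)} = \pick{\ft F\id_{\dist\dX}} = \pick{\id_{\ft F\dist\dX}} = \id$; and since $\dpr{(\darg\circ\darf)} = \dpr\darg\circ\dpr\darf$ and $\ft F$ preserves composition, $\dpr{\what F(\darg\circ\darf)} = \pick{\ft F\dpr\darg\circ\ft F\dpr\darf} = \pick{\ft F\dpr\darg}\circ\pick{\ft F\dpr\darf} = \dpr{\what F\darg}\circ\dpr{\what F\darf}$.

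The main obstacle is exactly this relational component $\dpr{\what F\darf} = \pick{\ft F\dpr\darf}$: it is the only place where the lack of product-preservation for $\fb F$ forces the detour through the cocartesian liftings $\coliftar{\projar{\fb F}{\dX}}{\ft F\dist\dX}$, and controlling how these liftings interact with composition is precisely what \cref{prop:pick} isolates. Once that lemma is in hand the argument is bookkeeping; the only care needed is to keep the codomain of $\pick{\ft F\dpr\darf}$ aligned with $\Der(\what F\darf)\times\dun{\what F\darf}$ modulo the associativity isomorphism of the ternary product.
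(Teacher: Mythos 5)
Your proposal is correct and follows essentially the same route as the paper's proof: observe that the square $\fb{F}\circ\GDS\fib = \GDS\fibq\circ\what F$ commutes by construction, and reduce functoriality of $\what F$ to functoriality of $\fb{F}$ and $\ft{F}$ together with \cref{prop:pick} for the relational component. The extra well-definedness checks you include are left implicit in the paper but are the same bookkeeping.
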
 
\begin{proof}
The equation $\fb{F}\circ\GDS\fib = \GDS\fibq \circ \what F$ trivially holds by definition of $\what F$. 
Hence, to conclude the proof, we only have to check that $\what F$ is indeed a functor  from $\GDSC\fib$ to $\GDSC\fibq$.  
Note that $\what F (\id_\dX) = \ple{\fb{F}\id_{\dun\dX}, \fb{F}\id_{\dun\dX}\times \fb{F}\id_{\Der\dX}, \pick{\ft{F}\id_{\dist\dX}}}$, which by functoriality of $\fb{F}$ and $\ft{F}$ and \cref{prop:pick},
is the identity on $\what F \dX$. 
Preservation of composition follows from a similar argument. 
\end{proof}

Similarly, let us consider a 2-arrow \twoAr{\phi}{F}{G} between 1-arrows \oneAr{F,G}{\fib}{\fibq}. 
We define a natural transformation \nt{\what \phi}{\what F}{\what G} as follows: 
for every object $\dX$ in $\GDS\fib$,  we set 
\[
\dun{\what \phi_\dX} \eqdef \fb{\phi}_{\dun\dX} \qquad 
\Der(\what \phi_\dX) \eqdef \fb{\phi}_{\dun\dX}\times \fb{\phi}_{\dun\dX\times\dsp\dX} \qquad 
\dpr{\what \phi_\dX} \eqdef \pick{\ft{\phi}_{\dist\dX}} 
\]
where $\pick{\ft{\phi}_{\dist\dX}}$ is obtained as above using the following diagrams 
\[
\vcenter{\xymatrix@C=9ex{
  \ft{F}\dist\dX 
    \ar[d]_-{\ft{\phi}_{\dist\dX}} 
    \ar[r]^-{\coliftar{\projar{\fb{F}}{\dX}}{\ft{F}\dist\dX}} 
& \dist{\what F \dX} 
    \ar@{..>}[d]^-{\pick{\ft{\phi}_{\dist\dX}}} 
\\ 
  \ft{G}\dist\dX 
    \ar[r]_-{\coliftar{\projar{\fb{G}}{\dX}}{\ft{G}\dist\dX}}  
& \dist{\what G \dX} 
}} \quad 
\vcenter{\xymatrix{
  \fb{F}(\nabla\dX) 
    \ar[d]_-{\fb{\phi}_{\nabla\dX}} 
    \ar[r]^-{\projar{\fb{F}}{\dX}} 
& \fb{F}\dun\dX \times \fb{F}(\dun\dX\times\dsp\dX) \times \fb{F}\dun\dX 
    \ar[d]^-{\fb{\phi}_{\dun\dX}\times\fb{\phi}_{\dun\dX\times\dsp\dX}\times\fb{\phi}_{\dun\dX}} 
\\ 
  \fb{G}(\nabla\dX) 
    \ar[r]_-{\projar{\fb{G}}{\dX}} 
& \fb{G}\dun\dX \times \fb{G}(\dun\dX\times\dsp\dX) \times \fb{G}\dun\dX  
}}
\] 

\begin{prop}\label{prop:gds-2ar}
 Let \twoAr{\phi}{F}{G} be a 2-arrow in \biFib. 
Then,  \twoAr{\ple{\fb{\phi},\what\phi}}{\ple{\fb{F},\what F}}{\ple{\fb{G},\what G}} is a 2-arrow in \Fib. 
\end{prop}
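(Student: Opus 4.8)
The plan is to verify the two conditions that make $\ple{\fb{\phi},\what\phi}$ a 2-arrow in $\Fib$: that $\what\phi$ is a natural transformation $\what F \Rightarrow \what G$ between the functors of \cref{prop:gds-1ar}, and that it satisfies the base coherence $\fb{\phi}\circ\GDS\fib = \GDS\fibq\circ\what\phi$. The coherence is immediate, since $\GDS\fibq(\what\phi_\dX) = \dun{\what\phi_\dX} = \fb{\phi}_{\dun\dX} = \fb{\phi}_{\GDS\fib(\dX)}$ by definition. Before that I would record that each $\what\phi_\dX$ is a legitimate arrow of $\GDSC\fibq$: the constraint $\pi_1\circ\Der(\what\phi_\dX) = \dun{\what\phi_\dX}\circ\pi_1$ holds because $\Der(\what\phi_\dX) = \fb{\phi}_{\dun\dX}\times\fb{\phi}_{\dun\dX\times\dsp\dX}$ is a product, and $\dpr{\what\phi_\dX} = \pick{\ft{\phi}_{\dist\dX}}$ is by construction above $\fb{\phi}_{\dun\dX}\times\fb{\phi}_{\dun\dX\times\dsp\dX}\times\fb{\phi}_{\dun\dX} = \Der(\what\phi_\dX)\times\dun{\what\phi_\dX}$, as required.

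The substance is the naturality square $\what G\darf\circ\what\phi_\dX = \what\phi_\dY\circ\what F\darf$ for an arrow $\darf\colon\dX\to\dY$ in $\GDSC\fib$. Since composition in $\GDSC\fibq$ is componentwise, I would check it on each of the three components $\dun$, $\Der$, $\dpr$ separately. The $\dun$ component requires $\fb G\dun\darf\circ\fb{\phi}_{\dun\dX} = \fb{\phi}_{\dun\dY}\circ\fb F\dun\darf$, which is naturality of $\fb{\phi}$ at $\dun\darf$. The $\Der$ component is a product of two arrows, so it splits into naturality of $\fb{\phi}$ at $\dun\darf$ and at $\Der\darf\colon\dun\dX\times\dsp\dX\to\dun\dY\times\dsp\dY$; both hold since $\fb{\phi}$ is a natural transformation.

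The $\dpr$ component is the main obstacle, as the four arrows involved are all of the form $\pick{\blank}$ and do not compose on the nose. The key tool is functoriality of $\pick{\blank}$ (\cref{prop:pick}), which rewrites the left-hand composite $\pick{\ft G\dpr\darf}\circ\pick{\ft{\phi}_{\dist\dX}}$ as a single $\pick[\beta_2\circ\beta_1]{\ft G\dpr\darf\circ\ft{\phi}_{\dist\dX}}$ and the right-hand composite $\pick{\ft{\phi}_{\dist\dY}}\circ\pick{\ft F\dpr\darf}$ as $\pick[\beta_2'\circ\beta_1']{\ft{\phi}_{\dist\dY}\circ\ft F\dpr\darf}$, where $\beta_1,\beta_2,\beta_1',\beta_2'$ are the threefold-product base arrows read off the defining diagrams of $\what\phi$ and $\what F,\what G$. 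It then remains to match the upper and lower data of these two $\pick$'s. The upper arrows agree by naturality of $\ft{\phi}$ at $\dpr\darf\colon\dist\dX\to\dist\dY$, that is $\ft G\dpr\darf\circ\ft{\phi}_{\dist\dX} = \ft{\phi}_{\dist\dY}\circ\ft F\dpr\darf$; the base arrows agree because $\beta_2\circ\beta_1$ and $\beta_2'\circ\beta_1'$ are threefold products whose factors are exactly the naturality squares of $\fb{\phi}$ at $\dun\darf$ (in the first and third slot) and at $\Der\darf$ (in the second slot) already used above. I expect the only delicate point to be bookkeeping the base arrows against the associativity of the products $\dun{}\times\dsp{}\times\dun{}$ and the fact that $\fb{F},\fb{G}$ need not preserve products, after which the whole component collapses to naturality of $\fb{\phi}$ and $\ft{\phi}$ together with \cref{prop:pick}.
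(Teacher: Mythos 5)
Your proposal is correct and follows essentially the same route as the paper's proof: the base coherence is noted to be trivial by construction, naturality is checked componentwise with the first two components reduced to naturality of $\fb{\phi}$, and the $\dpr$ component is handled by invoking \cref{prop:pick} to collapse both composites to a single $\pick[\beta]{\alpha}$, whose upper arrow is the naturality square of $\ft{\phi}$ at $\dpr\darf$ and whose base arrow is the threefold product of naturality squares of $\fb{\phi}$. Your additional check that each $\what\phi_\dX$ is a well-formed arrow of $\GDSC\fibq$ is a harmless (and reasonable) supplement to what the paper leaves implicit in the construction.
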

\begin{proof}
The fact that $\fb{\phi}\GDS\fib = \GDS\fibq \what\phi$ trivially holds by construction of $\what\phi$. 
Hence, we have only to check that $\nt{\what\phi}{\what F}{\what G}$ is indeed a natural transformation. 
To this end, let \fun\darf\dX\dY be an arrow in $\GDSC\fib$. 
We have to verify that $\what\phi_\dY \circ \what F\darf = \what G\darf \circ \what\phi_\dX$ holds, that is, 
\begin{itemize} 
\item $\fb{\phi}_{\dun\dY}\circ \fb{F}\dun\darf = \fb{G}\dun\darf \circ \fb{\phi}_{\dun\dX}$ and 
\item $(\fb{\phi}_{\dun\dY}\times\fb{\phi}_{\Der\dY})\circ(\fb{F}\dun\darf\times \fb{F}(\Der\darf)) = (\fb{G}\dun\darf\times\fb{G}(\Der\darf))\circ(\fb\phi_{\dun\dX}\times\fb{\phi}_{\Der\dX})$ and 
\item $\pick{\ft{\phi}_{\dist\dY}}\circ \pick{\ft{F}\dpr\darf} = \pick{\ft{G}\dpr\darf} \circ \pick{\ft{\phi}_{\dist\dX}}$. 
\end{itemize} 
The first two equations hold by naturality of $\fb{\phi}$. 
The third one holds because, by \cref{prop:pick}, both sides of the equation are equal to $\pick[\beta]{\alpha}$  defined by the diagrams below. 
\begin{mathpar}
\vcenter{\xymatrix{
  \fb{F}(\nabla\dX) 
    \ar[d]_-{\fb{\phi}_{\nabla\dX}}
    \ar[r]^-{\fb{F}(\nabla\darf)}
    \ar[rd]^{\fibq(\alpha)}
& \fb{F}(\nabla\dY) 
    \ar[d]^-{\fb{\phi}_{\nabla\dY}} 
\\
  \fb{G}(\nabla\dX) 
    \ar[r]_-{\fb{G}(\nabla\darf)}
& \fb{G}(\nabla\dY) 
}} \and 
\vcenter{\xymatrix@C=15ex{
  \fb{F}\dun\dX\times\fb{F}(\Der\dX)\times\fb{F}\dun\dX 
    \ar[d]_-{\fb\phi_{\dun\dX}\times\fb\phi_{\Der\dX}\times\fb\phi_{\dun\dX}}
    \ar[r]^-{\fb{F}\dun\darf\times\fb{F}(\Der\darf)\times\fb{F}\dun\darf}
    \ar[rd]^-{\beta} 
& \fb{F}\dun\dY\times\fb{F}(\Der\dY)\times\fb{F}\dun\dY 
    \ar[d]^-{\fb\phi_{\dun\dY}\times\fb\phi_{\Der\dY}\times\fb\phi_{\dun\dY}}
\\
  \fb{G}\dun\dX\times\fb{G}(\Der\dX)\times\fb{G}\dun\dX 
    \ar[r]_-{\fb{G}\dun\darf\times\fb{G}(\Der\darf)\times\fb{G}\dun\darf}
& \fb{G}\dun\dY\times\fb{G}(\Der\dY)\times\fb{G}\dun\dY 
}} \and 
\vcenter{\xymatrix{
  \fb{F}(\nabla\dX) 
    \ar[d]_-{\fibq(\alpha)}
    \ar[r]^-{\projar{\fb{F}}{\dX}}
& \fb{F}\dun\dX\times\fb{F}(\Der\dX)\times\fb{F}\dun\dX 
    \ar[d]^-{\beta}
\\
  \fb{G}(\nabla\dY) 
    \ar[r]_-{\projar{\fb{G}}{\dY}} 
& \fb{G}\dun\dY\times\fb{G}(\nabla\dY)\times \fb{G}\dun\dY 
}} \and 
\vcenter{\xymatrix{
  \ft{F}\dist\dX
    \ar[d]_-{\ft\phi_{\dist\dX}}
    \ar[r]^-{\ft{F}\dpr\darf}
    \ar[rd]^-{\alpha}
& \ft{F}\dist\dY
    \ar[d]^-{\ft\phi_{\dist\dY}}
\\
  \ft{G}\dist\dX
    \ar[r]_-{\ft{G}\dpr\darf}
& \ft{G}\dist\dY
}} \and 
\vcenter{\xymatrix{
  \ft{F}\dist\dX 
    \ar[d]_-{\alpha}
    \ar[r]^-{\coliftar{\projar{\fb{F}}{\dX}}{\ft{F}\dist\dX}}
& \dist{\what F \dX}
    \ar@{..>}[d]^-{\pick[\beta]{\alpha}}
\\
  \ft{G}\dist\dY
    \ar[r]_-{\coliftar{\projar{\fb{G}}{\dY}}{\ft{G}\dist\dY}}
& \dist{\what G \dY}
}}
\vspace*{(-\baselineskip*2)+5pt}
\end{mathpar}
\end{proof}

The construction of $\what\phi$ respects vertical composition and identities of 2-arrows, as the following proposition shows

\begin{prop}\label{prop:gds-2ar-comp}
Let $\fib$ and $\fibq$ be 
bifibrations with finite products in the base. 
\begin{enumerate}
\item\label{prop:gds-2ar-comp:1}
For every 1-arrow \oneAr{F}{\fib}{\fibq}, 
we have that $\nt{\what\id}{\what F}{\what F}$ is the identity natural transformation. 
\item\label{prop:gds-2ar-comp:2}
For 1-arrows \oneAr{F,G,H}{\fib}{\fibq} and 2-arrows \twoAr{\phi}{F}{G} and \twoAr{\psi}{G}{H}, 
we have $\what{\psi\phi} = \what\psi\what\phi$. 
\end{enumerate}
\end{prop}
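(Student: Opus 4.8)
The plan is to prove both equations objectwise, exploiting the fact that $\what F$, $\what\phi$, and the identities and composition in $\GDSC\fibq$ are all defined componentwise on the $\dun{}$-, $\Der$-, and $\dpr{}$-parts. Thus each equation splits into three independent checks, of which the first two are purely formal (functoriality of $\fb{F}$ and of the binary product in the base) and only the $\dpr{}$-component, governed by the operator $\pick\blank$, requires invoking \cref{prop:pick}.

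For \cref{prop:gds-2ar-comp:1}, I would recall that the identity $2$-arrow $\id$ on a $1$-arrow $F$ has components $\fb\id = \id_{\fb{F}}$ and $\ft\id = \id_{\ft{F}}$. Instantiating the definition of $\what\id$ at an object $\dX$, the first two components are $\dun{\what\id_\dX} = (\fb\id)_{\dun\dX} = \id_{\fb{F}\dun\dX}$ and $\Der(\what\id_\dX) = (\fb\id)_{\dun\dX}\times(\fb\id)_{\dun\dX\times\dsp\dX} = \id$, which are exactly the first two components of $\id_{\what F\dX}$. For the relational component, $\dpr{\what\id_\dX} = \pick{(\ft\id)_{\dist\dX}}$; here both the vertical arrow $\alpha = (\ft\id)_{\dist\dX}$ and the base arrow $\beta$ (built from the identity components of $\fb\id$) are identities, so this is $\pick[\id]{\id}$, which equals $\id_{\dist{\what F\dX}}$ by the first equation of \cref{prop:pick}. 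Hence $\what\id_\dX = \id_{\what F\dX}$ for every $\dX$, so $\what\id$ is the identity natural transformation.

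For \cref{prop:gds-2ar-comp:2}, since vertical composition in \Fib is componentwise we have $\fb{\psi\phi} = \fb\psi\fb\phi$ and $\ft{\psi\phi} = \ft\psi\ft\phi$. Fixing $\dX$, I would compare $\what{\psi\phi}_\dX$ with $\what\psi_\dX\circ\what\phi_\dX$ component by component, using that composition in $\GDSC\fibq$ is componentwise. The $\dun{}$-component is immediate, $(\fb{\psi\phi})_{\dun\dX} = (\fb\psi)_{\dun\dX}\circ(\fb\phi)_{\dun\dX}$. For the $\Der$-component, expanding the vertical composite and applying functoriality of the binary product yields $(\fb{\psi\phi})_{\dun\dX}\times(\fb{\psi\phi})_{\dun\dX\times\dsp\dX} = ((\fb\psi)_{\dun\dX}\times(\fb\psi)_{\dun\dX\times\dsp\dX})\circ((\fb\phi)_{\dun\dX}\times(\fb\phi)_{\dun\dX\times\dsp\dX})$, i.e.\ $\Der(\what\psi_\dX)\circ\Der(\what\phi_\dX)$.

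The only genuinely nontrivial step, which I expect to be the main obstacle, is the $\dpr{}$-component, where I must show $\pick{(\ft\psi)_{\dist\dX}\circ(\ft\phi)_{\dist\dX}} = \pick{(\ft\psi)_{\dist\dX}}\circ\pick{(\ft\phi)_{\dist\dX}}$. This is an instance of the composition law of \cref{prop:pick}: writing each side-operator as $\pick[\beta]{\alpha}$, the vertical arrows satisfy $\alpha_{\psi\phi} = \alpha_\psi\circ\alpha_\phi$ by definition of vertical composition, while the base arrows satisfy $\beta_{\psi\phi} = \beta_\psi\circ\beta_\phi$ again by functoriality of the ternary product together with the componentwise composition of $\fb\psi,\fb\phi$. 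The hypotheses of \cref{prop:pick} are therefore met, so the two sides coincide; this is the only place where the cocartesian-lifting machinery underlying $\pick\blank$ actually intervenes. Assembling the three componentwise equalities gives $\what{\psi\phi}_\dX = \what\psi_\dX\circ\what\phi_\dX$ for every $\dX$, which completes the proof.
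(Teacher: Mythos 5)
Your proof is correct and takes essentially the same approach as the paper: the paper's own (much terser) proof likewise argues componentwise, dismissing the equalities on the $\dun{}$- and $\Der$-components as straightforward and reducing the $\dpr{}$-component in both parts to \cref{prop:pick}. Your version simply spells out the details — the identity/composition laws for $\fb{\phi}$, $\ft{\phi}$, functoriality of the (binary and ternary) products in the base, and the matching of the base arrows $\beta$ needed to invoke \cref{prop:pick} — that the paper leaves implicit.
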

\begin{proof}
 For both items,  
the equality on the first two components is straightforward in both cases, and equality on the third one follows by \cref{prop:pick}. 
\end{proof}

This shows that, given bifibrations $\fib$ and $\fibq$ with finite products in the base, 
we have a functor 
\fun{\DRF_{\fib,\fibq}}{\biFib(\fib,\fibq)}{\Fib(\GDS\fib,\GDS\fibq)}
given by 
\[
\DRF_{\fib,\fibq}(F)    \eqdef \ple{\fb{F},\what F}
\qquad 
\DRF_{\fib,\fibq}(\phi) \eqdef \ple{\fb\phi,\what\phi}
\] 
We are getting closer and closer to the definition of a 2-functor 
$\DRF$, with $\DRF(\fib) \eqdef  \GDS\fib$,  from (certain) bifibrations to fibrations. 
However, this is not the case, as $\DRF$ does not preserve composition and identity of 1-arrows. 
Indeed, given a bifibration \fun\fib\E\B where \B has finite products and the identity 1-arrow \oneAr\Id\fib\fib on it, 
we have that 
for every object $\dX$ in $\GDSC\fib$, 
$\what \Id(\dX) = \ple{\dun\dX, \dun\dX\times\dsp\dX, \colift{\projar{\fb\Id}{\dX}}{\dist\dX}}$,
which is not isomorphic to $\dX$, in general. 

We can recover a weak form of functoriality by restricting to a 2-subcategory of \biFib, that is, we consider only 1-arrows which are cocartesian, \ie, that preserve cocartesian arrows. 
Denote by \biFibc the 2-full 2-subcategory of \biFib where objects are bifibrations with finite products in the base and 1-arrows preserve cocartesian arrows. 
Then, restricting $\DRF$ to \biFibc, we get a \emph{lax functor}. 
This means that horizontal composition and identities of 1-arrows are preserved only up to a mediating 2-arrow which needs not be an isomorphism. 

\begin{thm}\label{thm:gds-lax-fun}
\fun{\DRF}{\biFibc}{\Fib} is a lax functor. 
\end{thm}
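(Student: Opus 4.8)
The plan is to supply the laxity data turning $\DRF$ into a lax functor and to reduce every coherence equation to uniqueness in the universal property of cocartesian liftings. We already have the object assignment $\fib\mapsto\GDS\fib$ and, by \cref{prop:gds-2ar-comp}, the hom-functors $\DRF_{\fib,\fibq}$. What remains is to produce, for each object $\fib$, a unit comparison \twoAr{\iota_\fib}{\id_{\GDS\fib}}{\DRF(\Id)}, and, for each composable pair of $1$-arrows in \biFibc, a composition comparison \twoAr{\gamma_{G,F}}{\DRF(G)\circ\DRF(F)}{\DRF(G\circ F)}; then to check that these are $2$-cells in \Fib, that $\gamma$ is natural in $F$ and $G$, and that the three coherence axioms (the associativity pentagon and the two unit laws) hold. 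A useful preliminary observation is that all the $1$-arrows involved have equal base components on the nose (for instance both $\DRF(G)\circ\DRF(F)$ and $\DRF(G\circ F)$ have base $\fb{G}\fb{F}$, and both $\id_{\GDS\fib}$ and $\DRF(\Id)$ have base $\Id_\B$), so every comparison $2$-cell has identity base component and all the content lives in the top component, i.e.\ in the fibres.

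First I would construct the unit comparison. Since $\what\Id\dX$ keeps the point object $\dun\dX$ but replaces $\dsp\dX$ by $\dun\dX\times\dsp\dX$ and $\dist\dX$ by $\colift{\projar{\Id}{\dX}}{\dist\dX}$, I would take $(\iota_\fib)_\dX\colon\dX\to\what\Id\dX$ to have point component $\id_{\dun\dX}$, derivative $\id_{\dun\dX\times\dsp\dX}$ (so that $\Der(\iota_\fib)_\dX=\iple{\pi_1,\id}$), and third component equal to the cocartesian lifting $\coliftar{\projar{\Id}{\dX}}{\dist\dX}$ itself. A direct computation gives $\Der(\iota_\fib)_\dX\times\id_{\dun\dX}=\projar{\Id}{\dX}$, so the third component lies over the required base map; naturality of $\iota_\fib$ in $\dX$ then follows from the universal property of cocartesian arrows, packaged as \cref{prop:pick}.

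Next I would construct the composition comparison, which is where the restriction to \biFibc is essential. On an object $\dX$, the object $\dist(\what G\what F\dX)$ is built by taking the cocartesian lifting of $\ft{F}\dist\dX$ along $\projar{\fb{F}}{\dX}$, applying $\ft{G}$, and taking a second cocartesian lifting along $\projar{\fb{G}}{\what F\dX}$, whereas $\dist(\what{G\circ F}\dX)$ is a single cocartesian lifting of $\ft{G}\ft{F}\dist\dX$ along $\projar{\fb{G}\fb{F}}{\dX}$. Because $G$ is cocartesian, $\ft{G}$ carries the first cocartesian lifting to a cocartesian arrow, so the composite $\ft{G}\ft{F}\dist\dX\to\dist(\what G\what F\dX)$ is again cocartesian. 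Thus $\dist(\what G\what F\dX)$ and $\dist(\what{G\circ F}\dX)$ are two cocartesian liftings of the same object, and I would define the third component of $(\gamma_{G,F})_\dX$ as the unique mediating arrow given by the universal property, once the base triangle $\fb{G}\fb{F}(\nabla\dX)\to\nabla(\what{G\circ F}\dX)$ is checked to commute; the point and derivative components are the evident product maps, the derivative being induced by $\fb{G}\pi_2$.

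Finally, the $2$-cell conditions and the three coherence axioms all reduce to cocartesian uniqueness: in each case both sides of the equation to be proved are mediating arrows out of one and the same cocartesian arrow, factoring the same map through it, hence they coincide; \cref{prop:pick} is exactly the lemma that performs this cancellation and will be invoked throughout. I expect the main obstacle to be bookkeeping rather than conceptual difficulty, concentrated in two places: verifying the base-map triangle for $\gamma_{G,F}$ in terms of the splitting maps $\projar{-}{-}$, and then the associativity pentagon for a triple $F,G,H$, where one must handle the three successive splittings $\projar{\fb{F}}{\dX}$, $\projar{\fb{G}}{\what F\dX}$, $\projar{\fb{H}}{\what G\what F\dX}$ at once and confirm that the two bracketings agree after whiskering --- which, once the base maps are matched, is again forced by the universal property.
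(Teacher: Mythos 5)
Your proposal is correct and follows essentially the same route as the paper's proof: the unit comparison is exactly the paper's $\xi^\fib$ (identity on points, $\iple{\pi_1,\id_{\Der\dX}}$ at the derivative level, and the cocartesian lifting $\coliftar{\projar{\Id}{\dX}}{\dist\dX}$ itself as third component), and the composition comparison is exactly the paper's $\omega^{F,G}$, obtained as the unique mediating arrow out of the composite $\coliftar{\projar{\fb{G}}{\what F\dX}}{\dist{\what F\dX}}\circ\ft{G}(\coliftar{\projar{\fb{F}}{\dX}}{\dist\dX})$, which is cocartesian precisely because $G$ lies in \biFibc. The coherence equations are likewise discharged in the paper by the uniqueness clause of the universal property of cocartesian arrows (via \cref{prop:pick}), just as you propose.
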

\begin{proof}
Let $\fib$ be a bifibration in \biFibc. 
We consider a natural transformation $\nt{\xi^\fib}{\ft{\Id_\fib}}{\what{\Id_\fib}}$ defined as follows: 
for every object $\dX$ in $\GDSC\fib$ we set:
\[
\dun{\xi^\fib_\dX} \eqdef \id_{\dun\dX} \qquad 
\Der(\xi^\fib_\dX) \eqdef \iple{\pi_1,\id_{\Der\dX}} \qquad 
\dpr{\xi^\fib_\dX} \eqdef \coliftar{\projar{\Id}{\dX}}{\dist\dX}
\]
Naturality of $\xi^\fib$ is straightforward. 
Moreover, we have $\fib(\xi^\fib_\dX) = \id_{\dun\dX}$, hence 
\twoAr{\ple{\id,\xi^\fib}}{\Id_\fib}{\DRF(\Id_\fib)} is a well-defined 2-arrow in \biFibc. 
Now, for 1-arrows 
\oneAr{F}{\fib}{\fibq} and \oneAr{G}{\fibq}{\fibr} in \biFibc, 
we define a natural transformation 
\nt{\omega^{F,G}}{\what G \circ \what F}{\what {G\circ F}} as follows: 
for every object $\dX$ in $\GDSC\fib$  we set 
\[
\dun{\omega^{F,G}_\dX} \eqdef \id_{\fb{G}\fb{F}\dun\dX} \qquad 
\Der(\omega^{F,G}_\dX) \eqdef \id_{\fb{G}\fb{F}\dun\dX} \times \fb{G}\pi_2 \qquad 
\]
and the third component is given by the following diagrams
\begin{mathpar}
\vcenter{\xymatrix{
  \fb{G}(\fb{F}\dun\dX \times \fb{F}(\dun\dX\times\dsp\dX) \times \fb{F}\dun\dX) 
    \ar[r]^-{\projar{\fb{G}}{\what F \dX}} 
& \fb{G}\fb{F}\dun\dX \times \fb{G}(\fb{F}\dun\dX \times \fb{F}(\dun\dX\times\dsp\dX)) \times \fb{G}\fb{F}\dun\dX 
    \ar[d]^-{\id_{\fb{G}\fb{F}\dun\dX} \times \fb{G}\pi_2 \times \id_{\fb{G}\fb{F}\dun\dX}}
\\
  \fb{G}\fb{F}(\dun\dX\times\dsp\dX\times\dun\dX) 
    \ar[r]_-{\projar{\fb{G}\fb{F}}{\dX}} 
    \ar[u]^-{\fb{G}\projar{\fb{F}}{\dX}} 
& \fb{G}\fb{F}\dun\dX \times \fb{G}\fb{F}(\dun\dX\times\dsp\dX) \times \fb{G}\fb{F}\dun\dX 
}}  \and 
\vcenter{\xymatrix@C=9ex{
  \ft{G}(\dist{\what F\dX})  
    \ar[r]^-{\coliftar{\projar{\fb{G}}{\what F\dX}}{ \dist{\what F \dX}  }} 
& \dist{\what G \what F \dX} 
    \ar@{..>}[d]^-{\dpr{\omega^{F,G}_\dX}} 
\\
  \ft{G}\ft{F}\dist\dX 
    \ar[r]_-{\coliftar{\projar{\fb{G}\fb{F}}{\dX}}{\dist\dX}} 
    \ar[u]^-{\ft{G}\coliftar{\projar{\fb{F}}{\dX}}{\dist\dX}}
& \dist{\what{GF}\dX} 
}} 
\end{mathpar}
where $\dpr{\omega^{F,G}_\dX}$ exists and is uniquely determined since 
the composite $\coliftar{\projar{\fb{G}}{\what F \dX}}{ \dist{\what F \dX} } \circ \ft{G}( \coliftar{\projar{\fb{F}}{\dX}}{\dist\dX} )$ is cocartesian as $\ft{G}$ preserves cocartesian arrows because $G$ is a 1-arrow in \biFibc. 
Naturality can be again easily checked and, since $\fibr(\omega^{F,G}_\dX) = \id_{\dun\dX}$, 
\twoAr{\ple{\id,\omega^{F,G}}}{\DRF(G)\circ\DRF(F)}{\DRF(G\circ F)} is a 2-arrow in \Fib. 
It is also easy to see this 2-arrow is natural in $F$ and $G$. 

To conclude the proof we have to check that the following equations hold
where \oneAr{F}{\fib}{\fibq}, \oneAr{G}{\fibq}{\fibr} and \oneAr{H}{\fibr}{\fibr'} are 1-arrows in \biFibc: 
\begin{align*} 
& \omega^{\Id_\fib,F} \cdot \what F \xi^\fib = \id_{\what F}  \\ 
& \omega^{F,\Id_\fibq} \cdot \xi^\fibq \what F = \id_{\what F}  \\ 
& \omega^{GF,H} \cdot \what H \omega^{F,G} = \omega^{F,HG} \cdot \omega^{G,H} \what F 
\end{align*} 
We only check the first equality, the others can be proved following a similar strategy. 
The equality on the first component is trivial; hence we only check those on the second and the third ones. 
We have 
\begin{align*}
& \Der(\omega^{\Id_\fib,F}_\dX \circ \what F \xi^\fib_\dX) 
    = ( \id_{\fb{F}\dun\dX}\times \fb{F}\pi_2) \circ (\fb{F}\id_{\dun\dX}\times \fb{F}\iple{\pi_1,\id_{\Der\dX}}) \\ 
&\qquad 
    = \id_{\fb{F}\dun\dX} \times \fb{F}(\pi_2\iple{\pi_1,\id_{\Der\dX}})
    = \id_{\fb{F}\dun\dX} \times \id_{\fb{F}(\Der\dX)}
    = \Der(\id_{\what F \dX} 
\\ 
& \dpr{\omega^{\Id_\fib,F}_\dX \circ \what F \xi^\fib_\dX} 
    = \dpr{\omega^{\Id_\fib,F}_\dX} \circ \pick{\ft{F}\coliftar{\projar{\Id}{\dX}}{\dist\dX}}    
    = \id_{\dist{\what F \dX}} 
\end{align*}
The last equality holds using the following commutative diagram, 
where $\coliftar{\projar{\fb{F}}{\dX}}{\ft{F}\dist\dX}$ is cocartesian. 
\begin{minipage}[b]{\linewidth-2cm}
  \begin{align*}
\xymatrix@C=25ex{
  \ft{F}\dist\dX 
    \ar@/_30pt/[dd] 
    \ar[d]^-{\ft{F}\coliftar{\projar{\Id}{\dX}}{\dist\dX}} 
    \ar[r]^-{\coliftar{\projar{\fb{F}}{\dX}}{\ft{F}\dist\dX}} 
& \dist{\what F \dX} 
    \ar@/^30pt/[dd]^-{\id_{\dist{\what F\dX}}} 
    \ar[d]_-{\pick{\ft{F}\coliftar{\projar{\Id}{\dX}}{\dist\dX}}} 
\\
  \ft{F}\dist{\what\Id\dX} 
    \ar[r]_-{\coliftar{\projar{\fb{F}}{\what\Id\dX}}{\ft{F}\dist{\what\Id\dX}} } 
& \dist{\what F\what \id \dX}
    \ar[d]_-{\dpr{\omega^{\Id^\fib,F}_\dX}} 
\\ 
  \ft{F}\dist\dX 
    \ar[u]_-{\ft{F}\coliftar{\projar{\Id}{\dX}}{\dist\dX}} 
    \ar[r]_-{\coliftar{\projar{\fb{F}}{\dX}}{\ft{F}\dist\dX}} 
& \dist{\what F \dX} 
}
\end{align*}
\end{minipage}
\end{proof}

This is a quite weak form of functoriality, 
nevertheless, this is enough to get important properties of  the construction of the fibration of differential relations.

\subsection{Lifting monads: the differential extension}
\label{sect:dif-mnd}

An important problem when dealing with effectful languages is the lifting of monads from the category where the semantics of the language is expressed to the category used to reason about programs, e.g., the domain of a fibration or a category of relations. 
The most famous lifting is perhaps the so-called \emph{Barr extension} 
\cite{Barr/LMM/1970} of a \Set-monad to the category of (endo)relations, which is fibred over \Set 
(other notions of lifting include $\top\top$- and codensity lifting \cite{DBLP:journals/lmcs/KatsumataSU18,DBLP:journals/iandc/Katsumata13}).
In the differential setting, the notion of a \emph{differential extension} has been recently proposed 
\cite{DalLagoG22} as a way to lift monads to generalised distance spaces. 
To what extent such a construction 
is canonical and whether it defines an actual monadic lifting have, however, been left as open questions. 
Here, we answer both questions in the affirmative.

\begin{defi}\label{def:gds-dif-ext}
Let $\fun\fib\E\B$ be a fibration where \B has finite products, and 
let $\mnd = \ple{{T},\mu,\eta}$ be a monad on \B. 
A \emph{differential extension} of $\mnd$ along $\fib$ is a lifting 
$\liftmnd\mnd = \ple{\liftmnd T, \liftmnd \mu, \liftmnd \eta}$ of $\mnd$ along the fibration \fun{\GDS\fib}{\GDSC\fib}{\B}. 
\end{defi}
A differential extension of $\mnd$ along the fibration $\fib$ is thus a monad on the category $\GDSC\fib$ of differential relations in $\fib$ 
which is above $\mnd$ with respect to the fibration $\GDS\fib$. 
We now describe a way to build differential extensions of monads starting from a usual monadic extension. 
This is an immediate consequence of \cref{thm:gds-lax-fun}, since 
$\fun\DRF\biFibc\Fib$ is a lax functor and thus it preserves monads. 
Given a monad $\mnd = \ple{T,\mu\eta}$ on a bifibration \fun\fib\E\B in \biFibc, recall that $\fb\mnd = \ple{\fb{T},\fb\mu,\fb\eta}$ is a monad on \B and $\ft\mnd = \ple{\ft{T},\ft\mu,\ft\eta}$ is a lifting of $\fb\mnd$ along $\fib$ where \fun{\ft{T}}{\E}{\E} preserves cocartesian arrows. 
Applying $\DRF$, we get a monad $\DRF(\mnd)$ on the fibration $\GDS\fib$ such that $\fb{\DRF(\mnd)} = \fb\mnd$, hence $\ft{\DRF(\mnd)}$ is a differential extension of $\fb\mnd$ along $\fib$. 
This follows from general properties of lax functors \cite{Benabou67,Street72lax}. 

\begin{cor}\label{cor:gds-mnd}
Let $\mnd = \ple{T, \mu,\eta}$ be a monad on $\fib$ in \biFibc. 
Then, the triple $\DRF(\mnd) = \ple{\DRF(T),\DRF(\mu) \cdot \ple{\id,\omega^{T,T}},\DRF(\eta) \cdot \ple{\id,\xi^\fib}}$
is a monad on $\GDS\fib$ in \Fib. 
\end{cor}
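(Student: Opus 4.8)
The plan is to treat this as an instance of the classical fact that lax functors between $2$-categories carry monads to monads, which is exactly what the sentence preceding the statement invokes via \cite{Benabou67,Street72lax}; the corollary only records the explicit resulting data. The slickest way I would organise the argument is to recall that a monad in a $2$-category $\mathcal{K}$ is the same thing as a lax functor $\mathbf{1} \to \mathcal{K}$ from the terminal $2$-category (one object, its identity $1$-cell, its identity $2$-cell). Indeed, such a lax functor selects an object $A$, a $1$-cell $t$ (the image of the identity, which a lax functor need \emph{not} preserve on the nose) and comparison $2$-cells $t\circ t \Rightarrow t$ and $\id_A \Rightarrow t$, and the lax-functor coherence axioms are \emph{exactly} the associativity and unit laws of a monad. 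Under this identification the monad $\mnd$ on $\fib$ becomes a lax functor $M\colon \mathbf{1} \to \biFibc$; since lax functors compose and $\fun{\DRF}{\biFibc}{\Fib}$ is a lax functor by \cref{thm:gds-lax-fun}, the composite $\DRF \circ M\colon \mathbf{1} \to \Fib$ is again a lax functor, i.e.\ a monad on $\GDS\fib$ in $\Fib$.

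First I would check that the data of this composite lax functor unfolds to precisely the triple in the statement. The $1$-cell is $\DRF(T)$; the multiplication is the image of $\mu$ precomposed with the composition-comparison of $\DRF$ at the pair $(T,T)$, that is $\DRF(\mu) \cdot \ple{\id,\omega^{T,T}}$; and the unit is the image of $\eta$ precomposed with the identity-comparison, that is $\DRF(\eta) \cdot \ple{\id,\xi^\fib}$. Here it is essential that $M$ lands in $\biFibc$, i.e.\ that the lifting $\ft{T}$ preserves cocartesian arrows, since the comparison $\omega^{T,T}$ was only constructed for cocartesian $1$-arrows; this is exactly the hypothesis recorded before the statement.

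Should a self-contained verification be preferred to invoking the meta-theorem, I would instead discharge the three monad axioms directly as pasting equalities of $2$-cells in $\Fib$. Associativity reduces, using functoriality of $\DRF$ on $2$-cells, naturality of $\omega$, and the image under $\DRF$ of the monad law $\mu \cdot \mu T = \mu \cdot T\mu$, to the lax-associativity coherence $\omega^{GF,H} \cdot \what{H}\omega^{F,G} = \omega^{F,HG} \cdot \omega^{G,H}\what{F}$ established in the proof of \cref{thm:gds-lax-fun}, instantiated at $F = G = H = T$. The two unit laws reduce in the same manner to the unit coherences $\omega^{\Id_\fib,T} \cdot \what{T}\xi^\fib = \id$ and $\omega^{T,\Id_\fib} \cdot \xi^\fib\what{T} = \id$ together with the monad unit laws $\mu \cdot \eta T = \id = \mu \cdot T\eta$.

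The main obstacle --- and really the only point requiring care --- is that $\DRF$ is \emph{lax} rather than pseudo: the comparison $2$-cells $\omega$ and $\xi$ are not invertible, so they cannot be slid freely through the diagrams. The verification must therefore apply each coherence axiom in the exact direction in which it holds and keep scrupulous track of horizontal (whiskering) versus vertical composition of $2$-cells, as well as of the fact that a $2$-cell in $\Fib$ is here a pair whose first (base) component is an identity. Once monads are identified with lax functors out of $\mathbf{1}$, however, no genuinely new computation is needed beyond the coherence data already furnished by \cref{thm:gds-lax-fun}.
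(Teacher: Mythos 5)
Your proposal is correct and follows essentially the same route as the paper's own proof: both identify a monad in a $2$-category with a lax functor out of the trivial (terminal) $2$-category and then obtain $\DRF(\mnd)$ by composing with the lax functor $\DRF$ of \cref{thm:gds-lax-fun}, which accounts for the comparison cells $\omega^{T,T}$ and $\xi^\fib$ appearing in the multiplication and unit. Your additional unfolding of the data and the alternative hands-on verification via the lax coherence equations are sound elaborations of what the paper leaves implicit.
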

\begin{proof}
The key observation is that monads in a 2-category \Ct{K} can be seen as lax functors from the trivial 2-category (it has one object, one 1-arrow, and one 2-arrow) into \Ct{K}. 
Then the action of a lax functor \fun{\mathrm{F}}{\Ct{K}}{\Ct{H}} on a monad in \Ct{K} is given by composition of lax functors, which leads to the definition of $\DRF(\mnd)$. 
\end{proof}

Extending also the strength of a strong monad to differential relations is less trivial. 
To this end, we need to further assume that the bifibration $\fib$ has finite products, so that,  from \refItem{prop:gds-lf}{1},  the fibration $\GDS\fib$ has finite products as well, which are necessary to talk about strong monads. 
We also have to assume that the product functor $\dtimes$ on the total category of $\fib$ preserves cocartesian arrows, 
as often happens when dealing with monoidal bifibrations \cite{Schulman08,MelliesZ16}.\footnote{Here the monoidal structure is given just by cartesian product.} 
This is a sensible requirement as the action of $\DRF$ on 1-arrows heavily relies on cocartesian liftings. 
 From a more abstract perspective, this assumption ensures that the bifibrations we are considering are precisely the cartesian objects in $\biFibc$, thus in the following we will refer to them just as \emph{bifibrations with finite products}. 
Then, we can prove the following result. 

\begin{thm}\label{thm:gds-str-mnd}
Let $\fun\fib\E\B$ be a bifibration with finite products 
and $\mnd = \ple{T, \mu,\eta,\str}$ be a strong monad on $\fib$ in \biFibc. 
Then, $\DRF(\mnd)$ is a strong monad on $\GDS\fib$  in \Fib. 
\end{thm}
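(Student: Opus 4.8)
The plan is to equip the monad $\DRF(\mnd)$ from \cref{cor:gds-mnd} with a strength. By \cref{prop:gds-lf} the fibration $\GDS\fib$ has finite products (as $\fib$ does), so a strength for $\DRF(\mnd)$ amounts to a $2$-arrow $\ple{\blank\times\fb{T}\blank,\ \blank\dtimes\what{T}\blank}\Rightarrow\ple{\fb{T}(\blank\times\blank),\ \what{T}(\blank\dtimes\blank)}$ in $\Fib$, where $\dtimes$ now denotes the product of $\GDSC\fib$. I take its base component to be the strength $\fb\str$ of the base monad $\fb\mnd$ — legitimate since $\fb{\DRF(\mnd)} = \fb\mnd$ and the base product is unchanged — and I construct its top component $\what\str$ on $\GDSC\fib$. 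It then remains to verify naturality and the three strong-monad laws (compatibility of the strength with $\eta$, with $\mu$, and the associativity law for $\dtimes$).

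The construction of $\what\str$ proceeds componentwise, exactly in the style of the action of $\what{(\blank)}$ on arrows (\cref{prop:gds-1ar}) and on $2$-cells (\cref{prop:gds-2ar}). For objects $\dX,\dY$ of $\GDSC\fib$, the base and derivative components of $\what\str_{\dX,\dY}$ are read off from $\fb\str$: concretely $\dun{\what\str_{\dX,\dY}} \eqdef \fb\str_{\dun\dX,\dun\dY}$, and $\Der(\what\str_{\dX,\dY})$ is the corresponding instance of $\fb\str$ on the derivative objects $\Der\dX,\Der\dY$. The fibre component is defined as a $\pick{\blank}$ of $\ft\str_{\dist\dX,\dist\dY}$, obtained by the universal property of a suitable cocartesian lifting, via the same splitting-diagram pattern used throughout \cref{sect:dif-rel}. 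The point at which the hypothesis enters is precisely here: the distance object of $\what{T}(\dX\dtimes\dY)$ is a cocartesian lifting of $\ft{T}$ applied to the distance object of the product $\dX\dtimes\dY$, and the latter is itself built from reindexings and fibrewise products in $\E$; to exhibit the cocartesian arrow whose universal property defines $\dpr{\what\str}$, one must commute $\ft{T}$ past the product. Since $\ft{T}$ preserves cocartesian arrows (as $\mnd$ is a monad in $\biFibc$) and, by assumption, $\dtimes$ preserves cocartesian arrows as well, the required composite is cocartesian and $\pick{\blank}$ applies.

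Naturality of $\what\str$ in both arguments then reduces, on the base and derivative components, to naturality of $\fb\str$, and on the fibre component to naturality of $\ft\str$, the latter equality of $\pick{\blank}$-arrows being forced by \cref{prop:pick}. The three coherence laws are verified in the same way: on base and derivative they are instances of the corresponding laws for the strong monad $\ple{\fb\mnd,\fb\str}$ on $\B$, and on the fibre they follow from the laws for $\ple{\ft\mnd,\ft\str}$ on $\E$. The only subtlety is that the monad data of $\DRF(\mnd)$ carry the lax-coherence $2$-cells $\xi^\fib$ and $\omega^{T,T}$ of \cref{thm:gds-lax-fun} (as recorded in \cref{cor:gds-mnd}), so each fibre-level computation is mediated by these cells, and again \cref{prop:pick} reduces the resulting equalities of $\pick{\blank}$-arrows to equalities of the underlying $\ft\str,\ft\mu,\ft\eta$ data, which hold. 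Conceptually this is the expected outcome: under the stated hypotheses the product $\ple{\times,\dtimes}$ is a cocartesian $1$-arrow, so the whole strong-monad structure lives in $\biFibc$ and is transported by the lax functor $\DRF$.

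The main obstacle is the fibre component: establishing well-definedness of $\dpr{\what\str}$ and checking the $\mu$-coherence of the strength on the fibre. Both require commuting the cocartesian-lifting presentation of $\what{T}$ past the product $\dtimes$, which is exactly what the assumption that $\dtimes$ preserves cocartesian arrows guarantees; without it the composite needed to invoke the universal property of $\pick{\blank}$ fails to be cocartesian. The accompanying bookkeeping of the projection-splitting arrows $\projar{\blank}{\blank}$ in the presence of a binary product, and the matching of these with the lax cells $\omega^{T,T}$ in the $\mu$-law, is the most delicate part of the calculation, though it becomes routine once the cocartesian composites have been identified.
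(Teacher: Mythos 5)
Your overall strategy coincides with the paper's: regard the strength as a $2$-cell between $1$-arrows living in \biFibc (which is exactly where the hypothesis that $\dtimes$ preserves cocartesian arrows, together with cocartesianness of $\ft{T}$, is used), transport it through the $\what{(\blank)}$/$\pick{\blank}$ machinery, and reduce naturality and the coherence laws to those of $\ple{\fb{\mnd},\fb{\str}}$ and $\ple{\ft{\mnd},\ft{\str}}$ via \cref{prop:pick}. The paper makes this precise by introducing the $1$-arrows $F,G$ with $\ft{F}(X,Y) = X\dtimes\ft{T}Y$ and $\ft{G}(X,Y)=\ft{T}(X\times Y)$, observing that $\str$ is a $2$-arrow $F\Rightarrow G$ in \biFibc, and hatting it to obtain $\what\str$. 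However, there is a concrete gap in your construction: the componentwise arrow you define is precisely $\what\str_{\dX,\dY}\colon \what F(\dX,\dY)\to\what G(\dX,\dY)$, and this does \emph{not} have the domain and codomain that a strength for $\DRF(\mnd)$ must have, namely $\dX\dtimes\what T\dY\to\what T(\dX\dtimes\dY)$.

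The codomain mismatch is harmless, since $\what G(\dX,\dY)$ is vertically isomorphic to $\what T(\dX\dtimes\dY)$ (the paper's $\zeta$). The domain mismatch is not: $\what F(\dX,\dY)$ has distance-value object $\fb{F}(\Der\dX,\Der\dY)=\Der\dX\times\fb{T}(\Der\dY)$, whereas $\dX\dtimes\what T\dY$ has $\dsp\dX\times\fb{T}(\Der\dY)$; these differ by the extra $\dun\dX$ factor in $\Der\dX$, and the comparison factors as $\dX\dtimes\what T\dY\to\what\Id\dX\dtimes\what T\dY\cong\what F(\dX,\dY)$, where the isomorphism is the paper's $\zeta'$ (this is where $\dtimes$ preserving cocartesian arrows is actually used) and the first map is the lax unitor $\xi^\fib_\dX\dtimes\id_{\what T\dY}$, which is \emph{not} invertible (its derivative component is $\iple{\pi_1,\id_{\Der\dX}}$ and its fibre component is a cocartesian arrow, not an iso). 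Hence the lifted strength must be \emph{defined} as the composite $\zeta_{\dX,\dY}\circ\what\str_{\dX,\dY}\circ\zeta'_{\dX,\dY}\circ(\xi^\fib_\dX\dtimes\id_{\what T\dY})$: the lax coherence cell $\xi^\fib$ enters the very definition of the strength, not merely (as you place it) the verification of its laws. This is also why "the lax functor $\DRF$ transports the whole strong-monad structure" cannot be invoked as a black box: lax functors preserve monads (\cref{cor:gds-mnd}), but transporting a strength requires inserting the unitor and the mediating isomorphisms by hand, which is exactly the content of the paper's proof. Once this correction is in place, the rest of your argument (naturality and the strength axioms from those of $\str$, via \cref{prop:pick}) goes through as in the paper.
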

\begin{proof}
First, we notice that the lax functor $\DRF$ preserves 2-products, that is, 
$\DRF(\fib\otimes\fibq)$ is a 2-product of $\DRF(\fib)$ and $\DRF(\fibq)$, where we use $\otimes$ to denote 2-products to avoid confusion with usual products. 
Recall from \refItem{prop:gds-lf}{1} that $\GDS\fib$ has finite products and that, for objects $\dX,\dY$ in $\GDSC\fib$, we have 
$\dun{\dX\dtimes\dY} = \dun\dX \times \dun\dY$, 
$\dsp{\dX\dtimes\dY} = \dsp\dX \times \dsp\dY$ and 
$\dist{\dX\dtimes\dY} = \lift{\sigma_{\dX,\dY}}{(\dist\dX\dtimes\dist\dY)}$, where 
$\sigma_{\dX,\dY} = \iple{\pi_1,\pi_3,\pi_5,\pi_2,\pi_4,\pi_6}$ is a canonical isomorphism from $\nabla(\dX\dtimes\dY)$ to $\nabla(\dX)\times\nabla(\dY)$, natural in $\dX$ and $\dY$. 
Note also that, $\dist{\dX\dtimes\dY}$ is vertically isomorphic to $\colift{{\sigma'_{\dX,\dY}}}{(\dist\dX\dtimes\dist\dY)}$ where $\sigma'_{\dX,\dY} = \iple{\pi_1,\pi_4,\pi_2,\pi_5,\pi_3,\pi_6}$ is the inverse of $\sigma_{\dX,\dY}$. 

Now, given a 1-arrow \oneAr{T}{\fib}{\fib} we can define 1-arrows
\oneAr{F_T,G_T}{\fib\otimes\fib}{\fib}  given by 
\[
\fb{F_T}(I,J) \eqdef I\times \fb{T} J \quad 
\ft{F_T}(X,Y) \eqdef X\dtimes \ft{T} Y \quad 
\fb{G_T}(I,J) \eqdef \fb{T}(I\times J) \quad 
\ft{G_T}(X,Y) \eqdef \ft{T}(X\times Y)
\]
Then, a strength $\str$ for $T$ in \biFibc is a 2-arrow 
\twoAr{\str}{F_T}{G_T} in \biFibc and so 
\twoAr{\ple{\fb\str,\what\str}}{\ple{\fb{F_T},\what{F_T}}}{\ple{\fb{G_T},\what{G_T}}} is a 2-arrow in \Fib. 
However, this is not a strength for \ple{\fb{T},\what{T}} in \Fib, because $\what\str$ is not a strength for $\what{T}$, as its domain and codomain have not the right form. 
In order to get a proper strength, we need to adjust $\what{\str}$, by composing it with suitable (vertical) natural transformations defined below. 

We can construct a vertical  arrow 
\fun{\theta^T_{\dX,\dY}}{\dX\dtimes\what{T}\dY}{\what{F_T}(\dX,\dY)}
natural in both $\dX$ and $\dY$. 
Indeed, if $\dZ = \what{F_T}(\dX,\dY)$, we have 
\[
\dun\dZ = \dun\dX\times\fb{T}\dun\dY 
\quad 
\dsp\dZ = \dun\dX\times\dsp\dX\times\fb{T}(\dun\dY\times\dsp\dY) 
\quad 
\dist\dZ = \colift{\projar{\fb{F_T}}{\ple{\dX,\dY}}}{(\dist\dX\dtimes\ft{T}\dist\dY)}
\]
Clearly, we have $\dun\dZ = \dun{\dX\dtimes\what{T}\dY}$. 
Then, we have 
$\Der(\theta^T_{\dX,\dY}) = \iple{\pi_1,\pi_2,\pi_1,\pi_3,\pi_4}$ from 
$\dun\dX\times\fb{T}\dun\dY \times\dsp\dX\times \fb{T}(\dun\dY\times\dsp\dY)$ to $\nabla\dZ$ and 
we can construct $\dpr{\theta^T_{\dX,\dY}}$ by the following diagrams and the fact that $\dtimes$ preserves cocartesian arrows in $\fib$: 
\begin{mathpar}
\vcenter{\xymatrix{
  \nabla\dX\times\fb{T}(\nabla\dY) 
    \ar[r]^-{\projar{\fb{F_T}}{\ple{\dX,\dY}}} 
    \ar[d]_-{\id_{\nabla\dX}\times \projar{\fb{T}}{\dY}} 
& \nabla\dZ 
\\ 
  \nabla\dX\times\nabla(\what{T}\dY) 
    \ar[r]^-{\sigma'_{\dX,\what{T}\dY}} 
& \nabla(\dX\dtimes\what{T}\dY) 
    \ar[u]^-{\Der(\theta^T_{\dX,\dY}) \times \id_{\dun\dX\times\fb{T}\dun\dY}}
}} \and 
\vcenter{\xymatrix@C=10ex{
  \dist\dX\dtimes\ft{T}\dist\dY 
    \ar[r]^-{ \coliftar{\projar{\fb{F_T}}{\ple{\dX,\dY}}}{\dist\dX\dtimes\ft{T}\dist\dY} } 
    \ar[d]_-{ \id_{\dist\dX}\dtimes \coliftar{\projar{\fb{T}}{\dY}}{\dist\dY} } 
& \dist\dZ 
\\ 
  \dist\dX\dtimes\dist{\what{T}\dY} 
    \ar[r]^-{ \coliftar{\sigma'_{\dX,\what{T}\dY}}{\dist\dX\dtimes\dist{\what{T}\dY}} } 
& \dist{\dX\dtimes\what{T}\dY} 
    \ar@{..>}[u]_-{ \dpr{\theta^T_{\dX,\dY}} } 
}}
\end{mathpar}

Similarly, there is a vertical arrow 
\fun{\zeta^T_{\dX,\dY}}{\what{G_T}(\dX,\dY)}{\what{T}(\dX\dtimes\dY)}, 
natural in both $\dX$ and $\dY$. 
Indeed, if $\dZ = \what{G_T}(\dX,\dY)$, we have 
\[
\dun\dZ = \fb{T}(\dun\dX\times\dun\dY)  
\quad 
\dsp\dZ = \fb{T}(\dun\dX\times\dsp\dX\times\dun\dY\times\dsp\dY)  
\quad 
\dist\dZ = \colift{\projar{\fb{G_T}}{\ple{\dX,\dY}}}{\ft{T}(\dist\dX\dtimes\dist\dY)} 
\]
Clearly, we have 
$\dun\dZ = \dun{\what{T}(\dX\dtimes\dY)}$. 
Then, we have 
$\Der(\zeta^T_{\dX,\dY}) = \id_{\dun\dZ} \times \fb{T}\iple{\pi_1,\pi_3,\pi_2,\pi_4}$ from $\nabla\dZ$ to $\fb{T}(\dun\dX\times\dun\dY) \times \fb{T}(\dun\dX\times\dun\dY\times\dsp\dX\times\dsp\dY)$, and 
we construct $\dpr{\zeta_{\dX,\dY}}$ by the following diagrams: 
\begin{mathpar}
\vcenter{\xymatrix{
  \fb{T}(\nabla\dX\times\nabla\dY) 
    \ar[r]^-{\projar{\fb{G_T}}{\ple{\dX,\dY}}} 
    \ar[d]_-{\fb{T}\sigma'_{\dX,\dY}} 
& \nabla\dZ 
    \ar[d]_-{\Der(\zeta^T_{\dX,\dY})\times\id_{\dun\dZ}} 
\\
  \fb{T}(\nabla(\dX\dtimes\dY))
    \ar[r]^-{\projar{\fb{T}}{\dX\dtimes\dY}} 
& \nabla(\what{T}(\dX\dtimes\dY)) 
}}  \and 
\vcenter{\xymatrix@C=10ex{
  \ft{T}(\dist\dX\dtimes\dist\dY) 
    \ar[r]^-{ \coliftar{\projar{\fb{G_T}}{\ple{\dX,\dY}}}{\ft{T}(\dist\dX\dtimes\dist\dY)} }
    \ar[d]_-{ \ft{T}\coliftar{\sigma'_{\dX,\dY}}{\ft{T}(\dist\dX\dtimes\dist\dY)} }
& \dist\dZ 
    \ar@{..>}[d]^-{\dpr{\zeta_{\dX,\dY}}}
\\
  \ft{T}(\dist{\dX\dtimes\dY}) 
    \ar[r]^-{ \coliftar{\projar{\fb{T}}{\dX\dtimes\dY}}{\ft{T}(\dist{\dX\dtimes\dY})} }
& \dist{\what{T}(\dX\dtimes\dY)} 
}}
\end{mathpar}

Then, one can easily check that $\theta^T$ and $\zeta^T$ are natural in $T$, that is, 
given 1-arrows \oneAr{S,T}{\fib}{\fib} and a 2-arrow \twoAr{\phi}{S}{T} in \biFibc, we have: 
\[
\theta^T \cdot (\id\dtimes \what{\phi}) = \what{\id\times\phi} \cdot \theta^S 
\quad 
\zeta^T \cdot \what{\phi_{\blank\times\blank}} = \what{\phi}_{\blank\dtimes\blank} \cdot \zeta^S 
\]

Let $\mnd = \ple{T,\mu,\eta,\str}$ be a strong monad on $\fib$ in \biFibc. 
The lifted strength is then defined as follows 
\[
\str'_{\dX,\dY} = \xymatrix@C=6ex{
  \dX\dtimes \what T\dY \ar[r]^-{ \theta^T_{\dX,\dY} }
& \what{F_T}(\dX,\dY) \ar[r]^-{\what\str_{\dX,\dY}}
& \what{G_T}(\dX,\dY) \ar[r]^-{\zeta^T_{\dX,\dY}}
& \what T(\dX\dtimes\dY) 
}
\]
Notice that $\dun{\str'} = \fb{\str}$, because both $\theta^T$ and $\zeta^T$ are vertical, hence 
$\ple{\fb\str,\str'}$ is a well-defined 2-arrow in \Fib. 
The fact this is indeed a strength follows from the strength laws for $\str$ using the following equalities, which can be proved by routine calculations with projections: 
\begin{mathpar} 
\zeta^{\Id} = \id
\and 
\theta^{\Id} \cdot (\id\dtimes \xi^\fib) = \xi^\fib_{\blank\dtimes\blank}
\and 
\omega^{T,T}_{\blank\dtimes\blank} \cdot \what{T}\zeta^T = \zeta^{TT} \circ \omega^{G_T,T} 
\and 
\omega^{F_T,T} \cdot \what{T}\theta^T \cdot \zeta^T_{\blank,\what{T}\blank}
  = \omega^{\Id\otimes T,G_T} \cdot \what{G_T}(\xi^\fib\otimes\id) 
\and 
\omega^{\Id\otimes T,F_T} \cdot \theta^T_{\what\Id\blank,\what{T}\blank} \cdot (\xi^\fib\otimes\id) 
  = \theta^{TT} \cdot (\id\dtimes\omega^{T,T}) 
\end{mathpar} 
We report the case of the monad multiplication. 
\begin{align*}
\what\mu_{\dX\dtimes\dY} &\circ \omega^{T,T}_{\dX\dtimes\dY} \circ \what{T}\str'_{\dX,\dY} \circ \str'_{\dX,\what{T}\dY} 
  = \what\mu_{\dX\dtimes\dY} \circ \omega^{T,T}_{\dX\dtimes\dY} \circ \what{T}\zeta^T_{\dX,\dY} \circ \what{T} \what\str_{\dX,\dY} \what{T}\theta^T_{\dX,\dY} \circ \str'_{\dX,\what{T}\dY} 
\\
  &= \what\mu_{\dX\dtimes\dY} \circ \zeta^{TT}_{\dX,\dY} \circ \omega^{G_T,T}_{\dX,\dY} \circ \what{T} \what\str_{\dX,\dY} \circ \what{T}\theta^T_{\dX,\dY} \circ \str'_{\dX,\what{T}\dY} 
\\
  &= \zeta^T_{\dX,\dY} \circ \what{\mu_{\blank\times\blank}}_{\dX,\dY} \circ \what{T\str}_{\dX,\dY} \circ \omega^{F_T,T}_{\dX,\dY} \circ \what{T}\theta^T_{\dX,\dY} \circ \str'_{\dX,\what{T}\dY} 
\\
  &= \zeta^T_{\dX,\dY} \circ \what{\mu_{\blank\times\blank}}_{\dX,\dY} \circ \what{T\str}_{\dX,\dY} \circ \omega^{F_T,T}_{\dX,\dY} 
      \circ \what{T}\theta^T_{\dX,\dY} \circ \zeta^T_{\dX,\what{T}\dY} \circ \what{\str}_{\dX,\dY} \circ \theta^T_{\dX,\what{T}\dY}
\\
  &= \zeta^T_{\dX,\dY} \circ \what{\mu_{\blank\times\blank}}_{\dX,\dY} \circ \what{T\str}_{\dX,\dY} \circ \omega^{\Id\otimes T, G_T}_{\dX,\dY} 
      \circ \what{G_T}(\xi^\fib_{\dX}, \id_{\what{T}\dY}) \circ \what{\str}_{\dX,\what{T}\dY} \circ \theta^T_{\dX,\what{T}\dY} 
\\
  &= \zeta^T_{\dX,\dY} \circ \what{\mu_{\blank\times\blank}}_{\dX,\dY} \circ \what{T\str}_{\dX,\dY} \circ \omega^{\Id\otimes T, G_T}_{\dX,\dY} 
     \circ \what{\str}_{\dX,\what{T}\dY} \circ \theta^T_{\what{\Id}\dX,\what{T}\dY} \circ (\xi^\fib_{\dX} \dtimes \id_{\what{T}\what{T}\dY}) 
\\
  &= \zeta^T_{\dX,\dY} \circ \what{\mu_{\blank\times\blank}}_{\dX,\dY} \circ \what{T\str}_{\dX,\dY} \circ \what{\str_{\blank,T\blank}}_{\dX,\dY} 
     \circ \omega^{\Id\otimes T,F_T}  \circ \theta^T_{\what{\Id}\dX,\what{T}\dY} \circ (\xi^\fib_{\dX} \dtimes \id_{\what{T}\what{T}\dY}) 
\\
  &= \zeta^T_{\dX,\dY} \circ \what{\str}_{\dX,\dY} \circ \what{\id\times\mu}_{\dX,\dY} 
     \circ \omega^{\Id\otimes T,F_T}  \circ \theta^T_{\what{\Id}\dX,\what{T}\dY} \circ (\xi^\fib_{\dX} \dtimes \id_{\what{T}\what{T}\dY}) 
\\
  &= \zeta^T_{\dX,\dY} \circ \what{\str}_{\dX,\dY} \circ \what{\id\times\mu}_{\dX,\dY} 
     \circ \theta^{TT}_{\dX,\dY} \circ (\id_\dX \dtimes \omega^{T,T}_{\dY}) 
\\
  &= \zeta^T_{\dX,\dY} \circ \what{\str}_{\dX,\dY} \circ \theta^T_{\dX,\dY} 
     \circ (\id_\dX \dtimes \what{\mu}_\dY)  \circ (\id_\dX \dtimes \omega^{T,T}_{\dY}) 
\\
  &= \str'_{\dX,\dY} \circ (\id_\dX \dtimes (\what{\mu}_\dX \circ \omega^{T,T}_\dX))
     \circ (\id_\dX \dtimes \what{\mu}_\dY)  \circ (\id_\dX \dtimes \omega^{T,T}_{\dY}) \tag*{\qedhere}
\end{align*}

\end{proof}

In other words, these results provide us with a tool for building a differential extension of a (strong) monad $\fb\mnd$ on the base category of a bifibration $\fib$ 
starting from a usual extension $\ft\mnd$ along $\fib$ which preserves cocartesian arrows. 
Therefore, to build a differential extension of $\fb\mnd$, we can use existing techniques to lift it, obtaining a monad on the bifibration $\fib$, and then apply our construction. 

We conclude this section by instantiating this technique to a special class of bifibrations, namely, bifibrations of weak subobjects 
as defined in \cref{ex:wsub}. 
The resulting construction applies to (strong) monads on categories with finite products and weak pullbacks, and provides a differential version of the Barr extension, that we dub a \emph{differential Barr extension}.

First, we show that the construction of the bifibration of weak subobjects extends to a (strict) 2-functor.\footnote{That is, a 2-functor strictly preserving horizontal composition. }
Let us denote by \Catcwp the 2-category of categories with finite products and weak pullbacks, functors, and natural transformations.\footnote{Note that we do not require functors to preserve either weak pullbacks or finite products. } 
Given a functor \fun{F}{\CC}{\D} in \Catcwp, define 
\fun{\ohat F}{\WS\CC}{\WS\D} as 
$\ohat F \ple{X,[\alpha]} \eqdef \ple{FX,[F\alpha]}$ and 
$\ohat F f \eqdef F f$. 
It is easy to check that this is indeed a functor. 
Moreover, note that if \fun{f}{\ple{X,[\alpha]}}{\ple{Y,[\beta]}} is cocartesian in $\wSub[\CC]$, that is, $[\beta] = [f\alpha]$, 
then \fun{\ohat F f}{\ple{FX,[F\alpha]}}{\ple{FY,[F\beta]}} is cocartesian in $\wSub[\D]$, because $[F\beta] = [Ff\circ F\alpha]$. 
Consider now a natural transformation \nt{\phi}{F}{G} in \Catcwp and define 
\nt{\ohat \phi_{\ple{X,[\alpha]}}}{\ohat F}{\ohat G} as 
$\ohat \phi_{\ple{X,[\alpha]}} \eqdef \phi_X$. 
This is well-defined because, if \fun{\alpha}{A}{X}, then we have $\phi_X\circ F\alpha = G\alpha \circ \phi_A$, by naturality of $\phi$. 
Thus, we define 
\[
\wSub[F] \eqdef \ple{F,\ohat F} 
\qquad 
\wSub[\phi] \eqdef \ple{\phi,\ohat \phi}
\] 
and we get the following result

\begin{prop}\label{prop:wsub-fun}
\fun{\Psi}{\Catcwp}{\biFibc} is a strict 2-functor. 
\end{prop}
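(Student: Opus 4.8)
The plan is to check in turn that $\Psi$ is well defined on objects, $1$-cells and $2$-cells, landing in $\biFibc$, and then that it strictly preserves all compositional structure. Recall that $\Psi$ acts by $\Psi(\CC) \eqdef \wSub[\CC]$, $\Psi(F) \eqdef \wSub[F] = \ple{F,\ohat F}$ and $\Psi(\phi) \eqdef \wSub[\phi] = \ple{\phi,\ohat\phi}$. Much of the genuine content has already been secured above: for $\CC$ in $\Catcwp$, \cref{ex:wsub} shows that $\wSub[\CC]$ is a bifibration, and since $\CC$ has (strong) finite products its base has finite products, so $\wSub[\CC]$ is an object of $\biFibc$; moreover $\ohat F$ is a functor, it preserves cocartesian arrows, and $\ohat\phi$ is a natural transformation. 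It therefore remains to verify the structural commutation equations with the fibrations and then strict functoriality.

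First I would check that $\wSub[F]$ and $\wSub[\phi]$ are a $1$-arrow and a $2$-arrow in $\Fib$, i.e. that $F \circ \wSub[\CC] = \wSub[\D] \circ \ohat F$ and $\phi\,\wSub[\CC] = \wSub[\D]\,\ohat\phi$. Both are immediate, because $\wSub[\CC]$ and $\wSub[\D]$ send an object to its first component and act as the identity on underlying arrows: on $\ple{X,[\alpha]}$ the first equation reads $F X = \wSub[\D]\ple{FX,[F\alpha]} = F X$ and on arrows both sides return $F f$, while $\wSub[\D](\ohat\phi_{\ple{X,[\alpha]}}) = \wSub[\D](\phi_X) = \phi_X = \phi_{\wSub[\CC]\ple{X,[\alpha]}}$. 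Combined with the facts recalled above, this shows that $\wSub[F]$ is a cocartesian $1$-arrow and $\wSub[\phi]$ a $2$-arrow, so $\Psi$ indeed lands in $\biFibc$.

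Next I would establish strictness. For $1$-cells, directly from $\ohat F\ple{X,[\alpha]} = \ple{FX,[F\alpha]}$ and $\ohat F f = F f$ one computes $\ohat{\Id_\CC} = \Id_{\WS\CC}$ and $\ohat{G\circ F} = \ohat G\circ \ohat F$, using only functoriality of $F$ and $G$; hence $\Psi$ preserves identities and composition of $1$-arrows on the nose. For $2$-cells, the component of $\ohat\phi$ at $\ple{X,[\alpha]}$ is simply $\phi_X$, and both vertical and horizontal composition of $2$-arrows in $\Fib$ are defined componentwise; consequently preservation of identity $2$-cells, of vertical composition $\ohat{\psi\cdot\phi} = \ohat\psi\cdot\ohat\phi$, and of horizontal composition all reduce to the corresponding componentwise equalities of natural transformations, which hold in $\Catcwp$ since its $2$-cells are ordinary natural transformations with the usual compositions.

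There is no deep obstacle here; the only point requiring care is well-definedness of $\ohat F$ on arrows, which is precisely where the hypotheses of $\Catcwp$ are used sparingly. Since $F$ need \emph{not} preserve weak pullbacks, one relies on the fact that the weak-subobject preorder $\leq$ is factorisation-based: $f\circ\alpha \leq \beta$, i.e. $f\circ\alpha = \beta\circ g$ for some $g$, is preserved by every functor, as applying $F$ yields $F f\circ F\alpha = F\beta\circ F g$. The conceptually notable outcome, by contrast with the lax functor $\DRF$ of \cref{thm:gds-lax-fun}, is that $\Psi$ is \emph{strict}: because $\wSub[\CC]$ is the identity on underlying arrows and $\ohat{(\blank)}$ is defined by applying $F$ directly to representatives, no mediating coherence $2$-cells are needed and identities and composites are preserved exactly rather than up to isomorphism.
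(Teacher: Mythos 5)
Your proof is correct and takes essentially the same approach as the paper, which states this proposition without a separate proof, relying on the preceding construction of $\ohat F$ and $\ohat\phi$ (functoriality, preservation of cocartesian arrows, naturality) and treating the remaining checks as routine. Your verification of the commutation equations with the fibrations and of strict preservation of identities, $1$-cell composition, and $2$-cell compositions simply spells out those routine details, in line with the paper's intent.
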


 This fact immediately implies that $\Psi$ maps monads in \Catcwp to monads in \biFibc. 
Furthermore, $\Psi$ trivially preserves products, thus it maps cartesian objects in \Catcwp to cartesian objects in \biFibc, that is, bifibrations with finite products. 
Therefore, we get that $\Psi$ preserves strengths as well. 
Then, using \cref{thm:gds-str-mnd} we get the following result.  

\begin{cor}\label{thm:gds-mnd2}
Let $\mnd$ be a (strong) monad on a category \CC with finite products and weak pullbacks. 
Then, $\ft{\DRF(\wSub[\mnd])}$ is a differential extension of $\mnd$ along $\wSub[\CC]$. 
\end{cor}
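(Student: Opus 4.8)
The plan is to derive the statement as a direct instance of the principle that lax functors preserve monads, applied to the composite of the two functors just constructed. By \cref{prop:wsub-fun} the assignment $\CC \mapsto \wSub[\CC]$ is the strict $2$-functor $\fun{\Psi}{\Catcwp}{\biFibc}$, and by \cref{thm:gds-lax-fun} the assignment $\fib \mapsto \GDS\fib$ is the lax functor $\fun{\DRF}{\biFibc}{\Fib}$; since the composite of a strict $2$-functor and a lax functor is again lax, $\fun{\DRF\circ\Psi}{\Catcwp}{\Fib}$ is a lax functor. As a monad in a $2$-category is the same as a lax functor out of the trivial $2$-category, lax functors send monads to monads (this is exactly the observation underlying \cref{cor:gds-mnd}). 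Hence, regarding $\mnd$ as a monad on the object $\CC$ of $\Catcwp$, the first step is to note that $\Psi$ produces a monad $\wSub[\mnd]$ on the bifibration $\wSub[\CC]$ in $\biFibc$ --- its underlying $1$-arrows are cocartesian since $\ohat T$ preserves cocartesian arrows, as observed before \cref{prop:wsub-fun} --- and then \cref{cor:gds-mnd}, applied to $\wSub[\mnd]$, yields a monad $\DRF(\wSub[\mnd])$ on $\GDS{\wSub[\CC]}$.

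It then remains to identify the total component of this monad as a differential extension in the sense of \cref{def:gds-dif-ext}. The key bookkeeping is that both $\Psi$ and $\DRF$ are the identity on base components: $\fb{\wSub[\mnd]} = \mnd$ because $\Psi(F) = \ple{F,\ohat F}$, and $\fb{\DRF(F)} = \fb{F}$ by construction of $\DRF$, so $\fb{\DRF(\wSub[\mnd])} = \mnd$. Since the base of $\GDS{\wSub[\CC]}$ is $\CC$, the monad $\DRF(\wSub[\mnd])$ lies above $\mnd$ along $\GDS{\wSub[\CC]}$, whence its total part $\ft{\DRF(\wSub[\mnd])}$ is a lifting of $\mnd$ along $\GDS{\wSub[\CC]}$, that is, a differential extension of $\mnd$ along $\wSub[\CC]$.

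For the strong case the strategy is instead to verify the hypotheses of \cref{thm:gds-str-mnd} for $\wSub[\CC]$ and $\wSub[\mnd]$. First, since $\CC$ has strong finite products, the fibration $\wSub[\CC]$ has finite products, with $\ple{I,[\alpha]}\dtimes\ple{J,[\beta]} = \ple{I\times J,[\alpha\times\beta]}$. Second, $\dtimes$ preserves cocartesian arrows: the image of $\ple{I,[\alpha]}$ along $f$ is $\ple{J,[f\circ\alpha]}$ and the identity $(f\times g)\circ(\alpha\times\beta) = (f\circ\alpha)\times(g\circ\beta)$ shows that a product of cocartesian arrows is again cocartesian. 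Finally, $\wSub[\mnd]$ is a strong monad on $\wSub[\CC]$ in $\biFibc$: the lifted strength $\ohat\str$ is a well-defined, natural morphism of weak subobjects because naturality of $\str$ gives $\str_{X,Y}\circ(\alpha\times T\beta) = T(\alpha\times\beta)\circ\str_{A,B}$ for $\fun{\alpha}{A}{X}$ and $\fun{\beta}{B}{Y}$, which is precisely the witness that $\ohat\str$ lies in $\WS\CC$. With these three facts, \cref{thm:gds-str-mnd} applies and upgrades the lifting of the previous paragraph to a strong one.

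I expect the main obstacle to be precisely the strong case, and within it the two product-sensitive checks: that $\dtimes$ on $\WS\CC$ preserves cocartesian arrows and that $\Psi$ transports the strength of $\mnd$ to an honest strength on $\WS\CC$. Both hinge on the products being \emph{strong} rather than merely weak, so that products commute with composition on the nose and the naturality square for $\str$ descends to weak subobjects; this is exactly why the hypothesis of strong finite products, and not just weak pullbacks, is needed. Everything else is delegated to the general machinery of \cref{cor:gds-mnd} and \cref{thm:gds-str-mnd} together with the $2$-categorical fact that lax functors preserve monads, so once these strength checks are settled the argument is complete.
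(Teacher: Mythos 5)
Your proof is correct and follows essentially the same route as the paper: the corollary is obtained by composing the strict 2-functor $\Psi$ with the lax functor $\DRF$ and invoking preservation of monads by lax functors (via \cref{cor:gds-mnd} and \cref{thm:gds-str-mnd}), exactly as the paper intends. Your explicit verifications --- that $\wSub[\mnd]$ lands in \biFibc, that $\dtimes$ on $\WS\CC$ preserves cocartesian arrows, and that the strength descends to weak subobjects by naturality --- are precisely the details the paper leaves implicit, and they are carried out correctly.
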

\begin{proof}
If $\mnd$ is a (strong) monad on \CC, since $\Psi$ is a strict product preserving 2-functor (\cref{prop:wsub-fun}), 
then $\wSub[\mnd]$ is a (strong) monad on $\wSub[\CC]$ and so, by \cref{thm:gds-str-mnd}, $\DRF(\wSub[\mnd])$ is a (strong) monad on $\GDS{\wSub[\CC]}$, as needed. 
\end{proof}

\begin{exa}\label{ex:coupling}
Let \ple{T,\mu,\eta} be a monad on \Set,  which is necessarily strong. 
The \emph{coupling-based lifting} of \ple{T,\mu,\eta} to $\GDSC{\Sub[\Set]}$ \cite{DalLagoG22} maps an object \ple{X,V,R} to 
\ple{TX,T(X\times V),\tau(R)} where 
$\ple{x,v,y}\in\tau(R)$ iff there exists $\varphi \in TR$ such that $T\pi_1(T\iota_R(\varphi)) = x$, $T\iple{\pi_1,\pi_2}(T\iota_R(\varphi)) = v$, and $T\pi_3(T\iota_R(\varphi)) = y$. 
Here, \fun{\iota_R}{R}{X\times V \times X} denotes the inclusion function, and the element $\varphi$ is called a (ternary) coupling, borrowing the terminology from optimal transport \cite{Villani/optimal-transport/2008}. 
Basically, this lifting states that monadic elements $x$ and $y$ are related with monadic distance $v$ if and only if we can find a coupling $\varphi$ that projected along the first and third component gives $x$ and $y$, respectively, and projected along the first two components gives the monadic distance $v$. 
Relying on the equivalence between $\Sub[\Set]$ and $\wSub[\Set]$, 
we get that the above lifting of \ple{T,\mu,\eta} is an instance of the differential Barr extension
\fun{\what{\ohat T}}{\GDSC{\WS\Set}}{\GDSC{\WS\Set}}. 
Indeed, for an object $\ple{X,V,[\alpha]}$ in $\GDSC{\wSub[\Set]}$  we can choose a canonical representative \fun{\iota_R}{R}{X\times V \times X} such that $[\alpha] = [\iota_R]$, where $R$ is the image of $\alpha$ and $\iota_R$ is the inclusion function. 
\end{exa}

\section{Conclusion and Future Work}


We have shown how fibrations can be used to give a uniform account of operational logical relations 
for higher-order languages with generic effects and a rather liberal operational semantics. 
Our framework encompasses both traditional, set-based logical relations and logical relations on 
non-cartesian-closed categories --- such as one of measurable spaces --- as well as 
the recently introduced differential logical relations. In particular, 
our analysis sheds a new light on the mathematical foundation of both pure and effectful 
differential logical relations.  
Further examples of logical relations that can be described in our framework 
include classic Kripke logical relations \cite{mitchell} (take fibrations of 
poset-indexed relations).
Additionally, since differential logical relations can be used to reason about 
nontrivial notions such as program sensitivity and cost analysis \cite{DalLagoG22}, our 
framework can be used for reasoning about the same notions too.  

Even if general, the vehicle calculus of this work lacks some important programming language features, 
such as recursive types and polymorphism. Our framework being operational, the authors 
suspect that the addition of polymorphism should not be problematic, whereas 
the addition of full recursion may require to come up with abstract notions of step-indexed 
logical relations \cite{AppelMcAllester/TOPLAS/2001,DBLP:journals/corr/abs-1103-0510}. 
In general, the proposed framework looks easily extensible and adaptable. 
To reason about a given calculus, one should pick a fibration
with a logical structure supporting the kind of analysis one is interested in and 
whose base category has enough structure to model its interactive behaviour.  
For instance, 
in this work we just need products and a monad on the base category but, e.g., to model sum types, we would need also 
coproducts to describe case analysis. 
On the logical side, we have just considered standard intuitionistic connectives but, e.g., 
to support more quantitative analysis, one may need to use fibrations supporting linear connectives and modalities \cite{Schulman08,MelliesZ15,LicataS17,DagninoR21,DagninoP22}. 

Besides the extension of our framework to richer languages and features, 
an interesting direction for future work is to formally relate our results with 
general theories of denotational logical relations. Particularly relevant for that seems 
the work by Katsumata \cite{DBLP:phd/ethos/Katsumata05} who observes that the closed structure 
of base categories is not necessarily essential.
Another interesting direction for future work is the development of fibrational theories of
coinductive reasoning for higher-order languages. Fibrational accounts of coinductive techniques 
have been given in the general setting of coalgebras \cite{DBLP:conf/csl/BonchiPPR14,DBLP:conf/concur/Bonchi0P18}, 
whereas general accounts of coinductive techniques for higher-order languages have been 
obtained in terms of relational reasoning \cite{Gavazzo/ICTCS/2017,DalLagoGavazzoLevy/LICS/2017,Gavazzo/LICS/2018,dal-lago/gavazzo-mfps-2019,DBLP:conf/esop/LagoG19,DBLP:journals/tcs/LagoGT20,DBLP:conf/fscd/LagoG21,DBLP:phd/basesearch/Gavazzo19,DBLP:journals/pacmpl/LagoG22a}.  
It would be interesting to see whether these two lines of research could be joined in our fibrational 
framework. That may also be a promising path to the development of conductive differential reasoning.


%
%

\bibliographystyle{alphaurl}
\bibliography{biblio}


\end{document}